\theoremstyle{plain}
\newtheorem{theorem}{Theorem}[section]
\newtheorem{proposition}[theorem]{Proposition}
\newtheorem{lemma}[theorem]{Lemma}
\newtheorem{corollary}[theorem]{Corollary}
\theoremstyle{definition}
\newtheorem{definition}[theorem]{Definition}
\theoremstyle{remark}
\newcommand{\framework}{EPSRO}
\title{Efficient Policy Space Response Oracles}
\author{
  Ming Zhou\thanks{Corresponding author: \href{mingak@sjtu.edu.cn}{mingak@sjtu.edu.cn}} \textsuperscript{ \rm 1,4} \\
  \And
  Jingxiao Chen\textsuperscript{ \rm 1}\\
  \And
  Ying Wen\textsuperscript{ \rm 1}\\
  \And
  Weinan Zhang\textsuperscript{ \rm 1}\\
  \AND
  Yaodong Yang\textsuperscript{ \rm 2}
  \And
  Yong Yu\textsuperscript{ \rm 1}\\
  \And
  Jun Wang\textsuperscript{ \rm 3,4}\\
  \AND
  \\
  \vspace{-25pt}
  \textsuperscript{\rm 3}University College London,\textsuperscript{\rm 4}Shanghai Digital Brain Laboratory\\
 \textsuperscript{\rm 1}Shanghai Jiao Tong University, \textsuperscript{\rm 2}Institute for Artificial Intelligence, Peking University\\
\setcounter{footnote}{0}
}
\begin{document}

\maketitle

\begin{abstract}
Policy Space Response Oracle methods (PSRO) provide a general solution to learn Nash equilibrium in two-player zero-sum games but suffer from two drawbacks: (1) the \textit{computation inefficiency} due to the need for consistent meta-game evaluation via simulations, and (2) the \textit{exploration inefficiency} due to finding the best response against a fixed meta-strategy at every epoch. In this work, we propose Efficient PSRO (\framework{}) that largely improves the efficiency of the above two steps. Central to our development is the newly-introduced subroutine of \textit{no-regret optimization} on the \textit{unrestricted-restricted (URR)} game. By solving URR at each epoch, one can evaluate the current game and compute the best response in one forward pass without the need for meta-game simulations. Theoretically, we prove that the solution procedures of  \framework{} offer a monotonic improvement on the exploitability, which none of existing PSRO methods possess. Furthermore, we prove that the no-regret optimization has a regret bound of $\mathcal{O}(\sqrt{T\log{[(k^2+k)/2]}})$, where $k$ is the size of restricted policy set. Most importantly, a desirable property of  \framework{} is that it is parallelizable, this allows for highly efficient exploration in the policy space that induces behavioral diversity. We test  \framework{} on three classes of games, and report a 50x speedup in wall-time and 10x data efficiency while maintaining similar exploitability as existing PSRO methods on Kuhn and Leduc Poker games.
\end{abstract}
\section{Introduction}
Policy Space Response Oracles (PSRO)~\cite{lanctot2017unified} is a general multi-agent reinforcement learning algorithm, which has been applied in many non-trivial multi-agent learning tasks~\cite{vinyals2019grandmaster,berner2019dota,liu2021towards}.
In general, PSRO aims to find an approximate Nash equilibrium (NE) by iteratively expanding a restricted game formed by a set of restricted policy sets, which is ideally much smaller than the original game.
At each epoch, PSRO executes sequential learning composed of a meta-game solving and a learning of best responses.
Though PSRO does not need to learn policies in the original game directly, the learning of PSRO is still inefficient in solving meta-game and learning high-quality best responses.

Specifically, PSRO is \textbf{computation inefficient} to solve a geometrically growing meta-game because it relies on numerous simulations across the Cartesian space of growing policy sets~\cite{omidshafiei2019alpha,yang2020alpha}.
Moreover, learning against a fixed opponent meta-strategy to find a best response is \textbf{exploration inefficient}. In such a way, PSRO has no non-degenerate guarantee on the expansion of restricted policy sets, since the fixed meta-strategy is only a best response to the restricted policy set~\cite{wang2021evaluating}.
Despite playing against fixed opponent meta-strategies can theoretically expand the policy sets~\cite{mcaleer2020pipeline}, it has no guarantee that the discovered best responses can still hold the strength when opponents deviate their strategies.
Therefore, PSRO needs to add all possible policies from the original game and generates large restricted policy sets in the worst case, making it slow to converge~\cite{mcaleer2020pipeline}.
A straightforward idea to improve the learning efficiency is to utilize parallelism for the learning of best responses~\cite{lanctot2017unified,balduzzi2019open,mcaleer2020pipeline,liu2021neupl}.
However, all of these existing methods still require simulations to solve meta-game and learn best responses by playing against fixed meta-strategies.

\begin{figure*}[ht!]
    \begin{center}
	\centerline{\includegraphics[width=1.0\columnwidth]{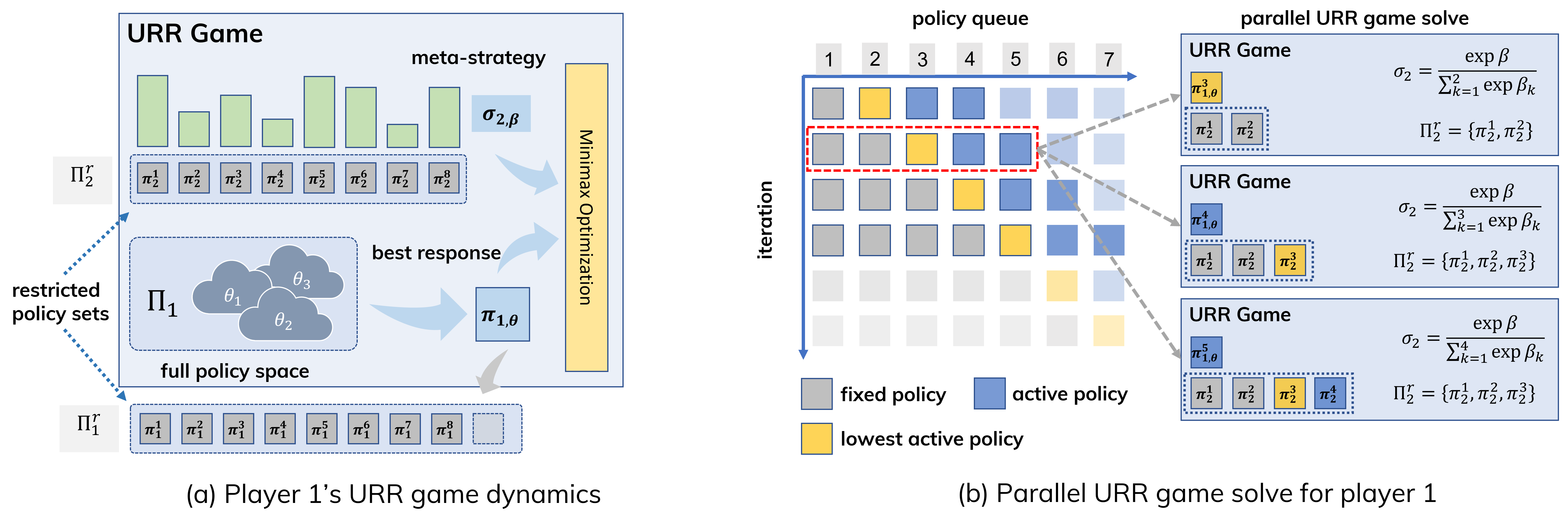}}
    \caption{An overview of \framework{}. (a) \framework{} runs in a loop that learns best responses by solving URR games for each player, then expands the restricted policy sets with these $\pi_{i,\theta}$ at each epoch. (b) At each epoch, \framework{} runs multiple URR game solves for each active best response $\pi^j_{i,\theta}$ in parallel (Algorithm~\ref{alg:pipeline}), where $j$ is the level. In each URR game, $\pi^j_{i,\theta}$ plays against $\Pi^{r,k}_{-i}$ where $k=1,\dots,j-1$.}
    \label{fig:pepsro}
    \end{center}
\end{figure*}

It is desirable for an efficient method to all of these problems.
Our key insight is that the computation of meta-strategies can be free from simulations, and the learning of best responses should be toward monotonic expansion on restricted policy sets. As for the learning of best responses, there have several proposals for building an \textit{unrestricted-restricted game} (URR game, Section~\ref{sec:URR}) to learn a generalized best response as a Nash equilibrium to the opponent's restricted policy 
set~\cite{zinkevich2007new,hansen2008range}. However, these existing works focus on tabular cases in which the policy space is limited.

In this paper, we introduce Efficient PSRO (\framework{}) based on the URR to save the simulation cost. 
\framework{} does not require simulations to compute meta-strategies beforehand. Instead, at each epoch, \framework{} solves URR games to learn a best response and an opponent meta-strategy, which is an approximate NE. 
In addition, we prove that the learned best responses are guaranteed to expand the restricted policy sets in a non-degenerate manner, improving the exploration efficiency.
As for the URR solve, it is built on top of no-regret optimization~\cite{daskalakis2011near}. Moreover, we propose an efficient warm-start technique for the regret optimization to save the re-training cost caused by the changed length of meta-strategies. We analyze the algorithm performance and give a regret bound of $\mathcal{O}(\sqrt{T\log{[(k^2+k)/2]}})$, where $k$ is the size of the final restricted policy set. Most importantly, we introduce a pipeline URR solver to make the best response learning parallelizable, which further improves the exploration efficiency. 
The demonstration shows that \framework{} substantially improves the training efficiency and achieves better performance than existing PSRO-based methods in high-dimensional matrix games, poker games, and multi-agent gathering tasks.


\section{Preliminaries}\label{sec:preliminaries}
\paragraph{Two-player Normal-form Games.} A two-player normal-form game~\cite{fudenberg1991game} is a tuple $(\Pi, U^{\Pi})$, where $\Pi=(\Pi_1, \Pi_2)$ and $U^{\Pi}=(U^{\Pi_1}, U^{\Pi_2})$ are the tuple of policy sets and the tuple of payoff tables, respectively.
Formally, $\forall i \in \{1, 2\}, U^{\Pi_i}: \Pi \rightarrow \mathbb{R}^{|\Pi_1|\times|\Pi_2|}$, in which each item represents the utility of a joint policy.
Players in the game try to maximize their own expected utility by sampling policy from a mixture (distribution) $\sigma_i$ over their policy sets, where $\forall i \in \{1, 2\}, \sigma_i \in \Delta(\Pi_i)$.
For the sake of convenience, we use $-i$ to denote the other agent except for player $i$ in the following content.
A best response to a mixed-strategy $\sigma_{-i}$ is defined as a strategy that has highest utility.
It can be expressed as $\textbf{BR}(\sigma_{-i}) = \arg\max_{\sigma'_i}u_i(\sigma'_i, \sigma_{-i})$, where $u_i(\cdot,\cdot)$ represents the utility function of player $i$ for a given joint policy.

\paragraph{Policy Space Response Oracles (PSRO)}
Double Oracle (DO) methods~\cite{mcmahan2003planning,le2021online,mcaleer2021xdo} provide an iterative mechanism for finding a Nash equilibrium approximation in normal form games. These algorithms work in expanding a restricted policy set $\Pi^r_i$ for each player iteratively.
At each epoch, a Nash equilibrium $\sigma=(\sigma_i,\sigma_{-i})$ is computed for a restricted game which is formed by a tuple of restricted policy sets $\Pi^r=(\Pi^r_i, \Pi^r_{-i})$.
Then, a best response to this Nash equilibrium for each player $i$ is computed and added to its restricted policy set $\Pi^r_i = \Pi^r_i \cup \{\textbf{BR}(\sigma_{-i})\}$.
PSRO is a generalization of DO where the restricted game's choice is a policy rather than an action.
At each epoch, PSRO learns an approximate best response to an Nash equilibrium via the oracles (e.g., reinforcement learning algorithms).
There are many different solvers for the computation of Nash equilibrium, such as $\alpha$-rank~\cite{omidshafiei2019alpha}, PRD~\cite{lanctot2017unified} or some linear programming methods~\cite{sandholm2005mixed}.
In practice, PSRO seeks an approximation of Nash equilibrium, which is at a level of precision $\epsilon \ge 0$~\cite{aziz2010multiagent}.
To evaluate the equality of approximation, we use $\textsc{NashConv}(\sigma)=\sum_{i}u_i\left(\textbf{BR}_i(\sigma_{-i}), \sigma_{-i}\right) - u_i(\sigma)$ to compute the exploitability of $\sigma$ to an oracle $\{\textbf{BR}(\sigma_{-i})\}$~\cite{johanson2011accelerating}.
$\sigma$ is an exact Nash equilibrium if $\textsc{NashConv}=0$.

We summarize the pseudo code of PSRO in Algorithm~\ref{alg:psro}.
At each epoch, PSRO requires simulations to compute the missing items in $U^{\Pi^r}$ after the learning of best responses, which causes an expensive computing cost.
In general, the amount of simulations grows geometrically as $\mathcal{O}(M\cdot|\Pi^r_i|)$, where $|\Pi^r_i|$ and $M$ denote the size of restricted policy set and the number of simulations for each missing item, respectively. To learn approximate best responses, PSRO usually runs nested reinforcement learning algorithms. However, such a procedure is data-thirsty and has no guarantee to find a high-quality best response to bring higher payoffs for a restricted policy set, especially in the case of complex tasks.

\begin{minipage}[b]{0.60\linewidth}
\begin{algorithm}[H]
 \caption{\textsc{Vanilla PSRO}}\label{alg:psro}
 \KwInput{initial restricted policy sets $\Pi^r=(\Pi^r_{1}, \Pi^r_{2})$}
 \tcc{can be saved via URR games}
 \KwInput{\textcolor{orange}{empty payoff table $U^{\Pi^r}$}}
 \KwInput{meta-strategies $\sigma_i \sim \textsc{Uniform}(\Pi^r_i)$}
 	\While{not terminated}{
 		\For{player $i \in \{1, 2\}$}{
 		    \For{many episodes}{
 		        Train best response $\pi_{i,\theta}$ against $\pi_{-i} \sim \sigma_{-i}$
 		    }
 		    $\Pi^r_i = \Pi^r_i \cup \{\pi_i,\theta\}$
 		}
 		\tcc{can be saved via URR games}
 		\textcolor{orange}{Run simulations to compute missing entries in $U^{\Pi^r}$}
 
 		Compute a meta-strategy $\sigma$ from $U^{\Pi^r}$
 	}
 \KwOutput{current meta-strategy $\sigma_i$ for player $i$}
 \end{algorithm}
\end{minipage}
\begin{minipage}[b]{0.36\linewidth}
\begin{algorithm}[H]
 \caption{\textsc{Simplified PSRO with URR Games}}\label{alg:epsro}
 \KwInput{initial restricted policy sets $\Pi^r=(\Pi^r_{1}, \Pi^r_{2})$}
 	\While{not terminated}{
 		\For{player $i \in \{1, 2\}$}{
 			Random initialize a best response $\pi_{i, \theta}$
 			
 			\textcolor{orange}{($\pi_{i,\theta}$, $\sigma_{-i,\beta})$ = \textsc{SolveURR}($\pi_{i,\theta}$, $\Pi^r_{-i}$)}
 		}
 		$\Pi^r_i=\Pi^r_i \cup \{\pi_{i,\theta}\}$ for $i \in \{1,2\}$
 	}
 \KwOutput{current meta-strategy $\sigma_i$ for player $i$}
 \end{algorithm}
\end{minipage}

\section{EPSRO: Efficient PSRO}\label{sec:methodologies}
For \framework{}, the keys to improve its efficiency include two aspects: (1) eliminating simulations for computing meta-strategies to improve the computing efficiency; (2) finding high-quality best responses to improve the exploration efficiency.
We developed \framework{} on top of URR games (Section~\ref{sec:URR}) for the first aspect.
As for the exploration efficiency, it indicates the efficiency of expanding restricted policy sets.
In general, higher exploration efficiency means the algorithm can express a restricted policy space with a smaller policy set than other methods. Thus, the quality of learned best responses is vital to the exploration.
To tackle this problem, we propose an efficient algorithm to solve URR games in parallelism (Section~\ref{sec:solve_urr} and~\ref{sec:pepsro}).
We summarize the pseudo code of Efficient PSRO (\framework{}) in Algorithm~\ref{alg:pipeline} and give its overview in Figure~\ref{fig:pepsro}.

\subsection{Modeling \framework{} as URR Games}\label{sec:URR}
Saving the computational cost of simulation is crucial to optimizing the efficiency of PSRO methods. In \framework{}, we achieve that by developing a simulation-free mechanism for meta-strategy learning and policy space expansion. Specifically, the meta-strategies are derived from direct interaction with best responses $\textbf{BR}(\sigma_{-i}) \in \Delta(\Pi_i)$ instead of a simulation-based $U^{\Pi^r}$, and restricted policy sets are expanded with these \textbf{BR}s. We further model the interaction as an \textit{unrestricted-restricted game} below, which can be regarded as a parameterized extension of the tabular case in ~\cite{zinkevich2007new}.

\begin{definition}\label{def:urr}
	An \textit{unrestricted-restricted} (URR) game for player $i$ is a tuple of full policy set $\Pi_i$ and restricted policy set $\Pi^r_{-i}$, i.e. $(\Pi_i, \Pi^r_{-i})$. In this game, the player $i$ models its policy as a function parameterized by $\theta$, i.e. $\pi_{i,\theta} \in \Delta(\Pi_i)$. For each interaction, it plays against an opponent's policy $\pi_{-i}$ sampled from $\sigma_{-i,\beta} \in \Delta(\Pi^r_{-i})$, where $\sigma_{-i,\beta}$ is a meta-strategy parameterized by $\beta$. $(\pi^{\star}_{i,\theta}, \sigma^{\star}_{-i,\beta})$ is a Nash equilibrium if
	\begin{equation*}
		\pi^{\star}_{i,\theta} = \textbf{BR}(\sigma^{\star}_{-i,\beta})\text{, and } \sigma^{\star}_{-i,\beta} = \textbf{BR}(\pi^{\star}_{i,\theta}).
	\end{equation*}
\end{definition}

As described in Definition~\ref{def:urr}, the learning of best responses doesn't require fixed meta-strategies from a restricted game. Therefore, URR games save the computational cost of running simulations to construct the $U^{\Pi^r}$. Algorithm~\ref{alg:epsro} lists the pseudo-code of URR-based PSRO for the comparison with the vanilla PSRO (Algorithm ~\ref{alg:psro}).

Though URR can save the simulation costs, we need to ensure its policy set expansion is non-degenerate since the final $\Pi^r$ should be an approximation of complete policy space $\Pi$. To evaluate how closely the restricted policy set is to $\Pi$, Balduzzi et al.~\cite{balduzzi2019open} introduce the \textbf{gamescape}. Though the original cases are self-play, we can naturally bring this concept to URR games.

\begin{definition}[\emph{URR Gamescape, derived from~\cite{balduzzi2019open}}]
	Given an URR game $(\Pi_i,\Pi^r_{-i})$ (row player $i$, column player $-i$) with payoff matrix $\mathbf{U}$, the corresponding empirical gamescape (EGS) is $\mathcal{G} := \{\textit{convex mixture of columns of } \mathbf{U}\}$.
\end{definition}

Conceptually, the non-degenerate or monotonic policy set expansion means that the learned best responses should not be in existing $\mathcal{G}$, so that the algorithm can expand the restricted policy set to approach the complete policy space. We investigate the policy set expansion of URR games in the following theorem.

\begin{theorem}[\emph{Monotonic Policy Space Expanding}]\label{theorem:weak_strength}
	For any given epoch $e$ and $e+1$, let $(\pi^e_i, \sigma^e_{-i})$ and $(\pi^{e+1}_i, \sigma^{e+1}_{-i})$ be Nash equilibrium of $\textbf{URR}^e_i$ and $\textbf{URR}^{e+1}_i$, respectively, where $\pi^e_i,\pi^{e+1}_i \in \Pi_i$, $\sigma^e_{-i} \in \Delta^e_{\Pi^{r}_{-i}}$ and $\sigma^{e+1}_{-i} \in \Delta^{e+1}_{\Pi^{r}_{-i}}$.
	The utilities of $\pi^e_i$ against opponent strategies $\sigma^e_{-i}$ and $\sigma^{e+1}_{-i}$ satisfies
	\begin{equation}
		u_i(\pi^e_i, \mathbf{\sigma^e_{-i}}) - u_i(\pi^e_i, \mathbf{\sigma^{e+1}_{-i}}) \ge 0,
	\end{equation}
	where $\Delta^e_{\Pi^r_{-i}}$ indicates $\Delta(\Pi^{r,e}_{-i})$. Especially, $u_i(\pi^e_i,\sigma^e_{-i}) - u_i(\pi^e_i,\sigma^{e+1}_{-i}) > 0$ indicates there is a strictly policy space expanding at $e+1$, i.e., $\pi^{e+1}_{-i} \in \Pi^{r,e+1}_{-i}\setminus\Pi^{r,e}_{-i}$. (See Appendix~\ref{theorem:weak_strength_b})
\end{theorem}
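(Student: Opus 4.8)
The plan is to compare the two opponent strategies not directly---they are best responses to \emph{different} player-$i$ policies, so no term-by-term comparison is available---but through the scalar equilibrium values of the two URR games. Write $v^e := u_i(\pi^e_i,\sigma^e_{-i})$ and $v^{e+1} := u_i(\pi^{e+1}_i,\sigma^{e+1}_{-i})$. Since the URR payoff is bilinear in the two mixing distributions and the feasible sets $\Delta(\Pi_i)$ and $\Delta(\Pi^{r,e}_{-i})$ are convex and compact, the mutual-best-response (Nash) condition of Definition~\ref{def:urr} makes each equilibrium a saddle point; in particular $\sigma^e_{-i}$ attains $\min_{\sigma\in\Delta(\Pi^{r,e}_{-i})} u_i(\pi^e_i,\sigma)$ and $\pi^e_i$ attains $\max_{\pi} u_i(\pi,\sigma^e_{-i})$, from which $v^e = \max_{\pi}\min_{\sigma\in\Delta(\Pi^{r,e}_{-i})} u_i(\pi,\sigma)$ follows directly (the minimax theorem, applicable here, additionally guarantees existence and interchangeability). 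The analogous identities hold at epoch $e+1$.

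For the weak inequality I would chain three facts. First, because $\pi^{e+1}_i=\textbf{BR}(\sigma^{e+1}_{-i})$ maximizes $u_i(\cdot,\sigma^{e+1}_{-i})$, we get $u_i(\pi^e_i,\sigma^{e+1}_{-i}) \le \max_{\pi} u_i(\pi,\sigma^{e+1}_{-i}) = v^{e+1}$. Second, I establish monotonicity of the value when the minimizer's set grows: since $\Pi^{r,e}_{-i}\subseteq\Pi^{r,e+1}_{-i}$ gives $\Delta(\Pi^{r,e}_{-i})\subseteq\Delta(\Pi^{r,e+1}_{-i})$, the inner minimum over the larger simplex is no larger for every fixed $\pi$, and taking $\max$ over $\pi$ preserves the inequality, yielding $v^{e+1}\le v^e$. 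Third, $v^e = u_i(\pi^e_i,\sigma^e_{-i})$ by definition. Concatenating, $u_i(\pi^e_i,\sigma^{e+1}_{-i}) \le v^{e+1} \le v^e = u_i(\pi^e_i,\sigma^e_{-i})$, which is exactly the claimed inequality.

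For the strict implication I would argue by contraposition. Suppose the expansion were degenerate, i.e. $\sigma^{e+1}_{-i}\in\Delta(\Pi^{r,e}_{-i})$ places no mass outside the old restricted set. Then, because $\sigma^e_{-i}$ is the opponent's best response to $\pi^e_i$ over the smaller simplex, it minimizes $u_i(\pi^e_i,\cdot)$ there, so $u_i(\pi^e_i,\sigma^{e+1}_{-i}) \ge \min_{\sigma\in\Delta(\Pi^{r,e}_{-i})} u_i(\pi^e_i,\sigma) = u_i(\pi^e_i,\sigma^e_{-i})$, forcing the gap $\le 0$ and hence, combined with the weak inequality, exactly $0$. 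Contrapositively, a strictly positive gap $u_i(\pi^e_i,\sigma^e_{-i}) - u_i(\pi^e_i,\sigma^{e+1}_{-i}) > 0$ forces $\sigma^{e+1}_{-i}$ to assign positive probability to some policy in $\Pi^{r,e+1}_{-i}\setminus\Pi^{r,e}_{-i}$, i.e. a genuinely new best response $\pi^{e+1}_{-i}$, which is the asserted strict policy-space expansion.

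The main obstacle, and the step I would handle most carefully, is the middle fact of the weak inequality---the value monotonicity $v^{e+1}\le v^e$---together with the decomposition that routes the whole comparison through the scalar game values. The tempting move is to compare $\sigma^e_{-i}$ and $\sigma^{e+1}_{-i}$ directly, but since they respond to $\pi^e_i$ and $\pi^{e+1}_i$ respectively this cannot work; one must instead use (i) the best-response optimality of $\pi^{e+1}_i$ to pass from $u_i(\pi^e_i,\sigma^{e+1}_{-i})$ up to $v^{e+1}$, and (ii) the saddle-point characterization of the equilibrium value to compare $v^{e+1}$ with $v^e$. A secondary point I would state explicitly is that the minimax framework genuinely applies---bilinear payoff over convex, compact simplices---so that equilibria are saddle points and the values $v^e,v^{e+1}$ are well defined.
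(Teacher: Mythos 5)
Your proof is correct, and for the weak inequality it is essentially the paper's own argument in different packaging: the paper chains $u_i(\pi^e_i,\sigma^{e+1}_{-i}) \le u_i(\pi^{e+1}_i,\sigma^{e+1}_{-i}) \le u_i(\pi^{e+1}_i,\sigma^e_{-i}) \le u_i(\pi^e_i,\sigma^e_{-i})$, using the best-response property of $\pi^{e+1}_i$, the minimizer property of $\sigma^{e+1}_{-i}$ over $\Delta^{e+1}_{\Pi^r_{-i}} \supseteq \Delta^e_{\Pi^r_{-i}}$, and the epoch-$e$ equilibrium. Your three facts (pass from $u_i(\pi^e_i,\sigma^{e+1}_{-i})$ up to $v^{e+1}$, value monotonicity $v^{e+1}\le v^e$ over nested simplices, and $v^e=u_i(\pi^e_i,\sigma^e_{-i})$) unpack to exactly that chain once the max--min operators are evaluated at the equilibrium strategies, so the explicit appeal to the minimax theorem buys nothing beyond the saddle-point property both proofs already use; note also that both arguments silently rely on the zero-sum structure so that the opponent's best response is a minimizer of $u_i$. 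Where you genuinely diverge is the strict-expansion claim: the paper proves it as a separate corollary by supposing $\sigma^{e+1}_{-i}\in\Delta^e_{\Pi^r_{-i}}$ and invoking equality of the two equilibrium values, but the displayed conclusion there only re-derives the weak inequality ``$\ge 0$'', which does not by itself contradict strict positivity. Your contrapositive---if $\sigma^{e+1}_{-i}$ places no mass outside the old restricted set, then the minimizer property of $\sigma^e_{-i}$ over $\Delta^e_{\Pi^r_{-i}}$ forces $u_i(\pi^e_i,\sigma^{e+1}_{-i}) \ge u_i(\pi^e_i,\sigma^e_{-i})$, hence a zero gap---is tighter and closes that step cleanly, so on this point your route is an improvement over the paper's.
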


Theorem~\ref{theorem:weak_strength} shows that \framework{} can achieve a monotonic performance improvement and policy space expansion with URR games. We further investigate that \framework{} has higher exploration efficiency than the naive PSRO (Section~\ref{sec:exp}) and give a Proposition as follows.

\begin{proposition}\label{prop:higher_exploration}
EPSRO has higher exploration efficiency than PSRO. (See Appendix~\ref{prop:higher_exploration_b})
\end{proposition}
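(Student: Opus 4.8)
The plan is to first make the informal notion of ``exploration efficiency'' precise through the gamescape $\mathcal{G}$. I would say that an algorithm explores more efficiently if every best response it appends strictly enlarges $\mathcal{G}$ --- equivalently, it never adds a policy that already lies in the convex mixture of columns spanned by the current restricted set --- so that any fixed target exploitability is reached with a restricted set of no larger size. Under this definition, ``higher exploration efficiency'' reduces to a comparison of \emph{guaranteed non-degenerate expansion}: if one algorithm provably enlarges $\mathcal{G}$ at every non-terminal epoch while the other does not, then for any target the former uses at most as many policies, and strictly fewer whenever the latter would append a redundant one.

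For the \framework{} side the argument is short, since I would invoke Theorem~\ref{theorem:weak_strength} directly. Because each best response $\pi^e_i$ is the \textbf{BR} to the URR equilibrium $\sigma^e_{-i}$ --- an equilibrium taken over the \emph{full} set $\Pi_i$ against the restricted $\Pi^r_{-i}$ --- the monotonic inequality $u_i(\pi^e_i,\sigma^e_{-i})-u_i(\pi^e_i,\sigma^{e+1}_{-i})\ge 0$ holds, and away from the overall Nash equilibrium the strict form holds. By the last clause of Theorem~\ref{theorem:weak_strength}, strictness certifies $\pi^{e+1}_{-i}\in\Pi^{r,e+1}_{-i}\setminus\Pi^{r,e}_{-i}$, i.e. a genuinely new policy that was outside the current gamescape. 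Thus \framework{} meets the non-degeneracy criterion at every non-terminal epoch.

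For the PSRO side I would show the guarantee fails. Here the opponent meta-strategy $\sigma_{-i}$ is the equilibrium of the fully restricted game, not an exploiter computed against the learner's full-set best response, so the intuition flagged in the introduction --- that a best response to a fixed restricted meta-strategy ``has no guarantee [to] hold the strength when opponents deviate'' --- applies. I would argue that the PSRO best response can therefore coincide with, or be weakly dominated by, a convex mixture of columns already present in $\mathbf{U}$, hence already lie in $\mathcal{G}$. The cleanest way to nail this is a small explicit matrix game in which the restricted--restricted equilibrium $\sigma_{-i}$ admits a best response that is payoff-equivalent to an existing column: the expansion step is then degenerate, exhibiting an instance where PSRO's restricted set grows without enlarging $\mathcal{G}$, whereas \framework{} on the same game expands strictly by the previous paragraph.

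The main obstacle is not the \framework{} direction, which follows essentially by quoting Theorem~\ref{theorem:weak_strength}, but pinning down the formal comparison: ``higher'' is a relative claim, so I must both commit to a rigorous efficiency measure (non-degenerate expansion, or equivalently gamescape growth per added policy) and cleanly separate \framework{}'s guaranteed strict expansion from PSRO's lack of one. Establishing the latter rigorously --- rather than merely observing that PSRO has ``no guarantee'' --- is the delicate step, and the counterexample construction is where the real work lies; care is needed to ensure the example is genuinely a failure of expansion for PSRO yet a success for \framework{} under the same game, so that the proposition is not an artifact of a mismatched measure.
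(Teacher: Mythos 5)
Your plan diverges from the paper's proof in both the formalization and the argument, and as it stands it has genuine gaps on both halves. On the PSRO half, your entire case rests on an explicit counterexample game in which PSRO's best response to the restricted--restricted equilibrium is payoff-equivalent to an existing column --- but you never construct it, and you yourself flag it as ``where the real work lies.'' A proposal that defers its load-bearing step is not a proof. On the \framework{} half, you invoke Theorem~\ref{theorem:weak_strength} with the gloss that ``away from the overall Nash equilibrium the strict form holds,'' but the theorem only gives the implication in one direction: \emph{if} $u_i(\pi^e_i,\sigma^e_{-i})-u_i(\pi^e_i,\sigma^{e+1}_{-i})>0$ \emph{then} the expansion is strict. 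Neither the theorem nor its corollary establishes the converse you need, namely that every non-terminal epoch produces a strict utility gap. Finally, even granting both halves, your reduction --- one algorithm has a worst-case guarantee of non-degenerate expansion, the other lacks one --- yields only a worst-case separation on some games, not the per-epoch comparative statement the proposition asserts.

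The paper's route is quite different and avoids all of this: it formalizes exploration efficiency directly as exploitability (\textsc{NashConv}) achieved per epoch, writes $\textsc{NC}^{psro} = u_i(\pi^*_i,\sigma_{-i}) - u_i(\sigma_i,\pi^*_{-i})$ for the restricted-game equilibrium $\sigma$, and then uses the defining property of the two URR equilibria $(\pi^*_i,\sigma^*_{-i})$ and $(\sigma^*_i,\pi^*_{-i})$ --- that $\sigma^*_{-i}$ minimizes $u_i(\pi^*_i,\cdot)$ over $\Delta(\Pi^r_{-i})$ and $\sigma^*_i$ maximizes $u_i(\cdot,\pi^*_{-i})$ over $\Delta(\Pi^r_i)$ --- to sandwich
\begin{equation*}
\textsc{NC}^{epsro} = u_i(\pi^*_i,\sigma^*_{-i}) - u_i(\sigma^*_i,\pi^*_{-i}) \le u_i(\pi^*_i,\sigma_{-i}) - u_i(\sigma_i,\pi^*_{-i}) = \textsc{NC}^{psro},
\end{equation*}
with equality forcing $\Pi^r_i = \Pi_i$, hence strict inequality before convergence. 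This is a two-line chain of inequalities requiring no gamescape argument, no counterexample, and no strictness converse of Theorem~\ref{theorem:weak_strength}. If you want to salvage your approach, you would need to (i) actually exhibit the degenerate-expansion game for PSRO, (ii) prove the missing converse for \framework{}, and (iii) argue why expansion-per-epoch is an acceptable surrogate for the efficiency claim --- at which point the paper's direct exploitability comparison is both shorter and stronger.
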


\subsection{Solving URR Games}\label{sec:solve_urr}
We've built our \framework{} with URR games and gave analysis for its policy set expansion, but how to solve the URR games is still a question. As solving Nash equilibrium in large-scale games is difficult, so we seek a learning-based method. We propose a multi-agent learning algorithm with the corresponding pseudo-code listed in Algorithm~\ref{alg:solve_urr}, which characterizes a procedure to train a best response and opponent meta-strategy for each player. Specifically, we learn the best response with a reinforcement learning algorithm while learning the meta-strategy with an online no-regret method.

 \begin{algorithm}[H]
 \caption{\textsc{SolveURR}}\label{alg:solve_urr}
\KwInput{URR game $(\Pi_i, \Pi^r_{-i})$, BR $\pi_{i,\theta} \sim \Delta(\Pi_i)$}
\KwInput{meta-strategies $\sigma_{-i,\beta} \sim \textsc{Uniform}(\Pi^r_{-i})$}
 	\While{not terminated}{
 		Train best response $\pi_{i,\theta'} \leftarrow \pi_{i,\theta} $ against $\pi_{-i} \sim \sigma_{-i,\beta}$ with reinforcement learning step
 		
 		Update $\sigma_{-i,\beta'} \leftarrow \sigma_{-i,\beta}$ against $\pi_{i,\theta}$ by following Algorithm~\ref{alg:deterministic} in Appendix~\ref{appendix:algorithm}
 		
 		$\pi_{i,\theta}\leftarrow \pi_{i,\theta'}$, $\sigma_{-i,\beta} \leftarrow \sigma_{-i, \beta'}$
 	}
\KwOutput{an approximate Nash $(\pi^{\star}_{i,\theta}, \sigma^{\star}_{-i,\beta})$}
 \end{algorithm}

In Algorithm~\ref{alg:solve_urr}, the opponent meta-strategy $\sigma_{-i}$ is represented as a Boltzman distribution, which is parameterized by a vector $\beta=\left[\beta_1, \beta_2, \dots, \beta_{|\Pi^r_{-i}|}\right]$. Thus, each support of $\sigma_{-i}$ could be expressed as $\sigma_{-i}(j)=\exp{\beta_j}/\sum^n_{i=1}\exp{\beta_i}$. We update $\sigma_{-i}$ by following Algorithm~\ref{alg:deterministic} in Appendix~\ref{appendix:parameters}. As for the best response $\pi_{i}$, it is a neural network trained with off-policy reinforcement learning~\cite{sutton2018reinforcement}.

A feasible method to quantify the learning performance of Algorithm~\ref{alg:solve_urr} is to calculate the regret to the oracle payoff.
In this paper, we use no-regret algorithms~\cite{bowling2004convergence,daskalakis2011near} to analyze \framework{}'s algorithm performance in two-player zero-sum games.
Under this framework, a learning algorithm could approximate the NE asymptotically by playing the same game repeatedly.
A well-known no-regret algorithm among them is Multiplicative Weights Update (MWU)~\cite{freund1999adaptive}, which updates strategy by considering the averaging loss along the learning horizon and then achieves no-regret as the learning horizon towards infinite. Though the update of \framework{}'s meta-strategies does not exploit the loss directly like MWU, it follows MWU by exploiting the computed gradients. We further explain it in Lemma~\ref{lemma:meta_update_b}.

\begin{definition}
	Considering a sequence of mixed strategies for player $i$ as $\pi_1,\pi_2,\dots$, an algorithm of $-i$ that generates a sequence of mixed strategies $\sigma_{1},\sigma_{2},\dots$ is called a no-regret algorithm if we have: $\lim_{T\rightarrow\infty}R_T/T=0,\text{ }R_T=\max_{\sigma\in\Delta_{\Pi_{-i}}}\sum^T_{t=1}(\pi^{\top}_tU\sigma-\pi^T_tU\sigma_t)$.
\end{definition}

\paragraph{Warm-Start Learning.} In \framework{}, a critical problem for the meta-strategy optimization is that the length of meta-strategy changes as the policy set expands. Therefore, we need to learn the meta-strategy with a new $\beta$ at each epoch. However, if we start the learning from scratch, the cost of re-training grows more and more expensive as the policy set expands, resulting in a slow convergence rate. To tackle this issue, we propose a warm-start technique that enables the meta-strategy starts from a non-trivial initialization.
Since the warm-start aims to save time to achieve a minimal regret when $\Pi^r$ changes, the key is to correctly initialize the regrets instead of only starting the strategy. Moreover, a wrong initialization of the regrets will result in huge regrets in subsequent iterations, which is no better than starting from scratch. As pointed in~\cite{brown2016strategy}, a feasible warm-start should be a substitute strategy that does not violate the regret bound and the approximate NE of the last epoch.

\begin{theorem}[\emph{Theorem 1 in~\cite{brown2016strategy}}]\label{theorem:averaging}
In a two-player zero-sum game, if $\frac{R^T_i}{T} \le \epsilon_i$ for both player $i\in\{1,2\}$, then $(\bar{\pi}_i, \bar{\sigma}_{-i})$ is a $(\epsilon_1 + \epsilon_2)$-equilibrium, where $\bar{\sigma}_{-i} = \sum^T_{t=1}\langle\sigma_{-i,t}, l_{-i,t}\rangle / \sum^T_{t=1}Tl_{-i,t}$, $\bar{\pi}_i = \sum^T_{t=1}\langle\pi_{i,t}, l_{i,t}\rangle / \sum^T_{t=1}Tl_{i,t}$, $R^T_i$ the summation of regrets of $T$ iterations, $l_t$ the loss vector.
\end{theorem}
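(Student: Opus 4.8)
The plan is to follow the classical ``no-regret implies equilibrium'' folk-theorem argument, adapted to the loss-weighted averages appearing in the statement. First I would fix the zero-sum convention: treat player $i$ as the maximizer and player $-i$ as the minimizer of the bilinear payoff $\pi_i^\top U \sigma_{-i}$, so that the two cumulative regrets $R^T_1$ and $R^T_2$ measure, respectively, the gap to the best fixed-in-hindsight maximizing and minimizing responses. The hypothesis $R^T_i/T \le \epsilon_i$ then supplies two per-round inequalities that I can add together.

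Concretely, I would write out the maximizer's regret bound, $\max_{\pi}\sum_t \pi^\top U\sigma_{-i,t} \le \sum_t \pi_{i,t}^\top U\sigma_{-i,t} + R^T_1$, and the minimizer's regret bound, $\min_{\sigma}\sum_t \pi_{i,t}^\top U\sigma \ge \sum_t \pi_{i,t}^\top U\sigma_{-i,t} - R^T_2$. The key algebraic step is that the common diagonal term $\sum_t \pi_{i,t}^\top U\sigma_{-i,t}$ appears in both lines and cancels upon subtraction, leaving
\[
\max_\pi \sum_t \pi^\top U\sigma_{-i,t} \;-\; \min_\sigma \sum_t \pi_{i,t}^\top U\sigma \;\le\; R^T_1 + R^T_2.
\]

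Next I would push the max and min through the summation using bilinearity of $U$: since the averaged opponent play is exactly the (loss-weighted) combination $\bar\sigma_{-i}$, the first term is proportional to $\max_\pi \pi^\top U\bar\sigma_{-i}$, and symmetrically the second to $\min_\sigma \bar\pi_i^\top U\sigma$. Dividing by the appropriate normalizer (the total loss weight, which reduces to $T$ in the unweighted case) converts the combined bound into $\max_\pi \pi^\top U\bar\sigma_{-i} - \min_\sigma \bar\pi_i^\top U\sigma \le \epsilon_1 + \epsilon_2$. Recognizing the left-hand side as precisely $\textsc{NashConv}(\bar\pi_i,\bar\sigma_{-i})$ --- the sum of both players' unilateral best-response gains, obtained by inserting and cancelling the term $\bar\pi_i^\top U\bar\sigma_{-i}$ --- finishes the argument and identifies $(\bar\pi_i,\bar\sigma_{-i})$ as an $(\epsilon_1+\epsilon_2)$-equilibrium.

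The main obstacle I anticipate is the loss weighting: the averages in the statement are weighted by the loss vectors $l_t$ rather than uniformly, so I must check that the regret hypothesis is phrased against the matching weighted regret and that the normalizer is consistent with $\bar\sigma_{-i}$ and $\bar\pi_i$ being genuine simplex elements (nonnegative weights, correct normalization). Once I verify that this weighting commutes with the bilinear-form manipulations above --- which it does, since every step uses only linearity in each argument --- the remainder is the standard folk-theorem computation, matching Theorem 1 of~\cite{brown2016strategy}.
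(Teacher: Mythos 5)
Your proposal is correct and takes essentially the same route as the paper's proof: the standard folk-theorem argument in which the two players' regret inequalities are combined so that the realized-payoff term $\sum_t \pi_{i,t}^\top U \sigma_{-i,t}$ cancels (you subtract under a maximizer/minimizer convention; the paper adds the inequalities and invokes $u_1 = -u_2$), bilinearity pulls the time average inside the utility, and the resulting duality-gap bound $\max_{\pi} u_1(\pi,\bar{\sigma}_{-i}) - \min_{\sigma} u_1(\bar{\pi}_i,\sigma) \le \epsilon_1 + \epsilon_2$ yields the $(\epsilon_1+\epsilon_2)$-equilibrium. Your caution about the loss-weighted averages in the statement is well placed, but note that the paper's own proof quietly sidesteps this by working with uniform time averages $\bar{\sigma}^T_{-i} = \frac{1}{T}\sum_{t=1}^T \sigma^t_{-i}$ throughout, exactly as your "unweighted case" reduction does.
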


Theorem~\ref{theorem:averaging} shows that if we use a regret-based average $\bar{\sigma}_{-i}$ (or $\bar{\pi}_i$) as the substitute of a sequence of $\sigma_{-i,t}$ (or $\pi_{i,t}$), the substitute can still hold the equilibrium. In case of \framework{}, as the restricted policy set grows from $\Pi^{r,e}_{-i}$ to $\Pi^{r,e+1}_{-i}$, we need to investigate whether there is a feasible substitute $\bar{\sigma}^{'}_{-i} \in \Delta^{e+1}_{\Pi^r_{-i}}$ to $\bar{\sigma}_{-i} \in \Delta^e_{\Pi^r_{-i}}$ follows the regret guarantee of epoch $e$.

\begin{theorem}[\emph{See Appendix~\ref{theorem:warm_bound_b}}]\label{theorem:warm_bound}
     Suppose a substitute policy of $\bar{\sigma}_{-i} \in \Delta^e_{\Pi^r_{-i}}$ is $\bar{\sigma}^{'}_{-i} \in \Delta^{e+1}_{\Pi^r_{-i}}$, and it satisfies $u^{e+1}_i(\bar{\pi}_i, \bar{\sigma}^{'}_{-i})=u^e_i(\bar{\pi}_i, \bar{\sigma}_{-i})$, we have
     \begin{equation*}
     	\max_{\sigma_{-i}\in\Delta^{e+1}_{\Pi^r_{-i}}}\sum^T_{t=1}\left( u^{e+1}_{-i}(\pi^t_i, \sigma_{-i}) - u^{e+1}_{-i}(\pi^t_i, \bar{\sigma}^{'}_{-i}) \right) \le \epsilon_{-i},
     \end{equation*}
     where $\epsilon_{-i}$ is the regret bound of epoch $e+1$, $\pi^t_i \in \Delta(\Pi_i)$.
\end{theorem}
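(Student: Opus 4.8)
The plan is to treat $\bar\sigma'_{-i}$ as a warm start and show that its cumulative regret in the enlarged epoch-$(e+1)$ game is inherited, essentially unchanged, from the regret already controlled at epoch $e$. Two structural facts drive the argument. First, the game is zero-sum, so $u^{e+1}_{-i}(\pi_i,\cdot)=-u^{e+1}_i(\pi_i,\cdot)$, which lets me rewrite the hypothesis and the target regret interchangeably in either player's utility. Second, the restricted sets are nested, $\Pi^{r,e}_{-i}\subseteq\Pi^{r,e+1}_{-i}$, so the natural embedding $\Delta^e_{\Pi^r_{-i}}\hookrightarrow\Delta^{e+1}_{\Pi^r_{-i}}$ that assigns zero mass to the freshly added policy preserves utilities: $u^{e+1}_{-i}(\pi_i,\sigma_{-i})=u^e_{-i}(\pi_i,\sigma_{-i})$ for every $\sigma_{-i}\in\Delta^e_{\Pi^r_{-i}}$ and every $\pi_i\in\Delta(\Pi_i)$.

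Next I would transfer the hypothesis into a statement about the ``played'' term. Read through zero-sum, the assumption $u^{e+1}_i(\bar\pi_i,\bar\sigma'_{-i})=u^e_i(\bar\pi_i,\bar\sigma_{-i})$ says the value $\bar\sigma'_{-i}$ concedes at epoch $e+1$ equals the value $\bar\sigma_{-i}$ conceded at epoch $e$, so $\sum_t u^{e+1}_{-i}(\pi^t_i,\bar\sigma'_{-i})$ is pinned to $\sum_t u^e_{-i}(\pi^t_i,\bar\sigma_{-i})$, provided the loss-weighting of Theorem~\ref{theorem:averaging} is used consistently when passing between the per-iteration sum and the weighted average $\bar\pi_i$ (this bookkeeping is the one routine subtlety, since $\bar\pi_i$ is a loss-weighted rather than uniform average). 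At epoch $e$ the no-regret property together with Theorem~\ref{theorem:averaging} guarantees that $\bar\sigma_{-i}$ is a best response to $\bar\pi_i$ up to $\epsilon_{-i}$, i.e.
\begin{equation*}
	\max_{\sigma_{-i}\in\Delta^e_{\Pi^r_{-i}}}\sum^T_{t=1}\Big(u^e_{-i}(\pi^t_i,\sigma_{-i})-u^e_{-i}(\pi^t_i,\bar\sigma_{-i})\Big)\le\epsilon_{-i},
\end{equation*}
which is exactly the guarantee I want to carry forward.

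The genuinely hard step is controlling the enlarged comparator set, since $\Delta^{e+1}_{\Pi^r_{-i}}$ admits one more pure comparator than $\Delta^e_{\Pi^r_{-i}}$. Because the objective is linear in $\sigma_{-i}$, the maximum over $\Delta^{e+1}_{\Pi^r_{-i}}$ is attained at a vertex, i.e. at some policy in $\Pi^{r,e}_{-i}\cup\{\pi^{e+1}_{-i}\}$; for the old vertices utility consistency collapses the epoch-$(e+1)$ regret to the epoch-$e$ regret, which the displayed bound already caps by $\epsilon_{-i}$. The only threat is the newly added policy $\pi^{e+1}_{-i}$, whose value against $\{\pi^t_i\}$ could a priori beat every old comparator. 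To contain it I would invoke the monotonic-expansion result, Theorem~\ref{theorem:weak_strength}: $\pi^{e+1}_{-i}$ is admitted precisely as a Nash response at epoch $e+1$, and Theorem~\ref{theorem:weak_strength} bounds the additional utility it can extract against the equilibrium sequence. I expect the crux to be showing that the value-preservation constraint on $\bar\sigma'_{-i}$, combined with this equilibrium characterization of $\pi^{e+1}_{-i}$, forces the new vertex's gain to fall within the epoch-$(e+1)$ budget $\epsilon_{-i}$ rather than on top of the inherited regret --- i.e. that warm starting does not pay twice for the enlarged simplex.
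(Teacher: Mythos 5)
Your setup is sound as far as it goes: the zero-sum identification $u_{-i}=-u_i$, the utility-preserving embedding $\Delta^e_{\Pi^r_{-i}}\hookrightarrow\Delta^{e+1}_{\Pi^r_{-i}}$, and the observation that the maximum over the enlarged simplex is attained at a vertex, so old comparators reduce to the epoch-$e$ guarantee. You also correctly isolate the real difficulty: the newly added comparator $\pi^{e+1}_{-i}$. But that is exactly where your proposal stops being a proof — you write that you \emph{expect} the value-preservation constraint plus Theorem~\ref{theorem:weak_strength} to contain the new vertex's gain, and no argument is given. Moreover, Theorem~\ref{theorem:weak_strength} cannot do this job: it states $u_i(\pi^e_i,\sigma^e_{-i})-u_i(\pi^e_i,\sigma^{e+1}_{-i})\ge 0$, which from player $-i$'s side is a \emph{lower} bound on what the epoch-$(e+1)$ equilibrium extracts against the old equilibrium policy — the opposite direction of what you need, namely an \emph{upper} bound on $\sum^T_{t=1}u^{e+1}_{-i}(\pi^t_i,\pi^{e+1}_{-i})$. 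Indeed, $\pi^{e+1}_{-i}$ is precisely the policy trained to exploit player $i$'s restricted set, so no elementary inequality caps its gain against the played sequence $\{\pi^t_i\}$. (A secondary issue you flag as ``routine bookkeeping'' is also not routine: the hypothesis pins the value only at the loss-weighted average $\bar\pi_i$ of Theorem~\ref{theorem:averaging}, not at the uniform average $\frac{1}{T}\sum_t\pi^t_i$ that bilinearity would let you pull out of the sum.)

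The paper's proof takes a different route that sidesteps your vertex analysis entirely: it invokes the substitute-regret machinery of Brown and Sandholm (\cite[Lemmas 2 and 3]{brown2016strategy}), for which the value-preservation hypothesis $u^{e+1}_i(\bar\pi_i,\bar\sigma^{'}_{-i})=u^e_i(\bar\pi_i,\bar\sigma_{-i})$ is exactly the condition making $(\bar\pi_i,\bar\sigma^{'}_{-i})$ a valid substitute strategy whose regret grows at the same rate as normal regret. The paper then writes the epoch-$(e+1)$ regret as the warm-started block of $T$ iterations plus $T'\ge 0$ further iterations of Algorithm~\ref{alg:deterministic}, bounds the whole expression by $\epsilon_{-i}$, and drops the non-negative second block to recover the displayed inequality; the enlarged comparator set is handled inside those lemmas, not by a case split on vertices. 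To repair your proposal you would either need to import that substitute-regret machinery (at which point you have reproduced the paper's argument) or supply a genuinely new bound on the regret against $\pi^{e+1}_{-i}$; as written, that step is a genuine gap, and the tool you nominate for it points the wrong way.
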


Theorem~\ref{theorem:warm_bound} shows that there exists a substitute $\bar{\sigma}^{'}_{-i} \in \Delta^{e+1}_{\Pi^r_{-i}}$ holds Theorem~\ref{theorem:averaging} also holds the regret bound of epoch $e+1$. Thus, once we compute such a $\bar{\sigma}^{'}_{-i}$ that satisfies $u_i(\bar{\pi}_i, \bar{\sigma}^{'}_{-i})=u_i(\bar{\pi}_i, \bar{\sigma}_{-i})$, then it is a feasible warm-start at the next epoch to save the training time.

\begin{lemma}[\emph{See Appendix~\ref{lemma:beta_opt_b}}]\label{lemma:beta_opt}
	Let $k=|\Pi^{r,e+1}_{-i}|$, $\bar{\sigma}^{'}_{-i}$ be parameterized by $\beta_{-i}=\left[\beta_{-i,1},\beta_{-i,2},\dots,\beta_{-i,k}\right]$, $\bar{\sigma}^{'}_{-i}(k)$ the $k$-th item of $\bar{\sigma}^{'}_{-i}$, $x=[\bar{\sigma}^{'}_{-i}(1),\dots,\bar{\sigma}^{'}_{-i}(k - 1)]^T$, $\bar{l}^e_{-i}$ is $-i$'s average loss vector to $\bar{\pi}_i$ at epoch $e$. Then a feasible initial of $\beta^{e+1}_{-i}$ could satisfy
	\begin{equation}
		\beta^{e+1}_{-i} = \arg\min_{\beta_{-i}}\parallel (x-\bar{\sigma}_{-i})^{\top} \bar{l}^e_{-i} - \bar{\sigma}^{'}_{-i}(k) u_{-i}(\bar{\pi}_i,\pi^{e+1}_{-i})\parallel_2.
	\end{equation}
\end{lemma}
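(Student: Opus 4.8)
The plan is to read the claimed minimization as the least-squares residual of the single scalar feasibility constraint supplied by \Cref{theorem:warm_bound}, and to derive that constraint from the bilinearity of the utility together with the fact that the policy-set expansion is append-only. Recall that a feasible warm-start $\bar{\sigma}^{'}_{-i}\in\Delta^{e+1}_{\Pi^r_{-i}}$ must satisfy $u^{e+1}_i(\bar{\pi}_i,\bar{\sigma}^{'}_{-i})=u^e_i(\bar{\pi}_i,\bar{\sigma}_{-i})$; in a two-player zero-sum game this is equivalent to preserving $-i$'s value, $u^{e+1}_{-i}(\bar{\pi}_i,\bar{\sigma}^{'}_{-i})=u^e_{-i}(\bar{\pi}_i,\bar{\sigma}_{-i})$. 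So the first task is to translate this scalar equality into an equation in the Boltzmann parameters $\beta_{-i}$.

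First I would expand $u^{e+1}_{-i}(\bar{\pi}_i,\bar{\sigma}^{'}_{-i})$ using linearity of the utility in the mixing weights, so that it equals $\sum_{j=1}^{k}\bar{\sigma}^{'}_{-i}(j)\,u_{-i}(\bar{\pi}_i,\pi_{-i,j})$. Since epoch $e+1$ only appends the new policy $\pi^{e+1}_{-i}$ as the $k$-th support while leaving the first $k-1$ policies (and $\bar{\pi}_i$) unchanged, the per-policy utilities of those first $k-1$ terms are exactly the entries of the epoch-$e$ average loss vector $\bar{l}^e_{-i}$, up to the no-regret sign convention $\bar{l}^e_{-i}(j)=-u_{-i}(\bar{\pi}_i,\pi_{-i,j})$ identifying loss with negative utility. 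Writing $x=[\bar{\sigma}^{'}_{-i}(1),\dots,\bar{\sigma}^{'}_{-i}(k-1)]^{\top}$ and isolating the new term, the value becomes $-x^{\top}\bar{l}^e_{-i}+\bar{\sigma}^{'}_{-i}(k)\,u_{-i}(\bar{\pi}_i,\pi^{e+1}_{-i})$, while $u^e_{-i}(\bar{\pi}_i,\bar{\sigma}_{-i})=-\bar{\sigma}^{\top}_{-i}\bar{l}^e_{-i}$.

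Equating the two and rearranging collapses the feasibility requirement to the single scalar condition $(x-\bar{\sigma}_{-i})^{\top}\bar{l}^e_{-i}-\bar{\sigma}^{'}_{-i}(k)\,u_{-i}(\bar{\pi}_i,\pi^{e+1}_{-i})=0$, whose left-hand side is precisely the expression inside the norm in the statement. The final step is to observe that $\bar{\sigma}^{'}_{-i}$ is not free but is pinned to be the Boltzmann image of $\beta_{-i}$, which already enforces nonnegativity and the simplex constraint $\sum_j\bar{\sigma}^{'}_{-i}(j)=1$; consequently a $\beta_{-i}$ solving the scalar equation exactly need not admit a closed form and may only be approachable within the open simplex. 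I would therefore replace the hard constraint by minimizing its $\ell_2$ residual over $\beta_{-i}$, which yields the stated $\arg\min$ objective. Whenever the minimum value is $0$ the substitute is exactly feasible, so by \Cref{theorem:warm_bound} and \Cref{theorem:averaging} it inherits both the equilibrium and the regret bound of epoch $e+1$.

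The main obstacle I anticipate is bookkeeping rather than depth: keeping the loss/utility sign convention consistent so that the two $\bar{l}^e_{-i}$ terms combine into $(x-\bar{\sigma}_{-i})^{\top}\bar{l}^e_{-i}$ with the correct overall sign, and being explicit that ``feasible initial'' here means the residual-minimizing $\beta_{-i}$ (exact precisely when the minimum is zero) rather than a guaranteed exact solver. The restriction to the Boltzmann parameterization is exactly what prevents a plain closed-form projection onto the constraint hyperplane, and I would flag that the infimum of the residual is nonetheless $0$ whenever a solution lies in the interior of $\Delta^{e+1}_{\Pi^r_{-i}}$, since the softmax map is dense in the open simplex.
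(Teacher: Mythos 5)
Your proposal is correct and follows essentially the same route as the paper's proof: both expand the value-preservation constraint $u^{e+1}_{-i}(\bar{\pi}_i,\bar{\sigma}^{'}_{-i})=u^{e}_{-i}(\bar{\pi}_i,\bar{\sigma}_{-i})$ via the loss-vector convention $\bar{l}^e_{-i}(j)=-u_{-i}(\bar{\pi}_i,\pi_{-i,j})$ to obtain the residual $\xi=(x-\bar{\sigma}_{-i})^{\top}\bar{l}^e_{-i}-\bar{\sigma}^{'}_{-i}(k)\,u_{-i}(\bar{\pi}_i,\pi^{e+1}_{-i})$, and then define $\beta^{e+1}_{-i}$ as the minimizer of $\|\xi\|_2$. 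Your treatment is in fact slightly more careful than the paper's (explicit bilinearity/append-only justification, and the observation that the softmax parameterization is why one minimizes a residual rather than solving exactly); the only point the paper adds, as a remark after its proof, is that the degenerate solution $\bar{\sigma}^{'}_{-i}(k)=0,\ x=\bar{\sigma}_{-i}$ motivates an entropy regularizer on the objective.
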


Lemma~\ref{lemma:beta_opt} computes a $\beta^{e+1}_{-i}$ for the initialization of $\sigma^{e+1}_{-i}$ to satisfy Theorem~\ref{theorem:averaging} and~\ref{theorem:warm_bound}. In addition, the best response $\pi_{i,\theta}$ also follows these conditions by continuous training instead of restarting parameters at each epoch. We now present the regret bound of \framework{} as follows.

\begin{theorem}[\emph{Regret Bound of \framework{}}]\label{theorem:regret_meta_warm}
Let $l_1,l_2,\dots,l_T$ be a sequence of loss vectors player by an opponent, and $\langle \cdot,\cdot \rangle$ be the dot product, then \framework{} is a no-regret algorithm with (See Appendix~\ref{theorem:regret_meta_warm_b})
	\begin{equation*}
		\frac{1}{T}\left(\sum^T_{t=1}\langle\sigma_t,l_t\rangle-\min_{\sigma \in \Delta(\Pi^r_t)}\sum^T_{t=1}\langle\sigma,l_t\rangle\right)\le \frac{\sqrt{\log{[(k+1)k/2]}}}{\sqrt{2T}},\text{ where }k\text{ is the size of }\Pi^r.
	\end{equation*}
\end{theorem}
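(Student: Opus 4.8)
The plan is to reduce the regret analysis of \framework{} to the classical Multiplicative Weights Update (MWU) bound and then account for the \emph{growing} restricted policy set through the warm-start machinery of Theorem~\ref{theorem:warm_bound} and Lemma~\ref{lemma:beta_opt}. First I would establish that, within a fixed epoch, the Boltzmann parameterization $\sigma_{-i}(j)=\exp{\beta_j}/\sum_i\exp{\beta_i}$ together with the gradient update of Algorithm~\ref{alg:solve_urr} is an instance of exponential weights on the loss sequence $l_1,\dots,l_T$; this is precisely the content flagged in Lemma~\ref{lemma:meta_update_b}, which I would cite to convert the gradient-based update into the multiplicative form $\sigma_{t+1}(j)\propto\sigma_t(j)\exp(-\eta\,l_t(j))$. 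With losses normalized to $[0,1]$ and the step size tuned to $\eta=\sqrt{8\log N/T}$, the textbook MWU guarantee yields $\frac{1}{T}\big(\sum_{t}\langle\sigma_t,l_t\rangle-\min_\sigma\sum_t\langle\sigma,l_t\rangle\big)\le\sqrt{\log N/(2T)}$ over $N$ fixed experts, matching the target form once the correct $N$ is identified.

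Second, the subtlety is that the comparator set $\Delta(\Pi^r_t)$ is not fixed: one policy is appended per epoch, so at epoch $e$ the restricted set has size $e$ and the support of $\sigma$ grows from $1$ up to the final value $k$. I would treat this as an online problem with a growing action set and use the warm-start results to stitch the per-epoch runs into a single no-regret run. Concretely, Theorem~\ref{theorem:warm_bound} shows the substitute $\bar{\sigma}'_{-i}\in\Delta^{e+1}_{\Pi^r_{-i}}$ obeys the epoch-$(e+1)$ regret bound, and Lemma~\ref{lemma:beta_opt} produces an explicit $\beta^{e+1}_{-i}$ realizing it; together these certify that re-initializing $\beta$ when a new policy enters does not inflate the accumulated regret. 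The effective comparator class is therefore the collection of all (policy, epoch) slots that are simultaneously active, and summing the support sizes across epochs gives $\sum_{e=1}^{k}e=(k^2+k)/2$ distinct expert instances, which I would substitute as $N=(k^2+k)/2=(k+1)k/2$ into the MWU bound of the previous step.

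Finally, I would assemble the two pieces. The within-epoch MWU bound, combined with the warm-start preservation of regret across epochs, gives for the full horizon $\frac{1}{T}\big(\sum_{t}\langle\sigma_t,l_t\rangle-\min_{\sigma\in\Delta(\Pi^r_t)}\sum_t\langle\sigma,l_t\rangle\big)\le\sqrt{\log[(k+1)k/2]}/\sqrt{2T}$, and letting $T\to\infty$ establishes the no-regret property. The main obstacle I expect is the second step: justifying rigorously that the warm-start accumulation produces exactly $(k^2+k)/2$ effective experts rather than the naive $k$ (ignoring the growing set) or the pessimistic $k^2$ (a full product). This requires showing that the loss-carrying initialization of Lemma~\ref{lemma:beta_opt} makes each newly activated coordinate behave like a fresh expert whose regret is controlled only from its activation time onward, so that the growing-set regret is governed by the total number of active slots $\sum_{e}e$ rather than by a union bound over epochs; confirming that the leading constant and the tuned $\eta$ survive this stitching is the delicate part of the accounting.
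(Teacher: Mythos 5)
Your first step is sound and matches the paper's starting point: the meta-strategy update is an instance of exponential weights (this is exactly Lemma~\ref{lemma:meta_update_b}), and the per-epoch MWU bound $\sqrt{\log N/(2T)}$ is the right primitive. You also land on the correct combinatorial source of the constant, namely $1+2+\dots+k=(k^2+k)/2$. The genuine gap is your second step: you \emph{assert} that the growing-set problem can be re-encoded as a single MWU run over $N=(k^2+k)/2$ ``policy--epoch slots'' and then invoke the fixed-expert-class bound, but this reduction is never established, and you yourself flag it as the open obstacle. As stated it would fail: in any sleeping-/growing-experts encoding, a slot activated at epoch $e$ has its regret controlled only from its activation time onward, so what such a reduction delivers is regret against comparators evaluated on sub-horizons. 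The theorem's comparator, however, is a \emph{fixed} mixture $\sigma\in\Delta(\Pi^r)$ over the \emph{final} set evaluated over the \emph{entire} horizon, including rounds played before most of its supports existed. Bridging that mismatch is the real content of the proof, and no slot-counting argument supplies it. Nor can Theorem~\ref{theorem:warm_bound} and Lemma~\ref{lemma:beta_opt} close the hole: they are feasibility statements about initializing $\beta$ so that the warm start does not violate the next epoch's regret bound; they say nothing about exchanging comparator classes, and indeed the paper's own proof of Theorem~\ref{theorem:regret_meta_warm_b} never invokes them (warm start is an efficiency device there, not an ingredient of the bound).

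What the paper actually does (following the Online Double Oracle analysis of Le et al.) is different in both of the places your plan is vague. It partitions the horizon into $k$ intervals $|T_1|,\dots,|T_k|$ on which the restricted set is constant with $|\Pi^{r,i}|=i$, applies the standard MWU bound $\sqrt{|T_i|\log i/2}$ independently on each interval, then performs the comparator exchange $\sum_{i}\min_{\sigma\in\Delta(\Pi^{r,i})}\sum_{t\in T_i}\langle\sigma,l_t\rangle \le \min_{\sigma\in\Delta(\Pi^r)}\sum_{t=1}^{T}\langle\sigma,l_t\rangle$, and finally merges the per-epoch terms via $\sum_{i}\sqrt{|T_i|\log i}\le\sqrt{T\log[(k+1)k/2]}$ (justified in the paper by a concavity claim). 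Those two steps --- the exchange of a sum of per-epoch minima for a single global minimum, and the merge of the per-epoch square roots into one logarithm of the summed set sizes --- are precisely what your proposal replaces with an unproven equivalence to a larger fixed expert class. To repair your argument you would have to prove both of them anyway, at which point you would have reproduced the paper's route; note also that the paper's merging inequality is itself quite loose (a sum of square roots is being bounded by a single square root), so treating the $(k^2+k)/2$ count as an exact ``number of effective experts'' with the clean constant $1/\sqrt{2}$ is not something you should expect to fall out of a black-box MWU invocation.
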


Theorem~\ref{theorem:regret_meta_warm} shows that \framework{} is no-regret when the policy set is finite. Though some complex games have continuous policy space, the number of effective policies is finite. We further give the convergence rate of \framework{} as follows.

\begin{theorem}[\emph{Convergence Rate of \framework{}}]\label{theorem:convergence_rate_warm}
     Let $k$, $N$ denote the size of restricted policy sets $\Pi^r_{-i}$ and $\Pi_i$. Then the learning of Algorithm~\ref{alg:solve_urr} will converge to the Nash equilibrium with the rate (See Appendix~\ref{theorem:convergence_rate_warm_b}):
     \begin{equation*}
         \epsilon_T=\sqrt{\frac{\log{[(k+1)k/2]}}{2T}} + \sqrt{\frac{\log{[(N+1)N/2]}}{2T}}.
     \end{equation*}
\end{theorem}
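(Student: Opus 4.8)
The plan is to combine the per-player no-regret guarantee already established in Theorem~\ref{theorem:regret_meta_warm} with the folk theorem for two-player zero-sum games recorded in Theorem~\ref{theorem:averaging}. In the URR game of Definition~\ref{def:urr}, the two sides of Algorithm~\ref{alg:solve_urr} run coupled no-regret dynamics against one another: the opponent meta-strategy $\sigma_{-i}$ is updated over the restricted set $\Pi^r_{-i}$ of size $k$, while the learner $\pi_i$ ranges over the (effectively finite) full set $\Pi_i$ of size $N$. I would bound the exploitability of the time-averaged strategies by the sum of the two players' individual average regrets, and then verify that this sum matches the stated $\epsilon_T$.

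First I would instantiate Theorem~\ref{theorem:regret_meta_warm} on the $-i$ side. Since $\sigma_{-i}$ is optimised over a set of size $k$ with the warm-started MWU-type update, the average regret obeys
\[
\frac{1}{T}\,R^T_{-i} \le \sqrt{\frac{\log[(k+1)k/2]}{2T}} =: \epsilon_{-i}.
\]
Next I would treat the learner $\pi_i$ analogously: although it is parameterised and trained by reinforcement learning, it effectively selects among the finitely many $N$ policies of $\Pi_i$, so the same regret analysis yields
\[
\frac{1}{T}\,R^T_{i} \le \sqrt{\frac{\log[(N+1)N/2]}{2T}} =: \epsilon_{i}.
\]

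Then I would invoke Theorem~\ref{theorem:averaging}: because both average regrets satisfy $R^T_i/T \le \epsilon_i$ and $R^T_{-i}/T \le \epsilon_{-i}$, the regret-weighted averages $(\bar\pi_i, \bar\sigma_{-i})$ form an $(\epsilon_i + \epsilon_{-i})$-equilibrium of the URR game, which by Definition~\ref{def:urr} is precisely the Nash equilibrium targeted by \framework{}. Adding the two bounds gives the claimed rate
\[
\epsilon_T = \epsilon_i + \epsilon_{-i} = \sqrt{\frac{\log[(N+1)N/2]}{2T}} + \sqrt{\frac{\log[(k+1)k/2]}{2T}},
\]
and since $\epsilon_T \to 0$ as $T \to \infty$, the averaged iterates converge to a Nash equilibrium at this rate.

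The main obstacle I anticipate is the second step: justifying a clean no-regret bound of exactly the MWU form on the learner side. Unlike the meta-strategy, $\pi_i$ is carried by a neural network trained with (approximate) reinforcement-learning updates over a continuous parameter space, so there is no literal simplex over $N$ experts on which the MWU analysis of Theorem~\ref{theorem:regret_meta_warm} directly applies. The argument must lean on the paper's finiteness caveat, reducing $\Pi_i$ to its finitely many \emph{effective} policies, and on the implicit assumption that the best-response oracle attains (at least approximately) the discrete no-regret guarantee. Making this reduction precise, and controlling any additive RL approximation error so that it does not perturb the stated rate, is the delicate part; the coupling supplied by Theorem~\ref{theorem:averaging} and the final summation are then routine.
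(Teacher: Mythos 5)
Your proposal matches the paper's proof in both structure and substance: the paper applies the regret bound of Theorem~\ref{theorem:regret_meta_warm} to each player separately (with $\epsilon_i$ and $\epsilon_{-i}$ defined exactly as you define them) and then concludes that the time-averaged strategies form an $\epsilon_T$-Nash equilibrium, merely inlining the standard zero-sum min/max chaining argument instead of citing Theorem~\ref{theorem:averaging} as a black box. The learner-side difficulty you flag is handled in the paper only by a preliminary remark asserting that the reinforcement-learning update for $\pi_i$ is an equivalence of MWU whenever the policy is a Boltzmann distribution (e.g., soft actor-critic or Boltzmann-sampling DQN) --- i.e., the paper leans on the same finiteness/equivalence assumption you correctly identify as the delicate step.
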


\subsection{Pipeline URR Solver}\label{sec:pepsro}

\begin{wrapfigure}[16]{r}{0.6\linewidth}
	\vspace{-1em}
	\begin{algorithm}[H]
		\caption{\textsc{Efficient PSRO (EPSRO)}}\label{alg:pipeline}
		\KwInput{inital restricted policy sets $\Pi^r=(\Pi^r_1, \Pi^r_2)$}
		\While{not terminated}{
			\For{all player $i \in \{1, 2\}$ in parallel}{
				\For{loop all active best response $\pi^j_i \in \Pi^r_i$}{
					\For{all $\Pi^{r,<j}_{-i}$ in parallel}{
						$\pi^j_{i,\theta},\sigma^{<j}_{-i,\beta}=\textsc{SolveURR}(\pi^j_{i,\theta},\Pi^{r,<j}_{-i})$\hspace{-20pt}
						
						\If{the lowest $\pi^j_{i,\theta}$ meets stops cond.}{
							Set it to fixed and $\Pi^r_i := \Pi^r_i \cup \{\pi^j_i\}$
							
							Generate a new active policy
						}
					}
				}
			}
		}
		\KwOutput{current meta-strategies $\sigma=(\sigma_1, \sigma_2)$}
	\end{algorithm}
\end{wrapfigure}
While Algorithm~\ref{alg:solve_urr} is clear, the reinforcement learning step can take a long time to converge to a good response, especially in complex games. 
To overcome this training problem, we introduce a parallel solution to further improve \framework{}'s training efficiency.
Inspired by Pipeline-PSRO (P-PSRO)~\cite{mcaleer2020pipeline}, we schedule multiple asynchronous training procedures to train a best response policy for each player.
Figure~\ref{fig:pepsro} illustrates an example dynamics, it able to scale up the \emph{minimax optimization} with convergence guarantee by maintaining a hierarchical pipeline of reinforcement learning policies as P-PSRO (Proposition 3.2 in ~\cite{mcaleer2020pipeline}). We finally give \framework{}'s pseudo-code with parallel URR solve in Algorithm~\ref{alg:pipeline}. Each player $i$ in parallel maintains a queue of ordered policies with two classes of training policies: fixed policies at low levels and active policies at high levels. Each active policy $\pi^j_i$ at level $j$ learns against the opponent restricted policy set $\Pi^{r,<j}_{-i}\in\{\Pi^{r,k}_{-i}|k=1,\dots,j-1\}$ which is composed of active and fixed policies lower than $j$. Once a lowest active policy meets the stop condition (e.g. number of training episodes), it will be fixed to expand the corresponding restricted policy set, and a new active policy will be add into the queue with the highest level. We argue that parallel training will increase exploration efficiency since each best response plays against multiple policy sets.

\section{Experiments}\label{sec:exp}
We compare \framework{} with five algorithms, including Self-play~\cite{hernandez2019generalized}, PSRO~\cite{lanctot2017unified}, Rectified PSRO (PSRO-rN)~\cite{balduzzi2019open}, Mixed-Oracles~\cite{smith2020iterative}, and Pipeline PSRO (P-PSRO)~\cite{mcaleer2020pipeline}. The environments for the test are three classes of increasing difficulty games, i.e., matrix games, Poker games, and Multi-agent Gathering. We investigate the \emph{exploration efficiency} and \emph{computation efficiency} of algorithms in these experiments, also the performance on convergence. We use \textsc{NashConv} to evaluate the convergence quality in matrix games and Poker games and \emph{cardinality of payoff matrix}~\cite{regin2004cardinality} to evaluate the exploration efficiency. Since traversing the game tree of the Multi-agent Gathering game is too expensive, we set the policy sets $\Pi^{\text{PSRO}}$ produced by PSRO as the baseline to calculate the score of algorithms, which reflects the performance to some extent. The matrix games are designed for the comparison of \emph{exploration efficiency}. Especially the non-transitive mixture game (Section~\ref{sec:non_trans}), which vividly characterizes the exploration dynamics of algorithms. Furthermore, to demonstrate the importance of parallel training for exploration efficiency, we remove the pipeline URR solver of \framework{} as N\framework{} in the matrix games. As for the \emph{computation efficiency}, we investigate it from the number of samples and time consumption in Poker games and Multi-agent Gathering. Extra results and the pseudo-code of score calculation in Appendix~\ref{appendix:results}. All experiments were performed on a single machine with 64 CPUs, 256 G RAM, and 2 GeForce RTX 3090 GPUs.

\begin{figure*}[h]
	\centering
	\subfigure[NashConv]{\label{fig:alpha_star_nash}\includegraphics[width=.24\textwidth]{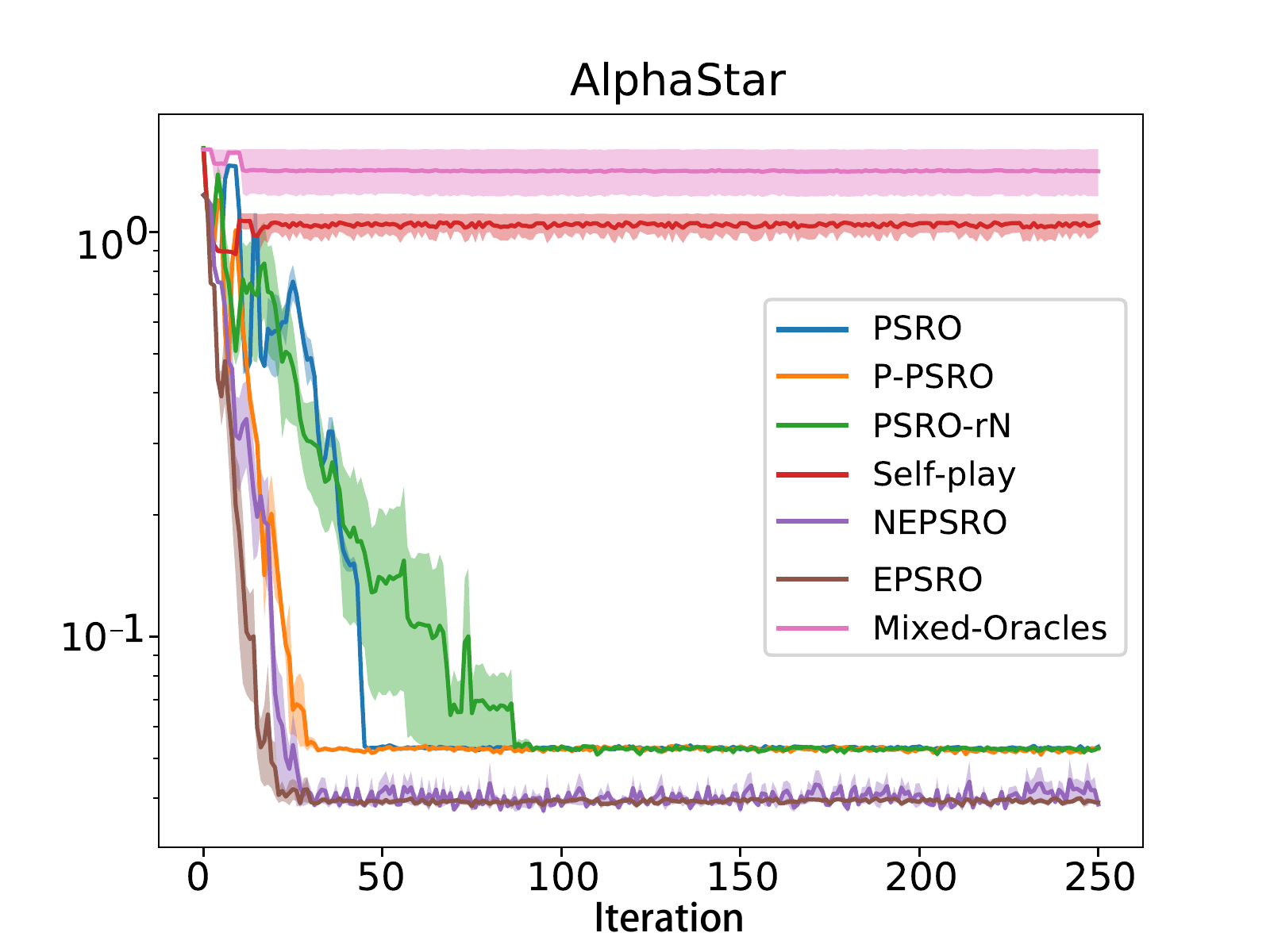}}
	\subfigure[Cardinality]{\label{fig:alpha_star_card}\includegraphics[width=.24\textwidth]{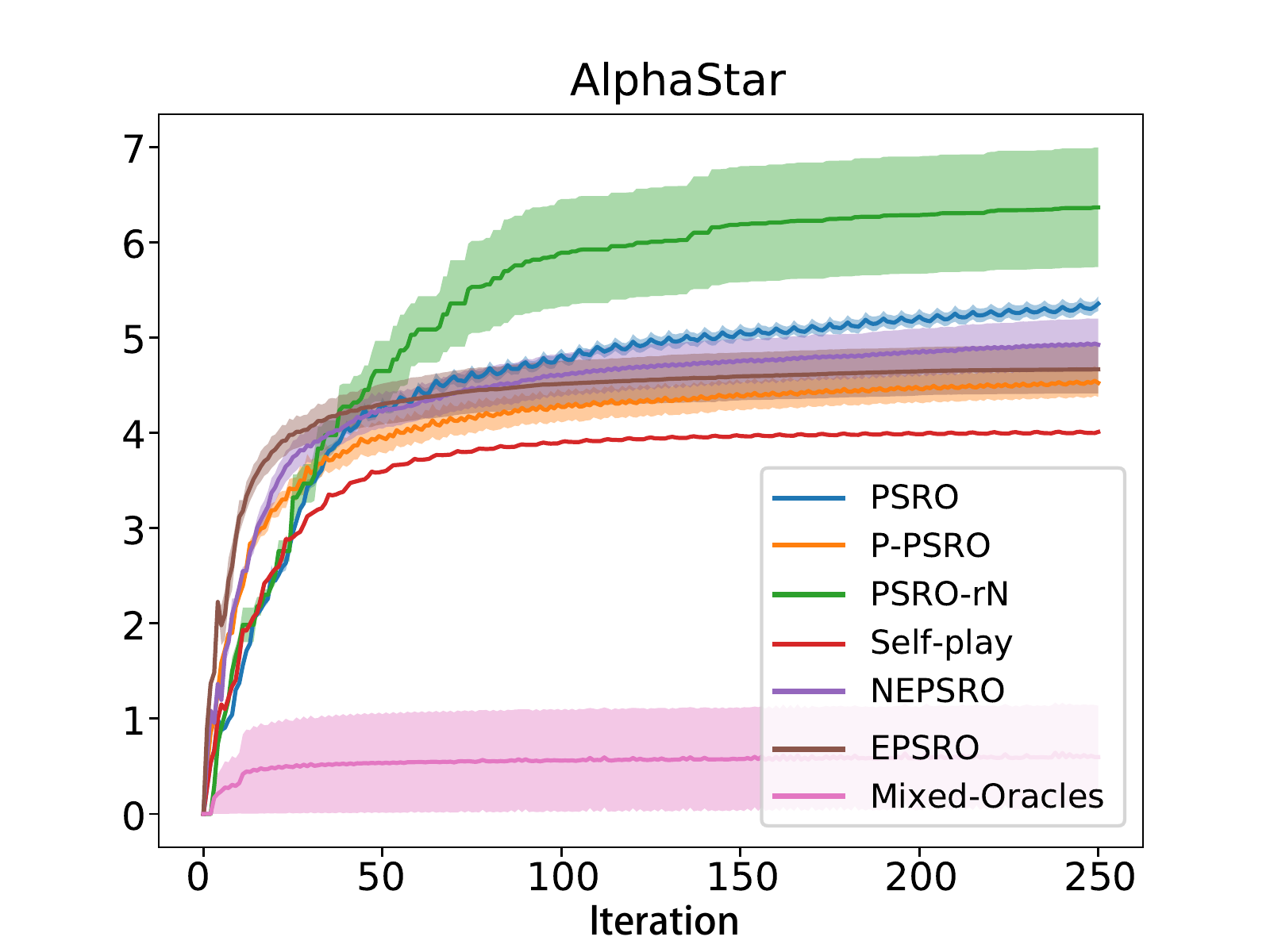}}
    \subfigure[NashConv ]{\label{fig:random_120_nash}\includegraphics[width=.24\textwidth]{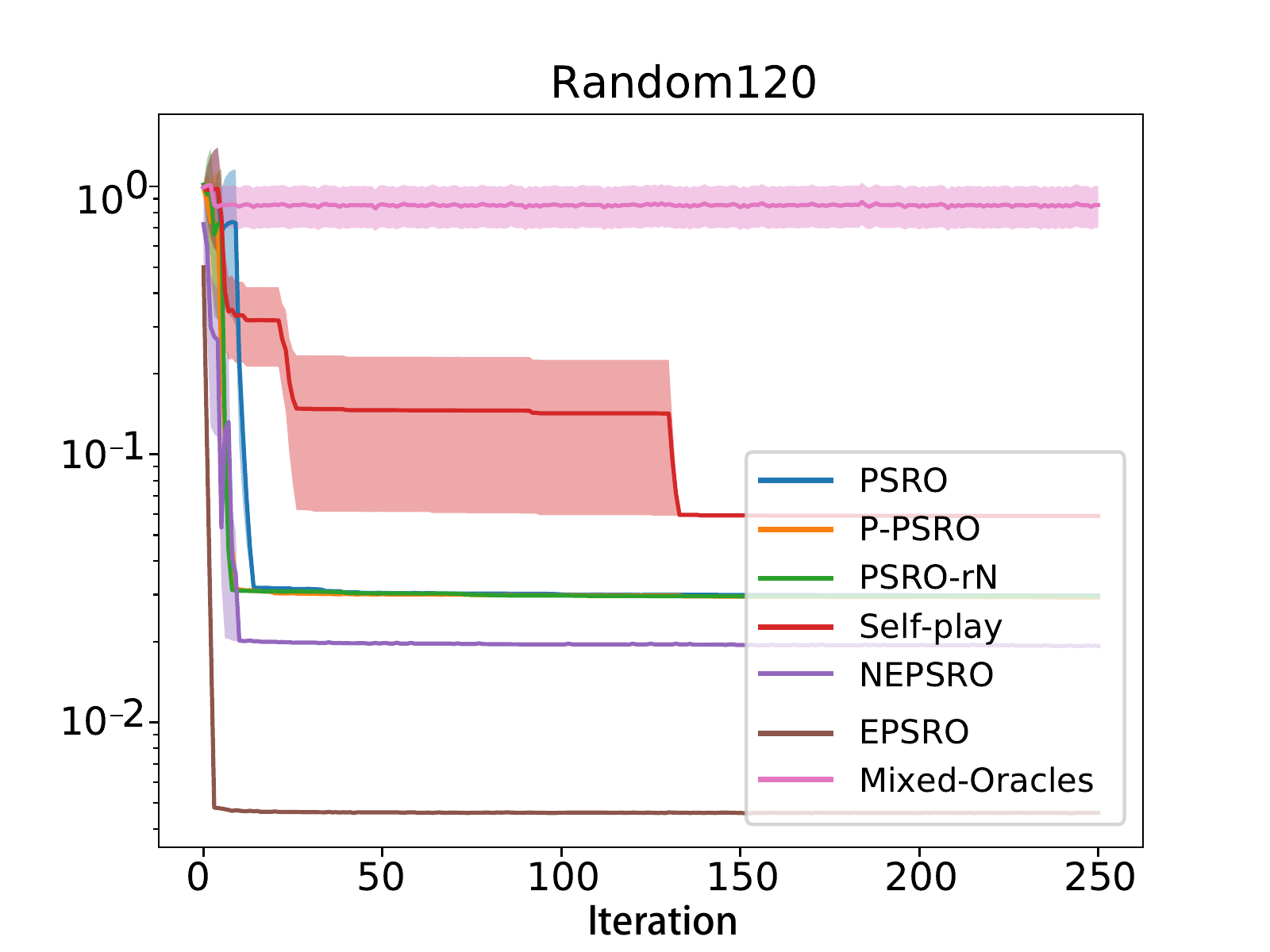}}
	\subfigure[Cardinality ]{\label{fig:random_120_card}\includegraphics[width=.24\textwidth]{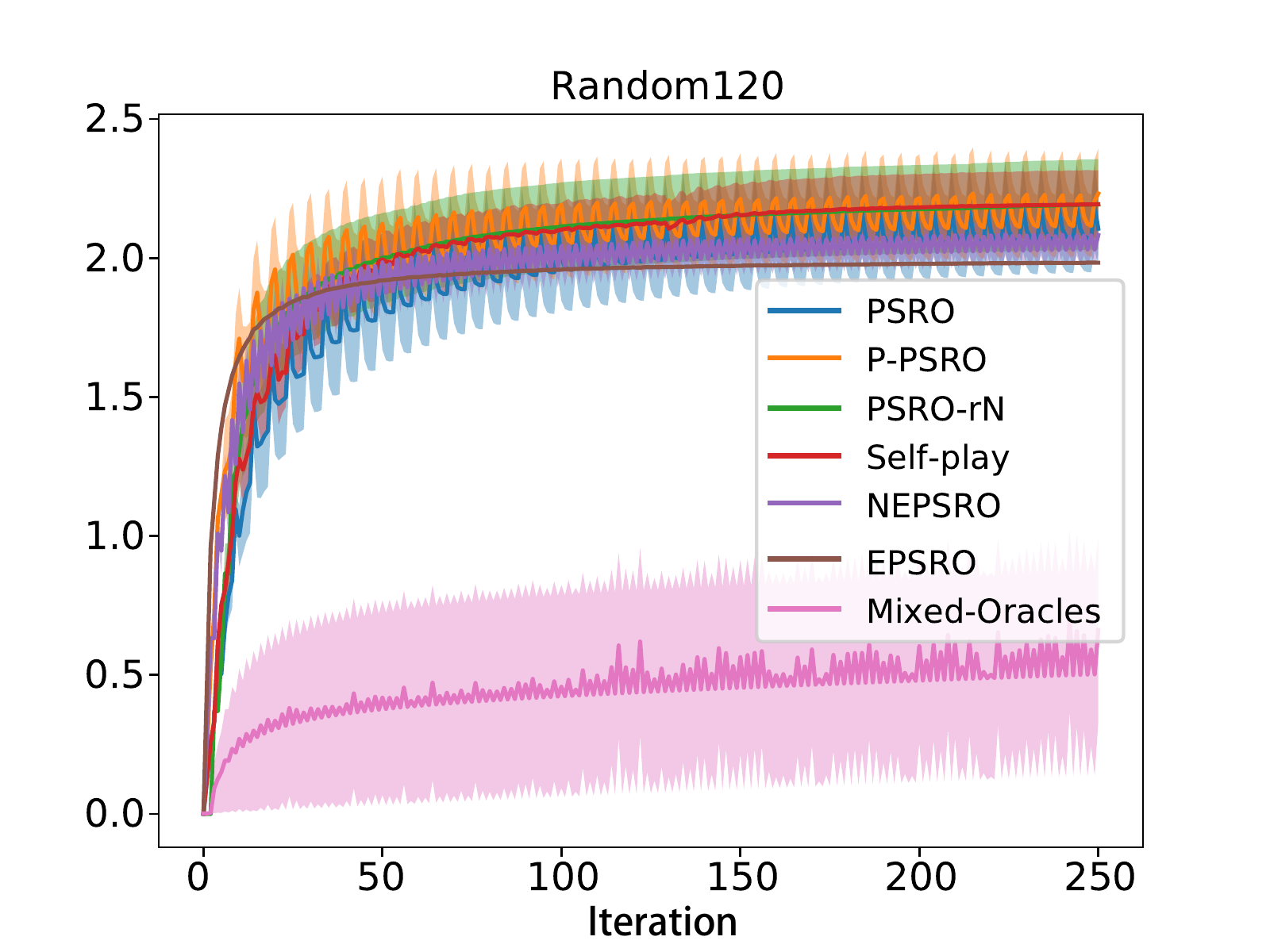}}
    \caption{Comparison of \textsc{NashConv} and cardinality. (a) and (b) show the \textsc{NashConv} and cardinality on AlphaStar matrix game, respectively; (c) and (d) show the \textsc{NashConv} and cardinality on a high-dimensional symmetric game ($n=120$), respectively. Though \framework{} has lower cardinality than some other algorithms, it outperforms all baselines on the \textsc{NashConv}. We argue that a lower cardinality but better exploitability indicates that the algorithm has higher exploration efficiency since it achieves better performance with fewer policies. More results in Appendix~\ref{appendix:results_random}.}
    \label{fig:high_dimensional}
\end{figure*}

\begin{figure*}[ht!]
	\centering
	\includegraphics[width=\linewidth]{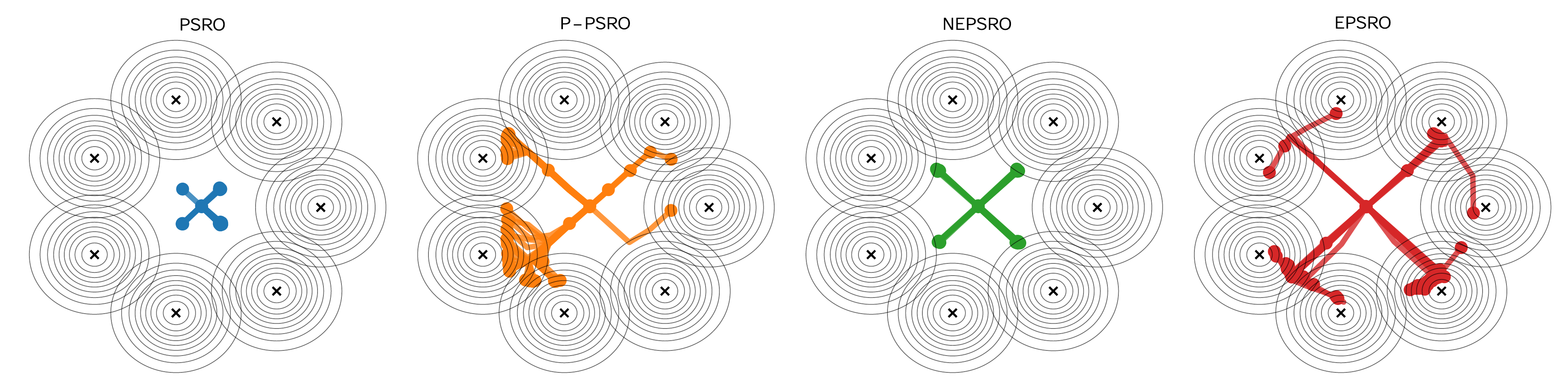}
    \caption{Exploration trajectories on \textit{Non-transitive Mixture Games}. The more trajectories close to the centers of Gaussian, the higher the exploration efficiency of the algorithm. Our algorithms (\framework{} and N\framework{}) outperform all selected baselines. Especially the \framework{}, it explored all centers. More results in Appendix~\ref{appendix:results_non_trans}.}
    \label{fig:non_mixture_exp}
\end{figure*}

\begin{figure*}[ht!]
	\centering
	\subfigure[vs. Wall-time]{\label{fig:kuhn_wall_time}\includegraphics[width=.24\textwidth]{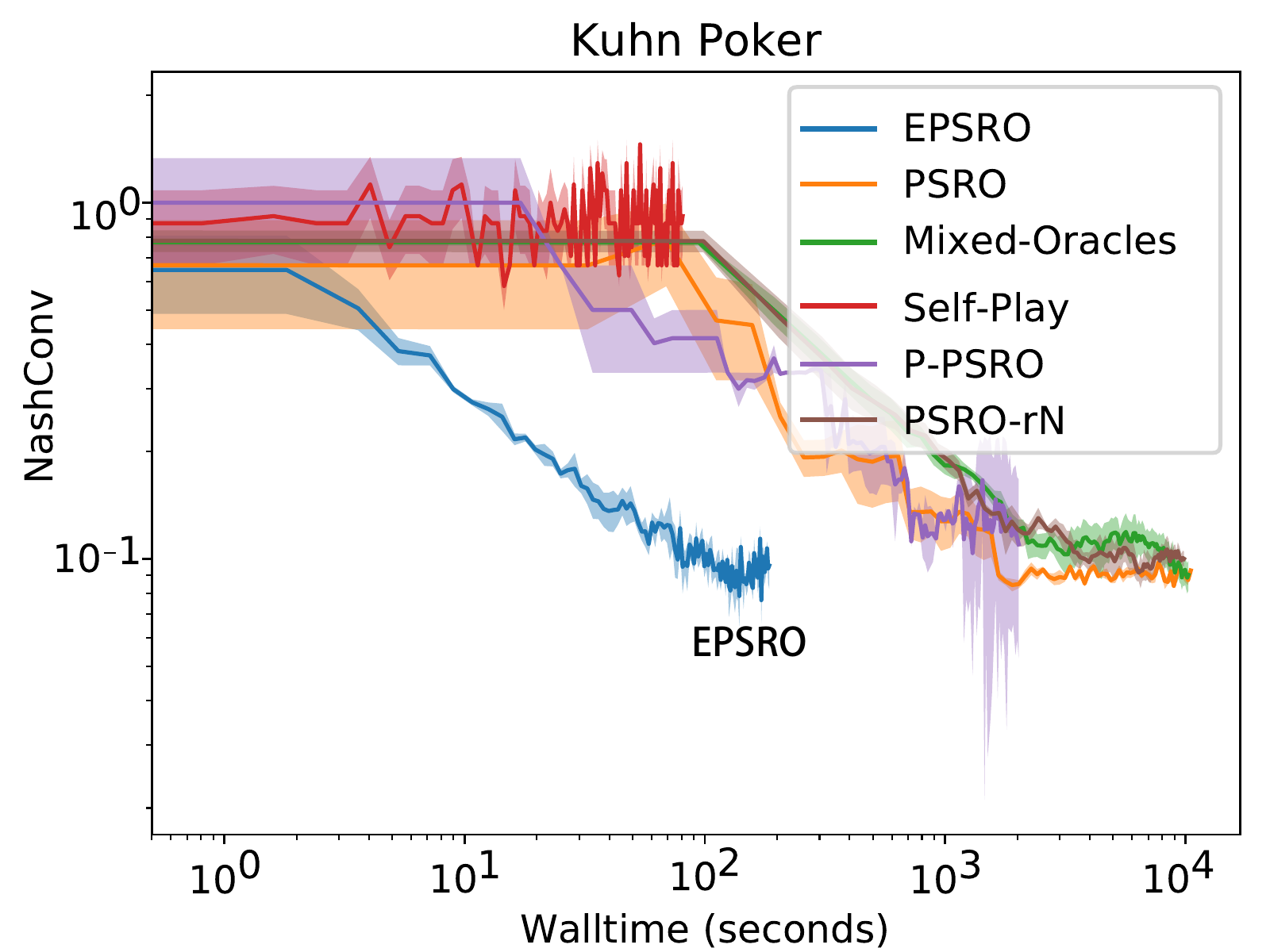}}
	\subfigure[vs. Samples Poker]{\label{fig:kuhn_sample}\includegraphics[width=.24\textwidth]{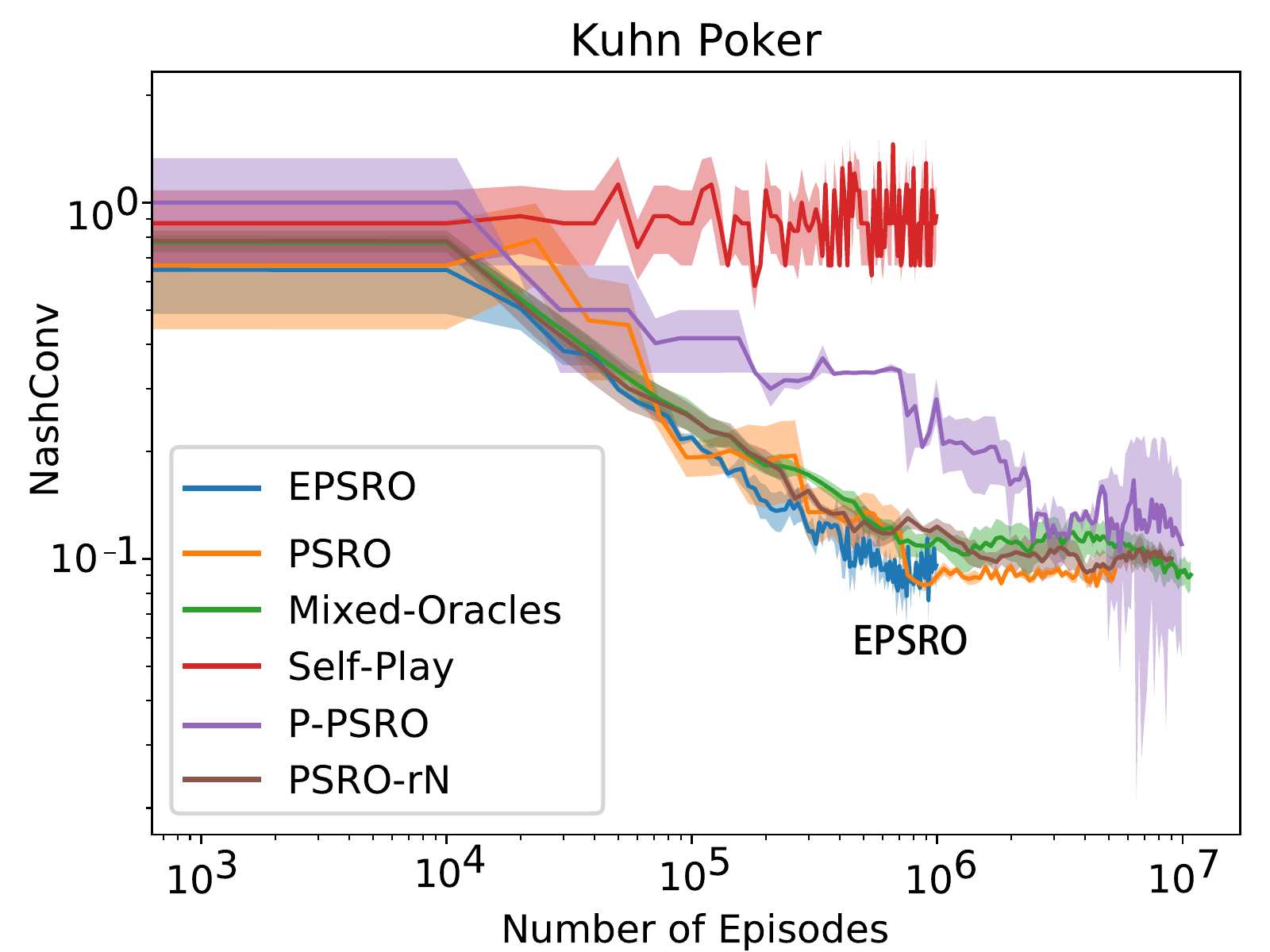}}
    \subfigure[vs. Wall-time Poker]{\label{fig:leduc_wall_time}\includegraphics[width=.24\textwidth]{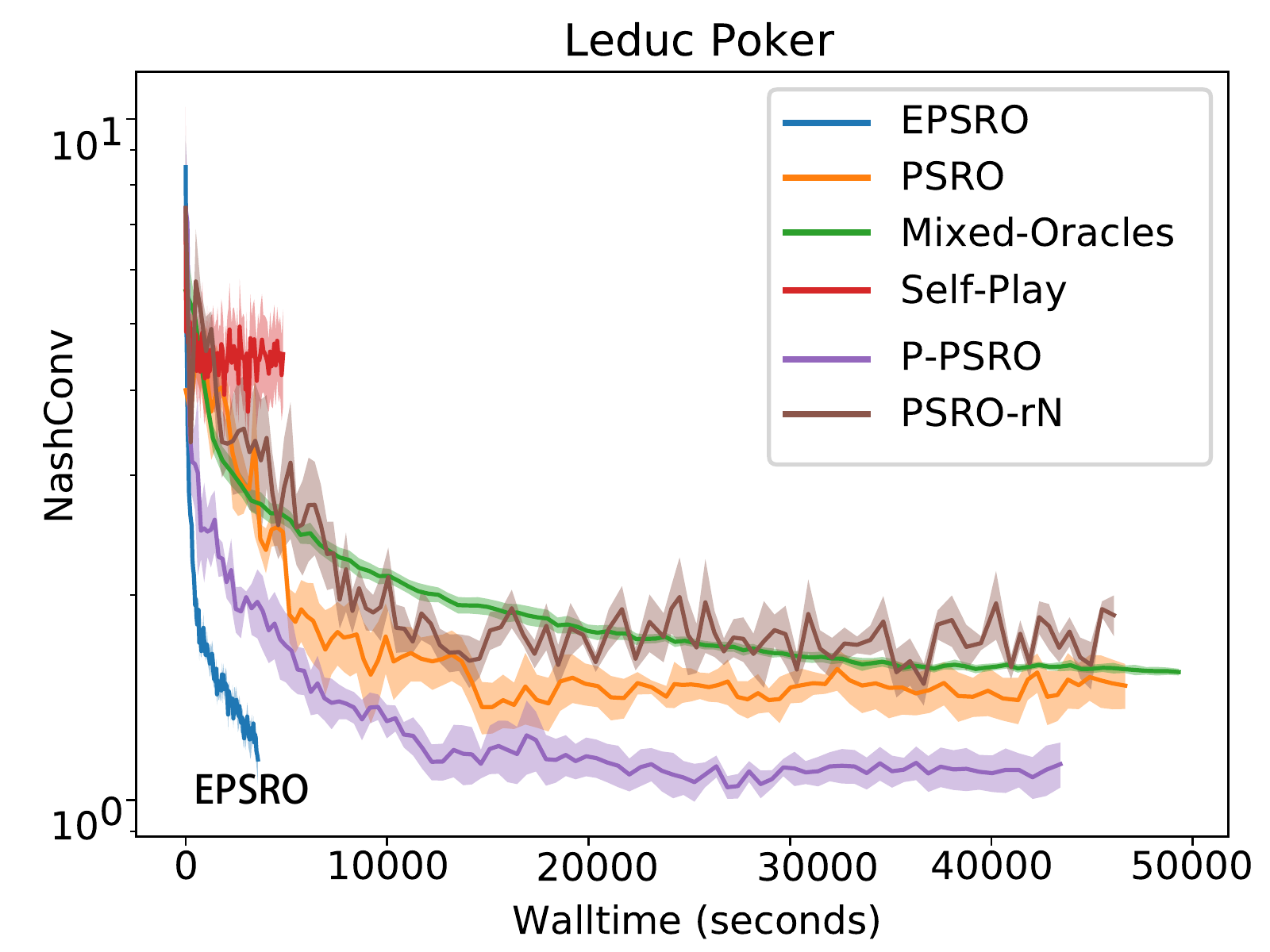}}
    \subfigure[vs. Samples Poker]{\label{fig:leduc_sample}\includegraphics[width=.24\textwidth]{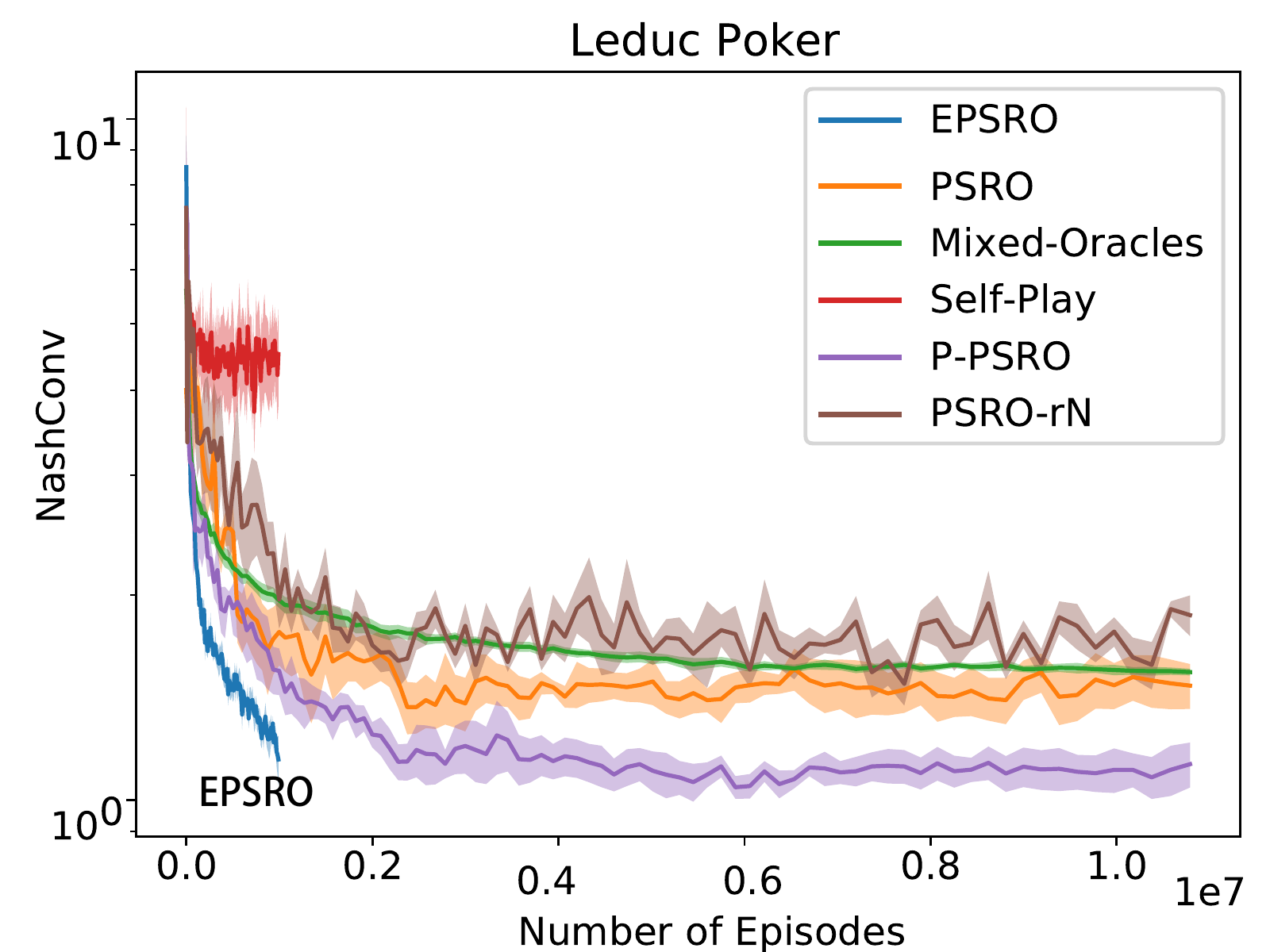}}
    \caption{NashConv on poker games. The number of samples for training each best response is set to $1e4$ episodes, and the number of simulations (if needed) for each joint policy is set to $1e3$ episodes. \framework{} performs a similar performance on the \textsc{NashConv}. For the computation efficiency, \framework{} achieves more than 50x seepup on wall-time; more than 10x sample efficiency than other algorithms.}
    \label{fig:poker_games}
\end{figure*}

\subsection{Comparison of Exploration Efficiency}\label{sec:non_trans}
The non-transitive game for the comparison of \emph{exploration efficiency} is a zero-sum two-player game consisting of 7 equally-distanced Gaussian humps on the 2D plane. In this game, each player chooses a point in the 2D plane as its decision, which is transformed into a 7-dimensional vector $\pi_i$ with each coordinate being the density in the corresponding Gaussian distribution. The payoff of the game is given by $\phi_i(\pi_i, \pi_{-i}) = \pi^T_i\textbf{S}\pi_{-i} + \textbf{1}^T(\pi_i - \pi_{-i})$. In this game, the optimal strategy should stay close to the center of the Gaussian and explore all the Gaussian distributions equally. We train best response policies in 50 epochs for each algorithm.
As presented in Figure~\ref{fig:non_mixture_exp}, \framework{} successfully explores all the centers and shows higher \emph{exploration efficiency} than other baselines. Though the NEPSRO fails to achieve to any centers, its explored policy space is larger than most algorithms.

\subsection{High-dimensional Matrix Games}
The high-dimensional matrix games introduced here include two classes. One is a symmetric matrix game generated in a high-dimensional uniform distribution, and another is an empirical payoff matrix corresponding to 888 reinforcement learning policies in AlphaStar~\cite{vinyals2019grandmaster}. We demonstrate the comparison of these games on the performance when dealing with games that have high-dimensional policy spaces.
\paragraph{Random Symmetric Matrix Game.}
\cite{mcaleer2020pipeline} introduce the games to investigate the performance of PSRO-based methods in high-dimensional symmetric games (SymGame). In this experiment, we generated random symmetric zero-sum matrices with different dimension $n$. For a given matrix, elements in the upper triangle are distributed uniformly: $\forall i < j \le n, a_{i,j} \sim \textsc{Uniform}(-1,1)$ and for the lower triangle, the elements are set to be the negative of its diagonal counterpart: $\forall j <i \le n, a_{i,j} = -a_{j,i}$. The diagonal elements are equal to zero: $a_i,i = 0$. The matrix defines the utility of two pure strategies to the row player. In these experiments, we train a strategy $\pi$ as a best response that plays against another strategy $\hat{\pi}$, it is updated by a learning rate $r$ multiplied by the best response to that strategy: $\pi' = r\textbf{BR}(\hat{\pi}) + (1-r)\pi$. Figure~\ref{fig:high_dimensional} shows the results for $n=120$. We report both \textsc{NashConv} and cardinality. The results show that \framework{} and NEPSRO achieve a faster convergence rate and the lowest \textsc{NashConv} than all of the other algorithms. Though they do not achieve the highest cardinality, we argue that there is a tradeoff between convergence and exploration, and \framework{} performs a better balance between them. It is worth mentioning that the Mixed-Oracle fails to seek a meta-strategy that has a smaller distance to the Nash equilibrium, even worse than Self-Play. We argue that the policy distills of Mixed-Oracles may decrease the exploration efficiency, especially in such a high-dimensional policy space.
\paragraph{AlphaStar Empirical Game.}
The AlphaStar Matrix Game is derived from solving a complex real-world game StraCraftII~\cite{czarnecki2020real}, which involves 888 reinforcement learning policies. We test the \emph{exploration efficiency} and the convergence quality of our method for solving such empirical games. Similar to the results in the random symmetric matrix game, our algorithm performs a faster convergence rate and lower \textsc{NashConv} than other algorithms, while the Mixed-Oracle still fails to explore new policies to expand its policy sets (Figure~\ref{fig:high_dimensional}).

\subsection{Poker Games}
\begin{wrapfigure}[11]{r}{0.45\linewidth}
	\vspace{-4em}
	\begin{center}
		\includegraphics[width=\linewidth]{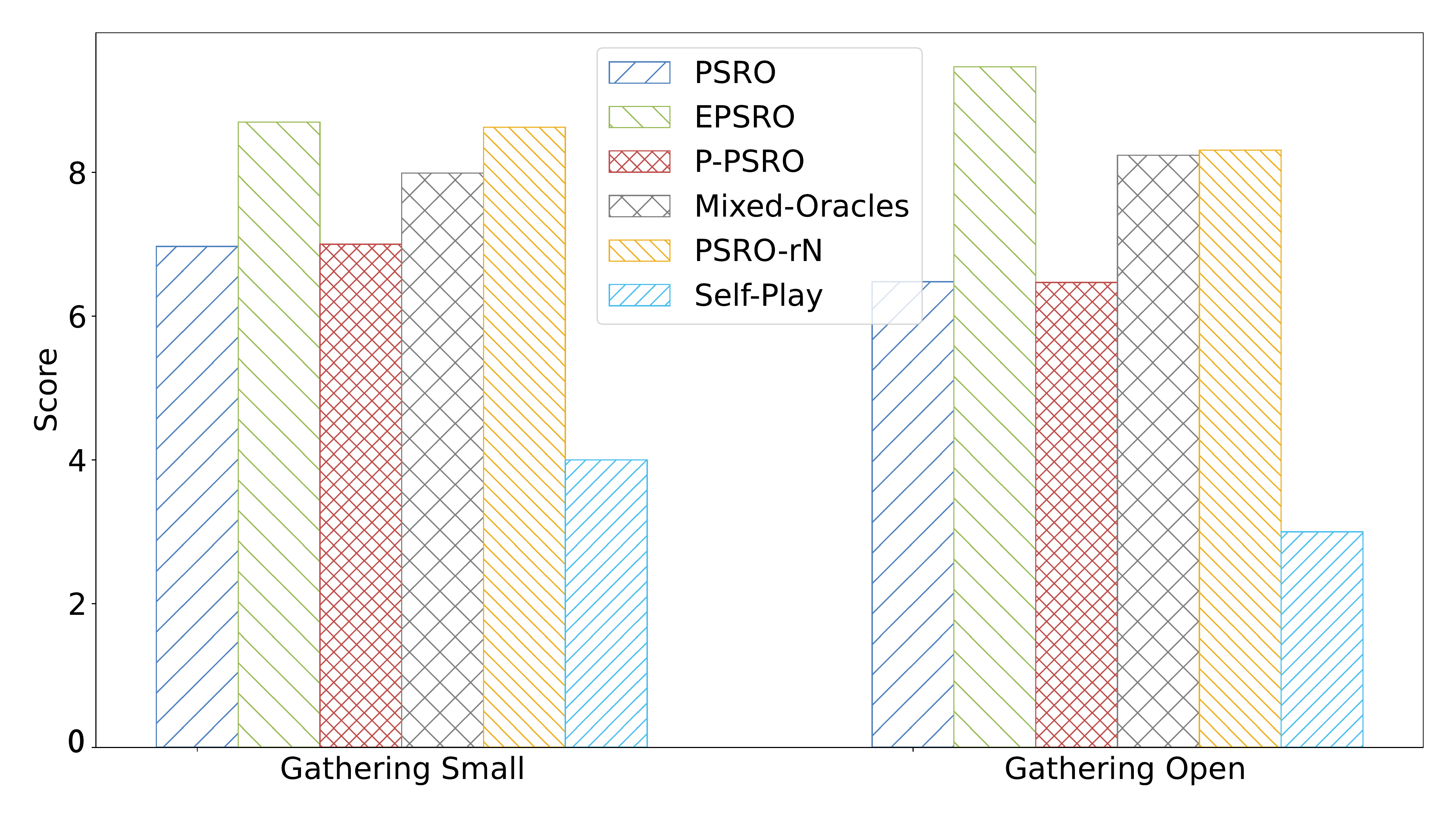}
	\end{center}
	\caption{The average score of final policy set that plays against $\Pi^{\text{PSRO}}$.}
	\label{fig:gathering_score}
\end{wrapfigure}
Poker is a common benchmark in multi-agent decision tasks. In this paper, we introduce two simplified forms of poker games for experiments, i.e., Kuhn Poker and Leduc Poker~\cite{zinkevich2007regret,bowling2015heads}.
These poker tasks model zero-sum two-player imperfect-information games, in which each player shows uncertainty about the game rules and the state of other players.
Similar to Poker, where each player in these games chooses to raise/call/fold through rounds of betting. We investigate the sample efficiency and performance of \framework{} in this experiment. The training for each algorithm is set to learn 100 policies.
The number of samples for training each policy is set to 10000 episodes, and the number of simulations for each joint policy is set to 1000 episodes. So the total number of samples for training a PSRO algorithm with simulations achieves $1.1 \times 1e7$. Figure~\ref{fig:poker_games} presents the results of \textsc{NachConv} w.r.t to wall-time and the number of samples.
Since no simulations and parallel best response learning, \framework{} substantially achieves high training efficiency.
Specifically, for the wall-time, \framework{} has more than 50x speedup than other PSRO methods; for the sample efficiency, \framework{} has more than 10x than other non-parallelized PSRO methods. 


\subsection{Multi-agent Gathering}
We further investigate the capability of \framework{} to handle more complex multi-agent tasks in multi-agent gathering environments (MAG)\footnote{\url{https://github.com/HumanCompatibleAI/multi-agent}}.
MAG is an endless environment whose horizon length could be infinite.
In our experiments, we limit the horizon of each episode to 100. In MAG, the goal of each agent is to collect as many as ``apples''.
The apples regrow at a rate dependent upon the configuration of the uncollected nearby apples. In this case, the more nearby apples, the higher the regrowth rate of the apples.
Naturally, this presents a dilemma for the players: each wants to pick as many apples as possible.
However, if they over-harvest the throughput of apples diminishes, potentially falling to zero. Figure~\ref{fig:gathering_score} shows the confrontation results of each algorithm's final policy set $\Pi^{\text{TEST}}$ to an evaluation policy set $\Pi^{\text{PSRO}}$. We calculate the score as $\sigma^{\text{TEST}}M^{\text{TEST}}\left[\sigma^{\text{PSRO}}\right]^T$, where $M^{\text{TEST}}$ is the empirical payoff matrix and $\sigma$ is a learned meta-strategy. The higher the score, the better the performance of the corresponding algorithm. Additionally, we report the curve of the score of intermediate policy sets during the learning process in Appendix~\ref{appendix:results_gathering}.

\section{Related Work}
In large normal-form games, it is difficult to directly compute the approximation of Nash equilibrium.
Policy Space Response Oracles (PSRO)~\cite{lanctot2017unified} provides an iterative solution to solve this problem. There are many variants of PSRO focus to improve the convergence rate or training efficiency.
A straightforward idea is to utilize parallelism to improve the training efficiency. For instance, DCH~\cite{lanctot2017unified} parallelizes PSRO by training multiple RL policies, each against the meta Nash equilibrium below it in the hierarchy. A problem with this method is that the number of policies should be set beforehand. However, it is difficult to figure out how many policies does it require to solve a game in practice. \cite{mcaleer2020pipeline} proposed a similar solution, i.e. P-PSRO, to solve this problem, which inherits the hierarchical parallelism training but has no need to preset the number of training policies. Another parallelism variant is Rectified PSRO~\cite{balduzzi2019open}, but it has been proved not converge in all symmetric zero-sum games~\cite{mcaleer2020pipeline}.

Another factor to the efficiency is the exploration efficiency. Specifically, more diverse the learned best responses, closer the restricted sets to original policy sets~\cite{perez2021modelling,yang2021diverse}. Thus, PSRO-based methods converge in smaller size of restricted policy sets and less time consumption. However, it is difficult to discover an exact best response, because the underlying policy training has no guarantee to find an ideal policy as the best response without extra conditions. As a solution, improving the diversity of restricted policy set has been regarded as a reasonable way to solve this problem. Among the existing work, DPP~\cite{perez2021modelling} utilizes the expected cardinality to measure the diversity of policy set. \cite{liu2021towards} proposed a method that unify both behavior and reward distance to measure the diversity. Since the learning of best response need to play against a mixture of opponent policies. Many of existing work demonstrate by sampling opponent policies from this mixture for each episode. Compared to playing against a single policy, \cite{smith2020iterative} claimed that such a mechanism brings stochasticity on the opponent and forgets previous experiences, making algorithms slow to converge. Thus, the authors proposed a method that distills~\cite{czarnecki2019distilling} the opponent mixture as a single policy via Q-mixing~\cite{davydov2021q}.

Concurrently to our work, \cite{mcaleer2022anytime} proposed a similar work named AODO. Their work differs from ours in the following ways: (1) We focus on improving the training efficiency of PSRO-based methods, providing monotone improvement and convergence analysis for \framework{}, while AODO focuses on making a guarantee to monotonic decrease the \textsc{NashConv}~\cite{johanson2011accelerating}; (2) \framework{} introduces parallelized best response training while AODO executes in singleton; (3) We proposed a warm-start technique to tackle the re-training problem while AODO starts from scratch at each epoch; (4) We demonstrate the experiments with 5 baselines on both high-dimensional matrix games, poker games, and non-trivial multi-agent gathering tasks while AODO considers only one baseline and fewer environments.

\section{Conclusions}\label{sec:conclusion}
In this work, we propose a parallel algorithm \framework{} to improve the training efficiency of PSRO.
The demonstration results show that \framework{} achieves higher computation efficiency and exploration efficiency than existing works. The improvements of \framework{} benefit from learning best responses against the whole opponent restricted policy set and cooperating with parallelized training. In addition, a warm-start technique to reduce the re-training cost also makes our algorithm perform a higher convergence rate.
However, \framework{} is limited to handling the two-player cases because there will be a divergence in selecting meta-strategies for more players involved. In future work, we would like to seek a method to solve this problem and generalize EPSRO to multi-player cases.

\bibliographystyle{plain}
\bibliography{refs}

%
%


\newpage
\appendices


\section{Algorithms}\label{appendix:algorithm}

Algorithm~\ref{alg:deterministic} shows the pseudo-code of meta-strategy optimization.

\begin{algorithm}[h]
	\caption{\textsc{Deterministic Meta Strategy Optimization}}\label{alg:deterministic}
	\KwInput{initialize the $\sigma_{-i,\beta}$ with Algorithm~\ref{alg:warm_start}}
	\KwInput{initialize an episode reword buffer $\mathcal{B}$;window size $L$; $K$ the length of meta-strategy $\sigma_{-i,\beta}$}
	\KwInput{loss vector buffer $\mathcal{L}_i, \mathcal{L}_{-i} \in \mathbb{R}^{K}$; counters: $N_1=0,\dots,N_K=0$}
	\While{not terminated}{
		\For{many episodes}{
			Sample $\pi^k_{-i} \sim \sigma_{-i,\beta}$ to play against $\pi_{i,\theta}$
			
			Collect episode return $r^k$ into $\mathcal{B}$ and $N_k := N_k + 1$
			
			\If{the size of $\mathcal{B}$ equals to $L$}{
				Calculate average observed returns: $\forall k=1,\dots,K, \bar{r}^k=\frac{\sum^L_{l=1}\textbf{1}_{j=k}r^j_l}{N_k}$
				
				Compute gradients for $\beta_k=\arg\max_{\beta_k}\bar{r}^k$ and update $\sigma_{-i,\beta}$ with Lemma~\ref{lemma:beta_opt_b}

				Collect loss vectors $l^k_{-i}=-\bar{r}^k$, $l^k_i=\bar{r}^k$ into $\mathcal{L}_i$, $\mathcal{L}_{-i}$, respectively
				
				Reset buffer $\mathcal{B}$ and counters
			}
		}
	}
	\KwOutput{meta-strategy $\sigma_{-i}$ for player $-i$, loss vector buffers $\mathcal{L}_i$, $\mathcal{L}_{-i}$}
\end{algorithm}

We introduce the warm-start for opponent meta-strategy as follows. For each player $i$, the algorithm randomly initializes a vector $\beta$ with the length equals to the size of $\Pi^{r}_{-i}$. At each epoch $e+1$, we slightly run a simulation procedure of $(\pi_i, \pi_{-i})$ to estimate the utility $u_{-i}(\pi_i,\pi_{-i})$ before applying the Lemma~\ref{lemma:beta_opt} to optimize $\beta$. Then, by cooperating with the average loss vector from last epoch $e$, we could start the optimization to compute a feasible $\beta^*$ that makes $\sigma_{-i}$ satisfy Theorem~\ref{theorem:averaging} and~\ref{theorem:warm_bound}.

\begin{algorithm}[h]
	\caption{\textsc{Meta-Strategy Warm-Start}}\label{alg:warm_start}
	\KwInput{newly learned policy support $\pi_{-i}$; best response $\pi_i$ from last epoch; error threshold $\tau \ge 0$}
	\KwInput{initialize $\sigma_{-i}$ with $\beta\in \mathbb{R}^{|\Pi^r_{-i}|}$; loss vectors from last epoch $[l^1_{-i},\dots,l^T_{-i}]$}
	
	Compute average loss vector $\bar{l}_{-i}=\left(l^1_{-i}+\dots+l^T_{-i}\right) / T$

	Estimate $u_{-i}(\pi_i, \pi_{-i})$ by running simulation for $(\pi_i, \pi_{-i})$
	
	Compute the square error of utility as $\xi_2 = \parallel (x-\sigma_{-i})^{\top} \bar{l}_{-i} - \sigma_{-i}(k) u_{-i}(\pi_i,\pi_{-i})\parallel_2 - \lambda H(\sigma_{-i})$
	
	Update $\beta$ as $\beta:=\beta - \nabla_{\beta}\xi_2$
	
	\If{$\xi_2 \le \tau$}{Stop optimization and continue}
	\Else{Go back to step 3}
	
	\KwOutput{meta-strategy $\sigma_{-i}$}
\end{algorithm}

Algorithm~\ref{alg:warm_start} relies on the computation of $\xi_2$, we introduce its definition in Lemma~\ref{lemma:beta_opt_b}.

\section{Proofs}\label{appendix:proofs}

\addcontentsline{toc}{subsection}{A Proof of Theorem~\ref{theorem:weak_strength}}
\subsection*{A Proof of Theorem~\ref{theorem:weak_strength}}

\begin{theorem}[\emph{Monotonic Policy Space Expanding}]\label{theorem:weak_strength_b}
	For any given epoch $e$ and $e+1$, let $(\pi^e_i, \sigma^e_{-i})$ and $(\pi^{e+1}_i, \sigma^{e+1}_{-i})$ be Nash equilibrium of $\textbf{URR}^e_i$ and $\textbf{URR}^{e+1}_i$, respectively, where $\pi^e_i,\pi^{e+1}_i \in \Pi_i$, $\sigma^e_{-i} \in \Delta^e_{\Pi^{r}_{-i}}$ and $\sigma^{e+1}_{-i} \in \Delta^{e+1}_{\Pi^{r}_{-i}}$.
	The utilities of $\pi^e_i$ against opponent strategies $\sigma^e_{-i}$ and $\sigma^{e+1}_{-i}$ satisfies
	\begin{equation}
		u_i(\pi^e_i, \mathbf{\sigma^e_{-i}}) - u_i(\pi^e_i, \mathbf{\sigma^{e+1}_{-i}}) \ge 0,
	\end{equation}
	where $\Delta^e_{\Pi^r_{-i}}$ indicates $\Delta(\Pi^{r,e}_{-i})$. Especially, $u_i(\pi^e_i,\sigma^e_{-i}) - u_i(\pi^e_i,\sigma^{e+1}_{-i}) > 0$ indicates there is a strictly policy space expanding at $e+1$, i.e., $\pi^{e+1}_{-i} \in \Pi^{r,e+1}_{-i}\setminus\Pi^{r,e}_{-i}$.
\end{theorem}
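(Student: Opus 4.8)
The plan is to route the comparison through the Nash \emph{values} of the two URR games rather than comparing the opponent strategies directly. Each $\textbf{URR}^e_i=(\Pi_i,\Pi^{r,e}_{-i})$ is a two-player zero-sum game (player $-i$'s best response maximizes $u_{-i}=-u_i$, hence minimizes $u_i$), so the minimax theorem lets me identify the equilibrium utility of Definition~\ref{def:urr} with the game value
\[
v^e := u_i(\pi^e_i,\sigma^e_{-i}) = \max_{\pi_i\in\Delta(\Pi_i)}\ \min_{\sigma_{-i}\in\Delta^e_{\Pi^r_{-i}}} u_i(\pi_i,\sigma_{-i}),
\]
and likewise $v^{e+1}=u_i(\pi^{e+1}_i,\sigma^{e+1}_{-i})$. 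The only structural input needed is $\Pi^{r,e}_{-i}\subseteq\Pi^{r,e+1}_{-i}$, which holds because epoch $e+1$ merely enlarges the opponent's restricted set.

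First I would establish value monotonicity $v^e\ge v^{e+1}$: for every fixed $\pi_i$, enlarging the feasible set of the inner minimization can only lower it, so $\min_{\sigma\in\Delta^{e+1}_{\Pi^r_{-i}}}u_i(\pi_i,\sigma)\le\min_{\sigma\in\Delta^{e}_{\Pi^r_{-i}}}u_i(\pi_i,\sigma)$, and taking $\max_{\pi_i}$ on both sides preserves the inequality. Next I would invoke the equilibrium conditions: since $\pi^{e+1}_i=\textbf{BR}(\sigma^{e+1}_{-i})$ maximizes $u_i(\cdot,\sigma^{e+1}_{-i})$ over player $i$'s strategies and $\pi^e_i$ is itself such a strategy, we get $v^{e+1}=u_i(\pi^{e+1}_i,\sigma^{e+1}_{-i})\ge u_i(\pi^e_i,\sigma^{e+1}_{-i})$. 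Chaining the two bounds gives
\[
u_i(\pi^e_i,\sigma^e_{-i})=v^e\ \ge\ v^{e+1}=u_i(\pi^{e+1}_i,\sigma^{e+1}_{-i})\ \ge\ u_i(\pi^e_i,\sigma^{e+1}_{-i}),
\]
which is exactly the claimed inequality.

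For the strict statement I would argue by contraposition. Suppose no genuine expansion occurs, i.e. $\pi^{e+1}_{-i}\in\Pi^{r,e}_{-i}$, so that $\Pi^{r,e+1}_{-i}=\Pi^{r,e}_{-i}$ and the two URR games coincide. Then $(\pi^{e+1}_i,\sigma^{e+1}_{-i})$ is an equilibrium of the \emph{same} zero-sum game as $(\pi^e_i,\sigma^e_{-i})$, and by the interchangeability of equilibria in two-player zero-sum games $(\pi^e_i,\sigma^{e+1}_{-i})$ is also an equilibrium, forcing $u_i(\pi^e_i,\sigma^{e+1}_{-i})=v^e=u_i(\pi^e_i,\sigma^e_{-i})$ and hence a zero difference; contrapositively, a strictly positive difference forces $\pi^{e+1}_{-i}\in\Pi^{r,e+1}_{-i}\setminus\Pi^{r,e}_{-i}$. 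The main obstacle to watch is the \emph{sign} of the inequality: a naive direct comparison of $\sigma^e_{-i}$ and $\sigma^{e+1}_{-i}$ fails, because $\sigma^e_{-i}$ is only a minimizer over the \emph{smaller} set and yields the opposite direction. The decisive moves are the detour through $v^e\ge v^{e+1}$ and spending the best-response optimality of $\pi^{e+1}_i$ to swap it back to $\pi^e_i$; I would also take care to justify replacing equilibrium utilities by max--min values through the minimax theorem, which relies on the URR game being genuinely zero-sum on the restricted set.
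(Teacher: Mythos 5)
Your proof is correct. For the weak inequality it is essentially the paper's own argument in different clothing: the paper chains the three equilibrium inequalities
\[
u_i(\pi^e_i,\sigma^e_{-i}) \;\ge\; u_i(\pi^{e+1}_i,\sigma^e_{-i}) \;\ge\; u_i(\pi^{e+1}_i,\sigma^{e+1}_{-i}) \;\ge\; u_i(\pi^e_i,\sigma^{e+1}_{-i}),
\]
where the first uses optimality of $\pi^e_i$ at epoch $e$, the second uses feasibility of $\sigma^e_{-i}$ in the enlarged simplex $\Delta^{e+1}_{\Pi^r_{-i}}$ (set inclusion), and the third uses optimality of $\pi^{e+1}_i$; your step $v^e\ge v^{e+1}$ is exactly the composite of the first two, repackaged as max--min monotonicity, and your final step is the paper's third inequality. (Your caution about the minimax theorem is also unnecessary: once the two equilibria are assumed to exist, the equilibrium payoff equals the max--min value by the standard sandwich argument $u(\pi^\star,\sigma^\star)=\max_\pi u(\pi,\sigma^\star)\ge \min_\sigma\max_\pi u \ge \max_\pi\min_\sigma u \ge \min_\sigma u(\pi^\star,\sigma)=u(\pi^\star,\sigma^\star)$, with no extra hypotheses.) The genuine divergence is in the strict part. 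The paper's Corollary~\ref{corollary:strict_expanding} argues by contradiction: assuming $\sigma^{e+1}_{-i}\in\Delta^e_{\Pi^r_{-i}}$, it observes both pairs are then equilibria of the same zero-sum game, hence have equal value, and concludes $u_i(\pi^e_i,\sigma^e_{-i})-u_i(\pi^e_i,\sigma^{e+1}_{-i})\ge 0$, claiming this ``violates'' the assumed strict inequality --- but $\ge 0$ does not contradict $>0$, so that closing step is defective as written. Your contrapositive route --- if no new policy is added the two URR games coincide, interchangeability of zero-sum equilibria makes $(\pi^e_i,\sigma^{e+1}_{-i})$ itself an equilibrium, and equality of values forces the difference to vanish --- is correct and in fact repairs this defect (equivalently, one can note that $\sigma^e_{-i}$ minimizes $u_i(\pi^e_i,\cdot)$ over $\Delta^e_{\Pi^r_{-i}}\ni\sigma^{e+1}_{-i}$, giving the difference $\le 0$ directly). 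The only reading assumption you make is $\Pi^{r,e+1}_{-i}=\Pi^{r,e}_{-i}\cup\{\pi^{e+1}_{-i}\}$, which is how \framework{} expands the restricted set, so your argument matches the theorem's intent.
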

\begin{proof}
	Note that the proof is non-trivial since $\Pi^{r,e}_{-i} \subset \Pi^{r,e+1}_{-i}$, so we cannot directly derive Theorem~\ref{theorem:weak_strength_b} with the Nash equilibrium $(\pi^e_i,\sigma^e_{-i})$. Instead, we should combine equilibriums of epoch $e$ and $e+1$. Additionally, the equilibrium considered is mixed-strategy Nash equilibrium. Considering the property of Nash equilibrium, $\forall \pi_i \in \Delta(\Pi_i), u_i(\pi_i, \sigma^{e+1}_{-i}) \le u_i(\pi^{e+1}_i, \sigma^{e+1}_{-i})$, then we have
	\begin{equation}\label{eq:tt_puls_1}
		u_i(\pi^e_i, \sigma^{e+1}_{-i}) \le u_i(\pi^{e+1}_i, \sigma^{e+1}_{-i}).
	\end{equation}
	Analogously, $\forall \sigma_{-i} \in \Delta^{e+1}_{\Pi^r_{-i}}, u_i(\pi^{e+1}_i, \sigma^{e+1}_{-i}) \le u_i(\pi^{e+1}_i, \sigma_{-i})$, then we have
	\begin{equation}\label{eq:t_puls_1_t}
		u_i(\pi^{e+1}_i, \sigma^{e+1}_{-i}) \le u_i(\pi^{e+1}_i, \sigma^e_{-i}).
	\end{equation}
	Combining Eq.~(\ref{eq:tt_puls_1}) and~(\ref{eq:t_puls_1_t}), we can derive
	\begin{align}
		u_i(\pi^e_i, \sigma^e_{-i}) &- u_i(\pi^e_i, \sigma^{e+1}_{-i}) \\\nonumber
		&\ge u_i(\pi^e_i, \sigma^e_{-i}) - u_i(\pi^{e+1}_i, \sigma^{e+1}_{-i})\\\nonumber
		&\ge u_i(\pi^e_i, \sigma^e_{-i}) - u_i(\pi^{e+1}_i, \sigma^e_{-i}).
	\end{align}
	Since $(\pi^e_i, \sigma^e_{-i})$ is a Nash equilibrium, then there is $u_i(\pi^e_i, \sigma^e_{-i}) - u_i(\pi^{e+1}_i, \sigma^e_{-i}) \ge 0$.
	Thus $u_i(\pi^e_i, \sigma^e_{-i}) - u_i(\pi^e_i, \sigma^{e+1}_{-i}) \ge 0$. Then we give a proof of strict expanding as Corollary~\ref{corollary:strict_expanding}. 
\end{proof}


\begin{corollary}[\emph{Strict Expanding}]\label{corollary:strict_expanding}
	For equilibrium $(\pi^e_i,\sigma^e_{-i})$ and $(\pi^{e+1}_i,\sigma^{e+1}_{-i})$, there is a strict expanding when $u_i(\pi^e_i,\sigma^e_{-i}) > u_i(\pi^e_i,\sigma^{e+1}_{-i})$ and $\Pi^{e,r}_{-i} \subset \Pi^{e+1,r}_{-i}$. 
\end{corollary}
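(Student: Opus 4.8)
The plan is to argue by contradiction, using the best-response (minimality) characterization of the epoch-$e$ equilibrium. Following the statement of Theorem~\ref{theorem:weak_strength_b}, ``strict expanding'' means the enlargement of the restricted set is genuine and active: the epoch-$(e+1)$ meta-strategy $\sigma^{e+1}_{-i}$ must draw on a policy in $\Pi^{r,e+1}_{-i}\setminus\Pi^{r,e}_{-i}$ rather than remaining supported on the old set $\Pi^{r,e}_{-i}$. So the task reduces to showing that the strict payoff gap $u_i(\pi^e_i,\sigma^e_{-i})>u_i(\pi^e_i,\sigma^{e+1}_{-i})$ forces $\sigma^{e+1}_{-i}$ off the old simplex.

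The starting point is the equilibrium condition of Definition~\ref{def:urr} at epoch $e$: since $(\pi^e_i,\sigma^e_{-i})$ is a Nash equilibrium of $\textbf{URR}^e_i$ and $-i$ is the minimizing player of $u_i$ in this zero-sum game, $\sigma^e_{-i}$ minimizes $u_i(\pi^e_i,\cdot)$ over the epoch-$e$ simplex. Hence $u_i(\pi^e_i,\sigma^e_{-i})\le u_i(\pi^e_i,\sigma_{-i})$ for every $\sigma_{-i}\in\Delta^e_{\Pi^r_{-i}}$.

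The crux is then immediate. Suppose, for contradiction, that $\sigma^{e+1}_{-i}$ were supported entirely on the old set, i.e. $\sigma^{e+1}_{-i}\in\Delta^e_{\Pi^r_{-i}}$. Applying the minimality inequality with $\sigma_{-i}=\sigma^{e+1}_{-i}$ yields $u_i(\pi^e_i,\sigma^e_{-i})\le u_i(\pi^e_i,\sigma^{e+1}_{-i})$, contradicting the hypothesis $u_i(\pi^e_i,\sigma^e_{-i})>u_i(\pi^e_i,\sigma^{e+1}_{-i})$. Therefore $\sigma^{e+1}_{-i}$ must place positive weight on some policy outside $\Pi^{r,e}_{-i}$, which is exactly the claimed strict (active) expansion.

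The main obstacle is less the two-line inequality chain than pinning down the precise meaning of the conclusion and keeping the nested domains honest. I would rely on the given containment $\Pi^{e,r}_{-i}\subset\Pi^{e+1,r}_{-i}$ to ensure $\Delta^e_{\Pi^r_{-i}}\subseteq\Delta^{e+1}_{\Pi^r_{-i}}$, so that both $\sigma^e_{-i}$ and any old-support version of $\sigma^{e+1}_{-i}$ are legitimate competitors in the same best-response problem, and so that the difference set is nonempty and the conclusion non-vacuous. I would also be careful to orient the best response correctly---that $-i$ minimizes $u_i$---since the entire contradiction rests on $\sigma^e_{-i}$ being a minimizer, and reversing this direction would collapse the argument.
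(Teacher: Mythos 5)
Your proof is correct, and it takes a genuinely more direct route than the paper's. Both arguments proceed by contradiction from the assumption $\sigma^{e+1}_{-i}\in\Delta^e_{\Pi^r_{-i}}$, but the mechanisms differ: you invoke only the epoch-$e$ equilibrium condition---that $\sigma^e_{-i}$ minimizes $u_i(\pi^e_i,\cdot)$ over $\Delta^e_{\Pi^r_{-i}}$---and test it against $\sigma^{e+1}_{-i}$, obtaining $u_i(\pi^e_i,\sigma^e_{-i})\le u_i(\pi^e_i,\sigma^{e+1}_{-i})$, which immediately contradicts the strict hypothesis. The paper instead observes that, under the same assumption, $(\pi^{e+1}_i,\sigma^{e+1}_{-i})$ is also an equilibrium of the epoch-$e$ game $(\Pi_i,\Pi^{r,e}_{-i})$, invokes uniqueness of the zero-sum value to get $u_i(\pi^e_i,\sigma^e_{-i})=u_i(\pi^{e+1}_i,\sigma^{e+1}_{-i})$, and then chains best-response inequalities. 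Note, however, that the paper's chain as written ends in $u_i(\pi^e_i,\sigma^e_{-i})-u_i(\pi^e_i,\sigma^{e+1}_{-i})\ge 0$, which does \emph{not} contradict the hypothesis $>0$; closing the contradiction requires the reverse inequality ($\le 0$), which is exactly what your minimality step supplies (or, staying on the paper's path, what interchangeability of zero-sum equilibria would give: $(\pi^e_i,\sigma^{e+1}_{-i})$ would itself be an equilibrium of the old game, forcing equality of payoffs). So your argument is not only shorter---one equilibrium inequality instead of value uniqueness plus a best-response chain---it also avoids the direction slip in the paper's write-up. Your explicit care with the orientation of the best response (that $-i$ minimizes $u_i$) and with the embedding $\Delta^e_{\Pi^r_{-i}}\subseteq\Delta^{e+1}_{\Pi^r_{-i}}$ is precisely what makes the one-line contradiction sound.
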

\begin{proof}
	Suppose $\sigma^{e+1}_{-i}\in\Delta^e_{\Pi^r_{-i}}$ when $u_i(\pi^e_i,\sigma^e_{-i}) - u_i(\pi^e_i,\sigma^{e+1}_{-i}) > 0$ holds. As $(\pi^{e+1}_{i},\sigma^{e+1}_{-i})$ is also an equilibrium of $(\Pi_i, \Pi^{e,r}_{-i})$, so there is $u_i(\pi^e_i,\sigma^e_{-i})=u_i(\pi^{e+1}_i,\sigma^{e+1}_{-i})$. Thus we have
	\begin{equation}
		u_i(\pi^e_i,\sigma^e_{-i}) - u_i(\pi^e_i,\sigma^{e+1}_{-i}) \ge u_i(\pi^e,\sigma^e_{-i}) - u_i(\pi^{e+1}, \sigma^{e+1}_{-i}) = 0,
	\end{equation}
	which violates the inequality. So there must be $\sigma^{e+1}_{-i}\in\Delta^{e+1}_{-i}\setminus\Delta^e_{-i}$.
\end{proof}

%
\addcontentsline{toc}{subsection}{Discussion of Exploration Efficiency}
\subsection*{Discussion of Exploration Efficiency}\label{sec:higher_exploration}
We further discuss the exploration efficiency from the theoretical perspective as follows.

\begin{proposition}\label{prop:higher_exploration_b}
EPSRO has higher exploration efficiency than PSRO.
\end{proposition}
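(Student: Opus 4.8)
The plan is to first pin down a workable formalization of \emph{exploration efficiency}, since the statement as phrased is comparative and informal. Following the discussion preceding the proposition, I would measure exploration by the growth of the empirical gamescape $\mathcal{G}$: a newly added best response \emph{expands} the policy space at epoch $e+1$ iff it is not already realizable as a convex mixture of the existing columns of $\mathbf{U}$, i.e. iff $\pi^{e+1}_{-i}\in\Pi^{r,e+1}_{-i}\setminus\Pi^{r,e}_{-i}$ in the sense of \Cref{theorem:weak_strength}. Define exploration efficiency as the number of non-degenerate (strict) expansions produced per epoch, or equivalently the cardinality of $\Pi^r$ needed to reach a target $\textsc{NashConv}$. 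The proof then reduces to comparing, epoch by epoch, whether the best response each algorithm appends is guaranteed to strictly enlarge $\mathcal{G}$.

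Next I would handle the \framework{} side, which is the easy direction because it is underwritten by \Cref{theorem:weak_strength}. Since \framework{} computes $(\pi^e_i,\sigma^e_{-i})$ as a Nash equilibrium of the URR game $\textbf{URR}^e_i$ rather than a best response to a frozen meta-strategy, the inequality $u_i(\pi^e_i,\sigma^e_{-i})-u_i(\pi^e_i,\sigma^{e+1}_{-i})\ge 0$ holds, and whenever it is strict \Cref{corollary:strict_expanding} certifies $\sigma^{e+1}_{-i}\in\Delta^{e+1}_{-i}\setminus\Delta^e_{-i}$, i.e. a genuine expansion of the gamescape. I would argue that strictness is the generic case: equality forces the opponent's new equilibrium mixture to stay inside the previous simplex, which cannot happen until $\Pi^r$ already spans the effective policy space and the game is solved. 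Thus \framework{} contributes a strict expansion at every non-terminal epoch.

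The contrasting PSRO side is where the real content lies. Here the best response is $\textbf{BR}(\sigma_{-i})$ against a \emph{fixed} meta-strategy $\sigma_{-i}$ that is merely a Nash equilibrium of the restricted subgame on $\Pi^r_{-i}$, not an equilibrium against the full opponent response set. I would show that this fixed target admits no monotonicity guarantee: because $\sigma_{-i}$ is optimal only within the restricted game, the computed $\textbf{BR}(\sigma_{-i})$ can already lie in $\mathcal{G}$ (be dominated by, or reproducible as a mixture of, existing columns), yielding a degenerate epoch that adds a policy without enlarging the gamescape. A clean way to make this rigorous is to exhibit a small normal-form instance — along the lines of the non-transitive mixture and symmetric matrix games of \Cref{sec:non_trans} — in which PSRO revisits or redundantly covers a region already in $\mathcal{G}$ while the URR equilibrium is forced outward; the empirical cardinality-versus-$\textsc{NashConv}$ curves then serve as the quantitative witness. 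Combining the two sides gives that, over any common horizon, \framework{} accumulates at least as many strict expansions as PSRO and strictly more whenever PSRO suffers a degenerate epoch, which is exactly the claimed higher exploration efficiency.

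The main obstacle I anticipate is not the \framework{} direction but making the PSRO failure \emph{unconditional} rather than instance-dependent: since for some games PSRO's fixed-meta-strategy best response does happen to expand $\mathcal{G}$, a fully general separation theorem would require either restricting to a class of games where degenerate PSRO epochs provably occur, or settling for a weaker ``$\ge$ with strict inequality on a nonempty class of games'' statement supported by the constructed counterexample. I would therefore expect the honest version of this proposition to be a qualitative/semi-empirical separation, with the gamescape formalization and \Cref{theorem:weak_strength,corollary:strict_expanding} supplying the rigorous half and a worked game instance supplying the strictness.
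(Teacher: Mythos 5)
Your plan and the paper's proof diverge fundamentally, and the divergence is a genuine gap: by formalizing exploration efficiency as the count of strict gamescape expansions, you are forced (as you yourself concede) to make the PSRO side of the argument instance-dependent --- a constructed counterexample plus empirical cardinality curves can only deliver ``at least as good, and strictly better on some class of games,'' not the unconditional statement. The paper does not route the proof through Theorem~\ref{theorem:weak_strength_b} or Corollary~\ref{corollary:strict_expanding} at all. Instead, it formalizes exploration efficiency through exploitability and compares the two algorithms' $\textsc{NashConv}$ directly at each epoch. Writing $\sigma_i,\sigma_{-i}$ for PSRO's restricted-game equilibrium and $\pi^*_i,\pi^*_{-i}$ for the corresponding unrestricted best responses, one has $\textsc{NC}^{psro}=u_i(\pi^*_i,\sigma_{-i})-u_i(\sigma_i,\pi^*_{-i})$; for \framework{}, the URR solutions $(\pi^*_i,\sigma^*_{-i})$ and $(\sigma^*_i,\pi^*_{-i})$ are Nash equilibria of $(\Pi_i,\Pi^r_{-i})$ and $(\Pi^r_i,\Pi_{-i})$ respectively, giving $\textsc{NC}^{epsro}=u_i(\pi^*_i,\sigma^*_{-i})-u_i(\sigma^*_i,\pi^*_{-i})$.

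The key property your plan never exploits is that \framework{}'s meta-strategy $\sigma^*_{-i}$ is itself a best response, within $\Delta(\Pi^r_{-i})$, to the unrestricted policy $\pi^*_i$ (and symmetrically for $\sigma^*_i$ against $\pi^*_{-i}$). This yields the term-by-term inequalities $u_i(\pi^*_i,\sigma^*_{-i}) \le u_i(\pi^*_i,\sigma_{-i})$ and $u_i(\sigma^*_i,\pi^*_{-i}) \ge u_i(\sigma_i,\pi^*_{-i})$, hence $\textsc{NC}^{epsro}\le\textsc{NC}^{psro}$ at every epoch for every game, with equality forcing $(\pi^*_i,\sigma_{-i})$ and $(\sigma_i,\pi^*_{-i})$ to be equilibria, which requires $\Pi^r_i=\Pi_i$ --- i.e., strict inequality before convergence. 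No genericity assumption, no counterexample, and no appeal to experiments is needed. Note also that your \framework{} half leans on the claim that ``strictness is the generic case,'' which is exactly the kind of unquantified assertion the direct exploitability comparison avoids. If you wanted to salvage the gamescape formalization, you would still need an inequality chain of the paper's type to convert ``PSRO can suffer degenerate epochs'' into a claim valid for all games; as written, your proposal proves a strictly weaker proposition than the one stated.
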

\begin{proof}
	Before the proof, we note that $\sigma_i \in \Delta(\Pi^r_i)$ and $\pi_i \in \Delta(\Pi_i)$ for each player $i \in \{1,2\}$. The \textsc{NashConv} (NC) of PSRO is
	\begin{align*}
		\textsc{NC}^{psro} &= \max_{\pi_i \in \Delta(\Pi_i)} u_i(\pi_i,\sigma_{-i}) + \max_{\pi_{-i} \in \Delta(\Pi_{-i})}u_{-i}(\sigma_i,\pi_{-i})\\
		&=\max_{\pi_i \in \Delta(\Pi_i)} u_i(\pi_i,\sigma_{-i}) - \max_{\pi_{-i} \in \Delta(\Pi_{-i})}u_{i}(\sigma_i,\pi_{-i})\\
		&=u_i(\pi^*_i,\sigma_{-i}) - u_i(\sigma_i,\pi^*_{-i}),
	\end{align*}
    and $u_i(\pi^*_i,\sigma_{-i}) \ge u_i(\sigma_i,\sigma_{-i}) \ge u_i(\sigma_i,\pi^*_{-i})$.
    
    At each epoch, \framework{} directly learns a tuple of strategies as an equilibrium for each player, i.e., $(\pi^*_i,\sigma^*_{-i})$ and $(\sigma^*_i,\pi^*_{-i})$ are two NEs of $(\Pi_i,\Pi^r_{-i})$, $(\Pi^r_i,\Pi_{-i})$, respectively. Thus, we can apply the same rule as PSRO to derive \framework{}'s \textsc{NashConv} which satisfies the following inequality:
    \begin{align*}
    	\textsc{NC}^{epsro} &= u_i(\pi^*_i,\sigma^*_{-i}) - u_i(\sigma^*_i,\pi^*_{-i})\\
    	&\le u_i(\pi^*_i, \sigma_{-i}) - u_i(\sigma^*_i,\pi^*_{-i}) \\
    	&\le u_i(\pi^*_i, \sigma_{-i}) - u_i(\sigma_i, \pi^*_{-i}) = \textsc{NC}^{psro}.
    \end{align*}
    Though $\textsc{NC}^{epsro} \le \textsc{NC}^{psro}$, the equation only holds when $(\pi^*_i, \sigma_{-i})$ and $(\sigma_i,\pi^*_{-i})$ are equilibrium because it requires $\Pi^r_i = \Pi_i$ for each player. Thus, $\textsc{NC}^{epsro} < \textsc{NC}^{psro}$ at each epoch before convergence, i.e. \framework{} has higher exploration than PSRO.
\end{proof}
\addcontentsline{toc}{subsection}{Modeling The Learning as MWU}
\subsection*{Modeling The Learning as MWU}
In this paper, we argue that the updates of meta-strategies (Algorithm~\ref{alg:deterministic}) follow the rule of Multiplicative Weights Update (MWU) algorithms. Before the explanation of equivalence, we introduce the definition of MWU as follows.

\begin{definition}[\emph{Multiplicative Weights Update~\cite{freund1999adaptive}}]
Given a game with utility matrix $U$ for the row player, let $c_1,c_2,\dots$ be a sequence of mixed strategies played by the column player. The row player is said to follow MWU if $\pi_{t+1}$ is updated as follows:
\begin{equation}
    \pi_{t+1}(i)=\pi_t(i)\frac{\exp{(\mu_t {a^i}^{\top}Uc_t)}}{\sum^n_{i=1}\pi_t(i)\exp{(\mu_t {a^i}^{\top}Uc_t)}}, \forall i \in [n]
\end{equation}
where $\mu_t > 0$ is a parameter, $\pi_0=[1/n,\dots,1/n]$ and $n$ is the number of pure strategies (a.k.a. experts).
\end{definition}

Then we prove the equivalence in the following Lemma.

\begin{lemma}\label{lemma:meta_update_b}
	Let $\beta=\left[\beta_1,\dots,\beta_n\right]$ be the weights vector of the meta strategy played by player $-i$. The player is said to follow MWU if $\sigma_{-i,\beta'}$ is updated as follows
	\begin{equation}
	    \sigma_{-i,\beta'}(j) = \frac{\exp{(\beta_j - g_j)}}{\sum^n_{j=1}\exp{\left(\beta_j - g_j\right)}},
	\end{equation}
	where $g_j$ is the estimated gradients to item $\beta_j$.
\end{lemma}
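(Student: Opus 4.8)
The plan is to show that the stated Boltzmann update is literally an instance of MWU by first casting it into the multiplicative normal form that defines MWU, and then matching the exponents term by term. I would begin from the parameterization $\sigma_{-i,\beta}(j)=\exp(\beta_j)/\sum_m\exp(\beta_m)$ and substitute the updated weights $\beta'_j=\beta_j-g_j$ back into this same softmax.

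The only non-routine algebraic step is to factor $\exp(\beta_j-g_j)=\exp(\beta_j)\exp(-g_j)$ in both numerator and denominator and divide through by the old normalizer $Z=\sum_m\exp(\beta_m)$. This converts the additive shift of the log-weights into a multiplicative reweighting of the current distribution:
\begin{equation*}
    \sigma_{-i,\beta'}(j)=\frac{\sigma_{-i,\beta}(j)\,\exp(-g_j)}{\sum_m\sigma_{-i,\beta}(m)\,\exp(-g_m)},
\end{equation*}
which already has the defining MWU shape $\pi_{t+1}(j)\propto\pi_t(j)\exp(\cdot)$ with per-coordinate multiplier $\exp(-g_j)$. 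It then remains to identify $-g_j$ with the MWU exponent $\mu_t\,{a^j}^{\top}Uc_t$. Here player $-i$ is the row player of its own sub-game and the current best response $\pi_{i,\theta}$ plays the role of the column strategy $c_t$, so ${a^j}^{\top}Uc_t$ is precisely the expected payoff of support $j$ against $\pi_{i,\theta}$, namely the per-support average return $\bar r^j$ computed in Algorithm~\ref{alg:deterministic}. Taking the estimated gradient to be $g_j=-\mu_t\bar r^j=\mu_t\,l^j_{-i}$ makes the two updates coincide coordinate by coordinate, and the uniform start $\beta=\mathbf 0$ reproduces $\pi_0=[1/n,\dots,1/n]$, so the whole trajectory matches.

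The subtle point, and the step I would treat most carefully, is the precise meaning of ``the estimated gradient $g_j$.'' The equivalence requires the exponentiated-gradient (mirror-descent) reading in which $g_j$ is the linearized per-expert loss $l^j_{-i}=-\bar r^j$ scaled by the learning rate, rather than the full gradient of the softmax-composed objective $\partial_{\beta_j}\sum_m\sigma_{-i}(m)\bar r^m=\sigma_{-i}(j)(\bar r^j-\bar u)$; the latter carries an extra $\sigma_{-i}(j)$ factor and a centering term $\bar u$ and would not reproduce MWU verbatim. Making this correspondence explicit, together with the observation that the softmax is invariant to adding a common constant to all $\beta_m$ so that only exponent differences matter, is what closes the equivalence.
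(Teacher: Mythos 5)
Your proposal is correct and matches the paper's own proof in its essential step: the paper likewise factors $\exp(\beta_j-g_j)=\exp(\beta_j)\exp(-g_j)$ and renormalizes by the old partition function to obtain $\sigma_{\beta'}(j)=\sigma_{\beta}(j)\exp(-g_j)\big/\sum_m\sigma_{\beta}(m)\exp(-g_m)$, i.e., exactly the multiplicative-weights form you derive. Your closing caveat is in fact sharper than the paper's treatment, which simply declares $g_j=\eta\nabla_{\beta_j}u_{-i}(\pi_i,\sigma_{-i,\beta})$ without observing that the literal gradient through the softmax carries the extra $\sigma_{\beta}(j)$ factor and centering term you point out, so the identification with the MWU exponent holds only under the linearized per-expert-loss reading you make explicit.
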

\begin{proof}
Since the learning for each opponent $-i$ follows the same rule, we mitigate the index $-i$ here.
\begin{align*}
    \sigma_{\beta'}(j) &= \frac{\exp{(\beta_j - g_j)}}{\sum^n_{i=1}\exp{(\beta_i - g_i)}} = \exp{\beta_j}\frac{\exp{- g_j}}{\sum^n_{i=1}\exp{\beta_i}\cdot\exp{ -g_i}} \\
    &= \frac{\exp{\beta_j}}{\sum^n_{i=1}\exp{\beta_i}}\cdot\frac{\exp{-g_j}}{\sum^n_{i=1}\frac{\exp{\beta_i}}{\sum^n_{k=1}\exp{\beta_k}}\exp{-g_i}} \\
    &= \sigma_{\beta}(j) \cdot \frac{\exp{-g_j}}{\sum^n_{i=1}\sigma_{\beta}(i)\exp{-g_i}}
\end{align*}
where $g_j=\eta\nabla_{\beta_j}u_{-i}(\pi_i, \sigma_{-i,\beta})$, $\eta$ is the learning rate.
\end{proof}

\addcontentsline{toc}{subsection}{A Proof of Theorem~\ref{theorem:warm_bound}}
\subsection*{A Proof of Theorem~\ref{theorem:warm_bound}}

As introduced in the main content, we consider a warm-start technique to reduce the re-training cost of meta-strategies and save training steps. To find a feasible meta-strategy $\bar{\sigma}^{'}_{-i}$ for warm-starting, we need to ensure that $\bar{\sigma}^{'}_{-i}$ follows two conditions proposed by Noam Brown et al. (2016)~\cite{brown2016strategy}, i.e. (1) cannot violate the bound of regret of the \textbf{latest epoch} $e+1$ and (2) cannot violate the Nash equilibrium of the \textbf{latest epoch} $e+1$. In our paper, we choose the average strategies $\bar{\sigma}^{'}_{-i}$ that satisfies $u^{e+1}_i(\bar{\pi}_i, \bar{\sigma}^{'}_{-i})=u^e_i(\bar{\pi}_i, \bar{\sigma}_{-i})$. Note that the utility function of epoch $e+1$ is different from the one of epoch $e$ since the dimension of opponent strategy is changed with a new introduced best response $\pi^{e+1}_{-i}$, i.e. $u^e_{i}(\bar{\pi}_i,\cdot) \in \mathbb{R}^k$ while $u^{e+1}_{i}(\bar{\pi}_i,\cdot) \in \mathbb{R}^{k+1}$.

\begin{theorem}\label{theorem:warm_bound_b}
     Suppose a substitute policy of $\bar{\sigma}_{-i} \in \Delta^e_{\Pi^r_{-i}}$ is $\bar{\sigma}^{'}_{-i} \in \Delta^{e+1}_{\Pi^r_{-i}}$, and it satisfies $u^{e+1}_i(\bar{\pi}_i, \bar{\sigma}^{'}_{-i})=u^e_i(\bar{\pi}_i, \bar{\sigma}_{-i})$, we have
     \begin{equation}\label{eq:warm_start_equation}
         \max_{\sigma_{-i}\in\Delta^{e+1}_{\Pi^r_{-i}}}\sum^T_{t=1}\left( u^{e+1}_{-i}(\pi^t_i, \sigma_{-i}) - u^{e+1}_{-i}(\pi^t_i, \bar{\sigma}^{'}_{-i}) \right) \le \epsilon_{-i},
     \end{equation}
     where $\epsilon_{-i}$ is the regret bound of epoch $e+1$, $\pi^t_i \in \Delta(\Pi_i)$.
\end{theorem}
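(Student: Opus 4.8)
The plan is to bound the cumulative regret of the warm-started substitute $\bar{\sigma}^{'}_{-i}$ over the epoch-$e$ play sequence $\pi^1_i,\dots,\pi^T_i$, measured under the epoch-$(e+1)$ payoff, by decomposing the comparator simplex $\Delta^{e+1}_{\Pi^r_{-i}}$ into the inherited simplex $\Delta^{e}_{\Pi^r_{-i}}$ and the single newly added vertex $\pi^{e+1}_{-i}$. Since the subtracted term $\sum_t u^{e+1}_{-i}(\pi^t_i,\bar{\sigma}^{'}_{-i})$ does not depend on the comparator, I would first pull the maximum onto the first summand only, and then use that $u^{e+1}_{-i}(\pi^t_i,\cdot)$ is linear in its second argument, so the maximizer over the simplex is attained at a vertex. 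This reduces the left-hand side to
\begin{equation*}
\max\left\{ \max_{\sigma_{-i}\in\Delta^{e}_{\Pi^r_{-i}}}\sum^T_{t=1} u^{e}_{-i}(\pi^t_i,\sigma_{-i}),\ \sum^T_{t=1} u^{e+1}_{-i}(\pi^t_i,\pi^{e+1}_{-i}) \right\} - \sum^T_{t=1} u^{e+1}_{-i}(\pi^t_i,\bar{\sigma}^{'}_{-i}),
\end{equation*}
where on $\Delta^{e}_{\Pi^r_{-i}}$ the payoffs are unchanged ($u^{e+1}_{-i}=u^{e}_{-i}$, since only the column indexed by $\pi^{e+1}_{-i}$ is new).

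Next I would handle the two branches separately. For the inherited branch, I would convert the play sequence to the regret-weighted average of Theorem~\ref{theorem:averaging} and invoke the defining matching condition $u^{e+1}_i(\bar{\pi}_i,\bar{\sigma}^{'}_{-i})=u^{e}_i(\bar{\pi}_i,\bar{\sigma}_{-i})$; because the game is zero-sum this identity carries over to the $-i$ payoff, so the substitute's cumulative payoff against the sequence coincides with that of the epoch-$e$ average $\bar{\sigma}_{-i}$. Subtracting this from the old-comparator term then leaves exactly the epoch-$e$ cumulative regret of $\bar{\sigma}_{-i}$, which the epoch-$e$ no-regret guarantee bounds; and because the bound of Theorem~\ref{theorem:regret_meta_warm} is increasing in $k=|\Pi^r_{-i}|$, the epoch-$(e+1)$ bound $\epsilon_{-i}$ dominates it, so the inherited branch is $\le \epsilon_{-i}$.

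The main obstacle is the new-vertex branch, $\sum_t\big(u^{e+1}_{-i}(\pi^t_i,\pi^{e+1}_{-i})-u^{e+1}_{-i}(\pi^t_i,\bar{\sigma}^{'}_{-i})\big)$, because $\pi^{e+1}_{-i}$ is a genuinely new comparator with no counterpart at epoch $e$, and this is precisely the direction along which a naive (cold-start) regret initialization would blow up. Here I would again pass to the averages, reducing this branch to $u^{e+1}_{-i}(\bar{\pi}_i,\pi^{e+1}_{-i})-u^{e+1}_{-i}(\bar{\pi}_i,\bar{\sigma}^{'}_{-i})$ up to the averaging weights, and then use the equilibrium structure of the URR solve: $\pi^{e+1}_{-i}=\textbf{BR}(\bar{\pi}_i)$, while $\bar{\sigma}^{'}_{-i}$ is, through the $\beta$-initialization of Lemma~\ref{lemma:beta_opt}, a near-best response to $\bar{\pi}_i$ in the epoch-$(e+1)$ game, so the best response cannot beat the substitute by more than the equilibrium gap $\epsilon_{-i}$. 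Combining the two branch bounds through the outer maximum gives the claimed inequality; making the new-vertex estimate quantitative, and in particular checking that the weighted transfer between the plain sum over the sequence and the loss-weighted average does not inflate the constant beyond $\epsilon_{-i}$, is the step I expect to require the most care.
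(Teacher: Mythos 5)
Your vertex decomposition of the comparator simplex $\Delta^{e+1}_{\Pi^r_{-i}}$ into the inherited simplex and the new vertex is a reasonable start, and your inherited branch is in the spirit of the substitute-regret machinery the paper itself relies on. But the new-vertex branch --- which you correctly flag as the crux --- is a genuine gap, and the fix you sketch does not work. First, $\pi^{e+1}_{-i}$ is not $\textbf{BR}(\bar{\pi}_i)$: in \framework{} it is produced by solving the \emph{other} player's URR game $(\Pi_{-i},\Pi^r_i)$, so it best-responds to a meta-strategy in $\Delta(\Pi^{r,e}_i)$, not to the unrestricted average $\bar{\pi}_i$ of player $i$'s iterates. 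Second, and more fundamentally, nothing makes $\bar{\sigma}^{'}_{-i}$ a near-best response to $\bar{\pi}_i$ in the epoch-$(e+1)$ game: Lemma~\ref{lemma:beta_opt} only enforces the matching condition $u^{e+1}_i(\bar{\pi}_i,\bar{\sigma}^{'}_{-i})=u^e_i(\bar{\pi}_i,\bar{\sigma}_{-i})$, which pins the substitute to the \emph{old} equilibrium value. The quantity your branch must bound, $u^{e+1}_{-i}(\bar{\pi}_i,\pi^{e+1}_{-i})-u^{e+1}_{-i}(\bar{\pi}_i,\bar{\sigma}^{'}_{-i})$, is exactly the improvement the newly added policy achieves against $\bar{\pi}_i$ over that old value; by Theorem~\ref{theorem:weak_strength} and Corollary~\ref{corollary:strict_expanding} this is nonnegative, and in the strict-expansion case the algorithm is designed to produce it can be as large as the payoff range --- it is not bounded by $\epsilon_{-i}$ under the stated hypotheses. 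So the outer maximum cannot be closed along your route without an additional assumption (or a choice of warm-start horizon $T$) that explicitly controls the regret against the new vertex.

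There is also a secondary gap in both branches: your transfers between $\sum^T_{t=1} u(\pi^t_i,\cdot)$ and $T\,u(\bar{\pi}_i,\cdot)$ require $\bar{\pi}_i$ to be the uniform average of the iterates, whereas in Theorem~\ref{theorem:averaging} the averages are loss-weighted; moreover the epoch-$e$ no-regret guarantee bounds regret relative to the \emph{played} sequence $\sigma^t_{-i}$, not relative to the average $\bar{\sigma}_{-i}$ (the cross terms $u(\pi^t_i,\sigma^s_{-i})$ with $s\neq t$ do not cancel), so ``the epoch-$e$ cumulative regret of $\bar{\sigma}_{-i}$'' is itself a substitute regret rather than something the epoch-$e$ bound directly controls. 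The paper's proof sidesteps both difficulties by not decomposing at all: it invokes Lemmas~2 and~3 of \cite{brown2016strategy}, which say that substitute regret grows at the same rate as true regret and can replace it in the equilibrium argument, writes the combined inequality for the warm-started $T$ iterations plus $T'$ further iterations, and then verifies the equilibrium condition for $(\bar{\pi}_i,\bar{\sigma}^{'}_{-i})$; the effect of the newly added policy is absorbed into that framework rather than bounded by a best-response gap.
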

\begin{proof}
	Noam Brown et al. have proved in~\cite{brown2016strategy} that (1) \emph{the growth of substitute regret has the same bound as the growth rate of normal regret (\cite[Lemma.~2]{brown2016strategy})} and (2) \emph{we can use a substitute strategy regret to prove convergence to a Nash Equilibrium just as we could use normal regret (\cite[Lemma.~3]{brown2016strategy})}.
	Thus, if we use a substitute strategy $(\bar{\pi}_i,\bar{\sigma}^{'}_{-i})$ that warm-start to $T$ iterations as Eq.~\ref{eq:warm_start_equation}, and play more $T' \ge 0$ iterations according to Algorithm~\ref{alg:deterministic}, then we have:
	\begin{equation}\label{eq:substitute_regret}
		\max_{\sigma_{-i}\in\Delta^{e+1}_{\Pi^r_{-i}}} \left( T\left(u^{e+1}_{-i}(\bar{\pi}_i, \sigma_{-i}) - u^{e+1}_{-i}(\bar{\pi}_i,\bar{\sigma}^{'}_{-i})\right) + \sum^{T'}_{t'=1}\left(u^{e+1}_{-i}(\pi^{t'}_i,\sigma_{-i})-u^{e+1}_{-i}(\pi^{t'}_i,\sigma^{t'}_{-i})\right)\right) \le \epsilon_{-i}.
	\end{equation}
    As the summation of regret from $T+1$ to $T+T'$ is non-negative, then we know Eq.~\ref{eq:warm_start_equation} still holds.
    Since $\max_{\sigma_{-i}}u^{e+1}_{-i}(\bar{\pi}_i,\sigma_{-i}) \ge u^{e+1}_{-i}(\bar{\pi}_i,\bar{\sigma}^{'}_{-i})$ and $\max_{\pi_{i}}u^{e+1}_{i}(\pi_i, \bar{\sigma}_{-i}) \ge u^{e+1}_{i}(\bar{\pi}_i, \bar{\sigma}_{-i})$, so the substitute strategies $(\bar{\pi}_i,\bar{\sigma}^{'}_{-i})$ satisfies the convergence of $(\epsilon_i,\epsilon_{-i})$-Nash equilibrium of epoch $e+1$ (See Theorem~\ref{theorem:averaging} or~\ref{theorem:noam_theorem_2}). Then $(\bar{\pi}_i,\bar{\sigma}^{'}_{-i})$ are feasible warm-start strategies.
\end{proof}
\addcontentsline{toc}{subsection}{A Proof of Lemma~\ref{lemma:beta_opt}}
\subsection*{A Proof of Lemma~\ref{lemma:beta_opt}}

As for the exact bound of regret, we will give a proof in Theorem~\ref{theorem:warm_bound_b}. Note that the utility function of epoch $e+1$ is different from the one of epoch $e$ since the dimension of opponent strategy is changed with a new introduced best response $\pi^{e+1}_{-i}$, i.e. $u^e_{-i}(\bar{\pi}_i,\cdot) \in \mathbb{R}^k$ while $u^{e+1}_{-i}(\bar{\pi}_i,\cdot) \in \mathbb{R}^{k+1}$. Thus, before the optimization of $\beta$, we need to slightly run a simulation to estimate the item $u_{-i}(\bar{\pi}_i,\pi^{e+1}_{-i})$ in $u^{e+1}_{-i}$.

\begin{lemma}\label{lemma:beta_opt_b}
	Let $k=|\Pi^{r,e+1}_{-i}|$, $\bar{\sigma}^{'}_{-i}$ be parameterized by $\beta_{-i}=\left[\beta_{-i,1},\beta_{-i,2},\dots,\beta_{-i,k}\right]$, $\bar{\sigma}^{'}_{-i}(k)$ the $k$-th item of $\bar{\sigma}^{'}_{-i}$, $x=[\bar{\sigma}^{'}_{-i}(1),\dots,\bar{\sigma}^{'}_{-i}(k - 1)]^{\top}$, $\bar{l}^e_{-i}$ is $-i$'s average loss vector to $\bar{\pi}_i$ at epoch $e$. Then a feasible initial of $\beta^{e+1}_{-i}$ could satisfy
	\begin{equation}\label{eq:beta_opt}
		\beta^{e+1}_{-i} = \arg\min_{\beta_{-i}}\parallel (x-\bar{\sigma}_{-i})^{\top} \bar{l}^e_{-i} - \bar{\sigma}^{'}_{-i}(k) u_{-i}(\bar{\pi}_i,\pi^{e+1}_{-i})\parallel_2.
	\end{equation}
\end{lemma}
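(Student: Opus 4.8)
The plan is to show that the scalar inside the norm of Eq.~(\ref{eq:beta_opt}) is exactly the residual of the warm-start constraint $u^{e+1}_i(\bar{\pi}_i,\bar{\sigma}^{'}_{-i})=u^e_i(\bar{\pi}_i,\bar{\sigma}_{-i})$ demanded by Theorem~\ref{theorem:warm_bound_b}, so that a $\beta^{e+1}_{-i}$ driving this residual to zero yields a Boltzmann meta-strategy $\bar{\sigma}^{'}_{-i}$ satisfying that constraint and hence, via Theorem~\ref{theorem:averaging}, a feasible warm-start. First I would expand both utilities as expectations over their restricted policy sets. Writing $v_j := u_i(\bar{\pi}_i,\pi^j_{-i})$, the epoch-$e$ utility is $u^e_i(\bar{\pi}_i,\bar{\sigma}_{-i})=\sum_{j=1}^{k-1}\bar{\sigma}_{-i}(j)\,v_j$ over the $k-1$ supports present at epoch $e$, whereas $u^{e+1}_i(\bar{\pi}_i,\bar{\sigma}^{'}_{-i})=\sum_{j=1}^{k-1}\bar{\sigma}^{'}_{-i}(j)\,v_j+\bar{\sigma}^{'}_{-i}(k)\,v_k$, where the first $k-1$ terms reuse the same $v_j$ and $v_k=u_i(\bar{\pi}_i,\pi^{e+1}_{-i})$ is the contribution of the newly introduced support $\pi^{e+1}_{-i}$.

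Next I would identify the shared per-support utilities with the average loss vector $\bar{l}^e_{-i}$ carried over from epoch $e$. Under the loss convention of Algorithm~\ref{alg:deterministic} together with the zero-sum relation $u_{-i}=-u_i$, the components of $\bar{l}^e_{-i}$ coincide with the utilities $v_j$ of the $k-1$ inherited supports, so that, with $x=[\bar{\sigma}^{'}_{-i}(1),\dots,\bar{\sigma}^{'}_{-i}(k-1)]^{\top}$, the inherited contribution to the epoch-$(e+1)$ utility is $x^{\top}\bar{l}^e_{-i}$ and the epoch-$e$ utility is $\bar{\sigma}_{-i}^{\top}\bar{l}^e_{-i}$. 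The new term is handled separately as $\bar{\sigma}^{'}_{-i}(k)\,u_{-i}(\bar{\pi}_i,\pi^{e+1}_{-i})$, which must be estimated by a short simulation because $u^{e+1}$ lives in one more dimension than $u^e$ (as noted before the lemma and in Algorithm~\ref{alg:warm_start}).

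Substituting and rearranging then collapses the equality $u^{e+1}_i(\bar{\pi}_i,\bar{\sigma}^{'}_{-i})-u^e_i(\bar{\pi}_i,\bar{\sigma}_{-i})=0$ into exactly $(x-\bar{\sigma}_{-i})^{\top}\bar{l}^e_{-i}-\bar{\sigma}^{'}_{-i}(k)\,u_{-i}(\bar{\pi}_i,\pi^{e+1}_{-i})=0$, which is the argument of the norm in Eq.~(\ref{eq:beta_opt}). The last step is to explain the relaxation from an equation to an $\arg\min$: since $\bar{\sigma}^{'}_{-i}$ is pinned to the probability simplex through the Boltzmann parameterization, the entries of $x$ and $\bar{\sigma}^{'}_{-i}(k)$ are coupled by normalization and cannot be chosen freely, so an exact root need not exist. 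Minimizing the $\ell_2$ residual over $\beta_{-i}$ is therefore the natural surrogate whose zero set coincides with the warm-start condition; whenever the minimum value is zero the condition holds exactly and Theorem~\ref{theorem:warm_bound_b} certifies the resulting $\beta^{e+1}_{-i}$ as a feasible initialization.

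I expect the main obstacle to be the sign bookkeeping in the identification of $\bar{l}^e_{-i}$ with the utilities $v_j$, since the loss recorded for $-i$ is the negated return while player $i$'s utility is the return itself, so the convention in Algorithm~\ref{alg:deterministic} must be tracked carefully for the residual to match Eq.~(\ref{eq:beta_opt}) with the correct signs. The more conceptual point to make precise is that the simplex/softmax constraint is exactly the obstruction to exact solvability, which is what justifies replacing the equality by the $\ell_2$ minimization rather than claiming an exact closed-form $\beta^{e+1}_{-i}$.
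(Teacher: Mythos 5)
Your proposal is correct and follows essentially the same route as the paper's proof: the paper likewise expands the epoch-$e$ and epoch-$(e+1)$ utilities in terms of the carried-over average loss vector $\bar{l}^e_{-i}$ and the separately estimated entry $u_{-i}(\bar{\pi}_i,\pi^{e+1}_{-i})$, collapses the residual $\xi = u^e_{-i}(\bar{\pi}_i,\bar{\sigma}_{-i})-u^{e+1}_{-i}(\bar{\pi}_i,\bar{\sigma}^{'}_{-i})$ into exactly the argument of the norm, and concludes $\beta^{e+1}_{-i}=\arg\min_{\beta_{-i}}\parallel\xi\parallel_2$ (the only difference being that the paper bookkeeps with $u_{-i}$ and $\bar{l}^e_{-i}$ directly, while you track $u_i$ and convert via the zero-sum relation, which is equivalent). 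One minor caveat: your claim that an exact root ``need not exist'' is slightly at odds with the paper's own remark that a degenerate exact solution $\bar{\sigma}^{'}_{-i}(k)=0,\ x=\bar{\sigma}_{-i}$ always exists, which is precisely why the paper adds an entropy regularizer $-\lambda H(\bar{\sigma}^{'}_{-i})$ to the objective rather than because zero is unattainable.
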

\begin{proof}
	Considering the error between $\xi = u^e_{-i}(\bar{\pi}_i,\bar{\sigma}_{-i})-u^{e+1}_{-i}(\bar{\pi}_i,\bar{\sigma}^{'}_{-i})$. It can be rewritten as
	\begin{align*}
		\xi &= -\bar{\sigma}^{\top}_{-i}\bar{l}^e_{-i} - \left(-x^{\top}\bar{l}^e_{-i} + \bar{\sigma}^{'}_{-i}(k)u_{-i}(\bar{\pi}_i,\pi^{e+1}_{-i}) \right)\\
		&=(x-\bar{\sigma}_{-i})^{\top}\bar{l}^e_{-i} - \bar{\sigma}^{'}_{-i}(k)u_{-i}(\bar{\pi}_i,\pi^{e+1}_{-i}).
	\end{align*}
    Thus, we can optimize $\beta_{-i}$ by minimizing the error $\xi$ as $\beta^{e+1}_{-i}=\arg\min_{\beta_{-i}}\parallel \xi \parallel_2$
\end{proof}

We compute the initial of $\sigma^{e+1}_{-i}$ as Lemma~\ref{lemma:beta_opt_b}. Note that there is another solution as $\bar{\sigma}^{'}_{-i}(k)=0, \bar{\sigma}_{-i}=x$. In that case, the newly introduced policy $\pi^{e+1}_{-i}$ will have no chance to be sampled, causing biased or even wrong solution. Thus, we consider adding a regularization to Eq.~\ref{eq:beta_opt} that maximizes the entropy of $\bar{\sigma}^{'}_{-i}$, i.e. $H(\bar{\sigma}^{'}_{-i}) = -\bar{\sigma}^{'\top}_{-i}\log{\bar{\sigma}^{'}_{-i}}$ and $\beta^{e+1}_{-i}=\arg\min_{\beta_{-i}}\parallel \xi \parallel_2 - \lambda H(\bar{\sigma}^{'}_{-i})$, where $\lambda > 0$.

\addcontentsline{toc}{subsection}{A Proof of Theorem~\ref{theorem:regret_meta_warm}}
\subsection*{A Proof of Theorem~\ref{theorem:regret_meta_warm}}

\begin{theorem}[\emph{Regret Bound of \framework{}}]\label{theorem:regret_meta_warm_b}
Let $l_1,l_2,\dots,l_T$ be a sequence of loss vectors player by an adversary, and $\langle \cdot,\cdot \rangle$ be the dot product, then Algorithm~\ref{alg:deterministic} is a no-regret algorithm with
	\begin{equation*}
		\frac{1}{T}\left(\sum^T_{t=1}\langle\sigma_t,l_t\rangle-\min_{\sigma \in \Delta(\Pi^r_t)}\sum^T_{t=1}\langle\sigma,l_t\rangle\right)\le \frac{\sqrt{\log{[(k+1)k/2]}}}{\sqrt{2T}},\text{ where }k\text{ is the size of }\Pi^r.
	\end{equation*}
\end{theorem}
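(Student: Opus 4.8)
The plan is to cast the meta-strategy dynamics as a single run of Hedge/MWU over a carefully counted expert set and to read off the bound from the classical Hedge guarantee. First I would invoke Lemma~\ref{lemma:meta_update_b}, which shows that each update of $\sigma_{-i,\beta}$ in Algorithm~\ref{alg:deterministic} is exactly a multiplicative-weights step with exponentiated gradients $g_j = \eta\nabla_{\beta_j}u_{-i}$; this identifies the iterates $\sigma_1,\sigma_2,\dots$ with Hedge run against the loss sequence $l_1,\dots,l_T$. I would then recall the standard Hedge regret bound: for a ground set of $n$ experts and bounded losses, tuning the step size gives $\sum_{t=1}^T\langle\sigma_t,l_t\rangle-\min_{\sigma}\sum_{t=1}^T\langle\sigma,l_t\rangle \le \sqrt{(T/2)\log n}$, where the $\log n$ term is precisely the relative-entropy cost $-\log(1/n)$ of the uniform initialization over $n$ experts.

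The heart of the argument is to determine the effective value of $n$. Because $\Pi^r_t$ grows by one policy per epoch, the natural object is not a fixed $k$-expert simplex but the whole unrolled trajectory in which policy $j$ is active during epochs $j,j+1,\dots,k$. Counting these (policy, epoch) slots gives $\sum_{j=1}^k j = (k^2+k)/2 = \binom{k+1}{2}$ experts in total. To glue the per-epoch Hedge runs into one trajectory carrying this accumulated initialization cost, I would use the warm-start guarantees: Theorem~\ref{theorem:warm_bound} shows the substitute strategy $\bar\sigma'_{-i}$ preserves the regret bound as $\Pi^{r,e}_{-i}$ expands to $\Pi^{r,e+1}_{-i}$, and Lemma~\ref{lemma:beta_opt} supplies the initialization of the new support. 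Treating the process as Hedge over $(k^2+k)/2$ experts then yields $\frac{1}{T}R_T \le \sqrt{\log[(k^2+k)/2]/(2T)}$, which is the claim.

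The step I expect to be the main obstacle is the gluing itself: showing that the warm-start causes the relative-entropy potential to accumulate additively over the triangular set of supports, so that the total initialization cost is $\log[(k^2+k)/2]$ rather than the $\sum_{j=1}^k\log j = \log(k!)$ one would pay by restarting uniformly at each epoch. Concretely, I would verify (i) that the substitute-regret bound of Theorem~\ref{theorem:warm_bound} lets the regret of epoch $e$ be charged into the continuing trajectory of epoch $e+1$ without an extra restart penalty, and (ii) that the entropy regularizer in Lemma~\ref{lemma:beta_opt} keeps every newly added support reachable, ruling out the degenerate $\bar\sigma'_{-i}(k)=0$ initialization, so that the best fixed comparator over the final $k$ policies is always covered by the potential bookkeeping.
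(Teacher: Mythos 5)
Your route is genuinely different from the paper's, and it contains a gap at exactly the step you flag as the ``main obstacle.'' The paper's proof never unrolls the process into a single Hedge run and never invokes the warm-start machinery: following the Online Double Oracle analysis of~\cite{le2021online}, it partitions the horizon into epochs $T_1,\dots,T_k$ on which the restricted set $\Pi^{r,i}$ is unchanged (with $|\Pi^{r,i}|=i$), applies the per-epoch MWU bound $\sqrt{|T_i|\log i/2}$, lower-bounds the sum of per-epoch regrets by the regret against a single fixed comparator (replacing the sum of per-epoch minima by the minimum of the total loss), and then merges $\sum_{i=1}^k\sqrt{T\log i}$ into $\sqrt{T\log\bigl(\sum_{i=1}^k i\bigr)}=\sqrt{T\log[(k+1)k/2]}$ by a concavity argument. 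Theorem~\ref{theorem:warm_bound} and Lemma~\ref{lemma:beta_opt} play no role in that proof; they concern the algorithm's practical re-training cost, not the regret accounting. So the triangular number $(k^2+k)/2$ enters the paper's bound through the merging of per-epoch $\log i$ terms, not through a count of $(\text{policy},\text{epoch})$ experts.

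The gap in your proposal is that ``treating the process as Hedge over $(k^2+k)/2$ experts then yields the bound'' is the entire content of the theorem, and the reduction fails as stated for two reasons. First, comparator mismatch: your unrolled experts are sleeping --- the slot for policy $j$ receives losses only during epochs $j,\dots,k$ --- and the standard Hedge guarantee you quote controls regret only against the best \emph{single} slot, i.e.\ one policy on one epoch's losses, whereas the theorem requires regret against a fixed $\sigma\in\Delta(\Pi^r)$ accumulated over all $T$ rounds. Bridging that requires either the paper's per-epoch decomposition or a dedicated growing/sleeping-expert analysis; standard Hedge gives nothing for that comparator. Second, non-uniform initialization: the $\log n$ in the Hedge bound is the relative-entropy cost of the \emph{uniform} prior, but after a warm start via Lemma~\ref{lemma:beta_opt} the prior on the enlarged support is not uniform, so you cannot read off $\log[(k^2+k)/2]$; you would need a KL-to-initialization form of the regret bound together with a telescoping argument showing the per-transition KL costs sum to at most $\log[(k+1)k/2]$. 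Theorem~\ref{theorem:warm_bound} does not supply this: it transfers a single scalar regret value $\epsilon_{-i}$ across the expansion, not an entropy or potential bookkeeping. Your instinct that something must beat the naive $\log(k!)$ restart cost is sound --- but note that in the paper this is handled (debatably) by the concavity merge, not by the warm start, so verifying items (i) and (ii) in your plan would still leave you to prove the additive-potential claim from scratch.
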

\begin{proof}
	Note that the loss vector is the opposite of utility at each iteration, i.e., $l_t=-u_t$. The proof follows the approach of Le Cong Dinh et al. (2021)~\cite{le2021online}. Denote $|T_1|, \dots, |T_k|$ be the learning horizon of each epoch. During each $|T_j|$, the restricted policy set is denoted as unchanged $\Pi^{r,i}$. In the case of finite $k$, we have:
	\begin{equation}
			\sum^k_{i=1}|T_i|=T.
		\end{equation}
	For each training epoch $i$, following the regret bound of MWU, we have:
	\begin{equation}
			\sum^{|T^{i+1}|}_{t=|T^i|+1} \langle \sigma_t, l_t \rangle - \min_{\sigma\in\Delta(\Pi^r_i)}\sum^{|T^{i+1}|}_{t=|T^i|+1}\langle \sigma, l_t \rangle \le \sqrt{\frac{|T_i|\log{|\Pi^{r,i}|}}{2}}, \text{ where }|T^i|=\sum^{i}_{j=1}|T_{j}|.
		\end{equation}
	Considering all time horizon, for $i=1,\dots,k$, we have:
	\begin{align}
			\sum^k_{i=1}\sqrt{\frac{|T_i|\log{|\Pi^{r,i}|}}{2}} &\ge \sum^T_{t=1}\langle\sigma_t,l_t\rangle - \sum^k_{i=1}\min_{\sigma\in\Delta(\Pi^{r,i})}\sum^{|T^{i+1}|}_{t=|T^i|+1}\langle \sigma, l_t \rangle \\
			&\ge \sum^T_{t=1}\langle \sigma_t, l_t \rangle - \min_{\sigma\in\Delta(\Pi^r)}\sum^k_{i=1}\sum^{|T^{i+1}|}_{t=|T^i|+1}\langle \sigma, l_t \rangle \\
			&=\sum^T_{t=1}\langle \sigma_t, l_t \rangle - \min_{\sigma \in \Delta(\Pi^r)} \sum^T_{t=1}\langle \sigma, l_t \rangle.
		\end{align}
	Note that $i=|\Pi^{r,i}|$, then $\sum^k_{i=1}\sqrt{|T_i|\log{|\Pi^{r,i}|}} \le \sum^k_{i=1}\sqrt{T\log(i)}$. As the sum of logarithms is concave, so we have $\sum^k_{i=1}\sqrt{T\log{(i)}} \le \sqrt{T\log{\left(\sum^k_{i=1}i\right)}}=\sqrt{T\log{[(k+1)k/2]}} $
	then we have 
	\begin{equation*}
		\frac{\sqrt{\log{[(k+1)k/2]}}}{\sqrt{2T}} \ge \frac{1}{T}\left( \sum^T_{t=1}\langle \sigma_t, l_t \rangle - \min_{\sigma \in \Delta(\Pi^r)} \sum^T_{t=1}\langle\sigma,l_t\rangle \right).
	\end{equation*}
\end{proof}

In addition, we provide a comparison of regret bound on existing solvers in Appendix~\ref{sec:comparison_regret}.

\addcontentsline{toc}{subsection}{A Proof of Theorem~\ref{theorem:convergence_rate_warm}}
\subsection*{A Proof of Theorem~\ref{theorem:convergence_rate_warm}}

Before the proof of convergence rate of \framework{}, we want to clarify that the reinforcement learning procedure for the best response $\pi_i$ should be an equivalence of MWU.
Fortunately, many reinforcement learning algorithms can be treated like that as their policy function is modeled as a Boltzman distribution~\cite{cen2022independent}. Thus, \framework{} can implement such a procedure with algorithms like Soft Actor Critic~\cite{haarnoja2018soft}, Soft Q-learning~\cite{haarnoja2017reinforcement}, DQN with Boltzman sampling~\cite{yang2018mean} and others.
With this condition, we can apply the Theorem~\ref{theorem:warm_bound_b} to $\pi_i$ and then derive the convergence rate as follows.

\begin{theorem}[\emph{Convergence Rate of \framework{}}]\label{theorem:convergence_rate_warm_b}
     Let $k$, $N$ denote the size of restricted policy sets $\Pi^r_{-i}$ and $\Pi_i$. Then the learning of Algorithm~\ref{alg:solve_urr} will converge to the Nash equilibrium with the rate:
     \begin{equation*}
     	\epsilon_T=\sqrt{\frac{\log{[(k+1)k/2]}}{2T}} + \sqrt{\frac{\log{[(N+1)N/2]}}{2T}}.
     \end{equation*}
\end{theorem}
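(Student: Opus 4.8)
The plan is to treat the URR game as a two-player zero-sum game in which both players run no-regret learning, and then invoke the folk theorem that time-averaged no-regret play converges to a Nash equilibrium at a rate governed by the sum of the two players' average regrets.

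First I would set up the game: player $-i$ selects a meta-strategy $\sigma_{-i}\in\Delta(\Pi^r_{-i})$ over a restricted set of size $k$, while the best-response player $i$ selects $\pi_i\in\Delta(\Pi_i)$ over the (effectively finite) full set of size $N$. I would then establish that each player's update is a genuine no-regret procedure carrying the MWU regret bound. For the meta-strategy this is immediate from Lemma~\ref{lemma:meta_update_b}, which shows that the Boltzmann update of Algorithm~\ref{alg:deterministic} coincides with MWU, so Theorem~\ref{theorem:regret_meta_warm_b} applies directly and yields the average regret $\epsilon_{-i}=\sqrt{\log[(k+1)k/2]/(2T)}$. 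For the best-response side I would invoke the modeling assumption, stated in the paragraph preceding the theorem, that the reinforcement-learning oracle of Algorithm~\ref{alg:solve_urr} uses a Boltzmann/soft policy and is therefore itself MWU-equivalent over the $N$ effective pure strategies; the identical argument of Theorem~\ref{theorem:regret_meta_warm_b}, with $k$ replaced by $N$, then gives $\epsilon_i=\sqrt{\log[(N+1)N/2]/(2T)}$.

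Next I would combine the two regret guarantees. Since the URR game is zero-sum, Theorem~\ref{theorem:averaging} states that whenever $R^T_j/T\le\epsilon_j$ for both players, the regret-weighted average pair $(\bar{\pi}_i,\bar{\sigma}_{-i})$ is an $(\epsilon_i+\epsilon_{-i})$-equilibrium. Adding the two bounds yields exactly $\epsilon_T=\epsilon_i+\epsilon_{-i}$, the claimed rate. I would also note that the per-player bound must be preserved across epochs as $\Pi^r_{-i}$ grows and the strategy dimension changes; this is precisely what the warm-start guarantee of Theorem~\ref{theorem:warm_bound_b} provides, since it shows the substitute strategy inherits the epoch's regret bound, so the per-epoch telescoping of the MWU bound used inside Theorem~\ref{theorem:regret_meta_warm_b} remains valid throughout.

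The main obstacle is the best-response side: strictly speaking the RL oracle is only \emph{approximately} MWU, and the claim that $\Pi_i$ has a finite number $N$ of ``effective'' pure strategies is an assumption rather than a structural fact about continuous policy spaces. I would therefore lean on the Boltzmann-policy equivalence cited before the statement and treat $N$ as the cardinality of the effective support, so that the $\epsilon_i$ term is a bound holding under that modeling assumption; once it is granted, the zero-sum averaging step via Theorem~\ref{theorem:averaging} and the meta-strategy bound from Theorem~\ref{theorem:regret_meta_warm_b} combine in a completely standard way to deliver $\epsilon_T$.
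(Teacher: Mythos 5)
Your proposal is correct and follows essentially the same route as the paper: both sides are modeled as MWU learners (the meta-strategy via Lemma~\ref{lemma:meta_update_b} and Theorem~\ref{theorem:regret_meta_warm_b} with size $k$, the best response via the Boltzmann-policy equivalence assumption with size $N$), and the two average-regret bounds $\epsilon_i$ and $\epsilon_{-i}$ are combined through the standard zero-sum folk-theorem argument to give an $(\epsilon_i+\epsilon_{-i})$-equilibrium. The only cosmetic difference is that you invoke Theorem~\ref{theorem:averaging} as a black box for the final step, whereas the paper re-derives the chain of inequalities inline; your explicit flagging of the ``effective finite support'' assumption on $\Pi_i$ as a modeling hypothesis rather than a structural fact is, if anything, more careful than the paper's treatment.
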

\begin{proof}
Using the regret bound of Theorem~\ref{theorem:regret_meta_warm_b} we have:
\begin{align*}
    \max_{\pi_i}\frac{1}{T}\left( \sum^T_{t=1}u_i(\pi_i,\sigma^t_{-i}) - u_i(\pi^t_i,\sigma^t_{-i})\right) &\le \epsilon_i,\\
    \max_{\sigma_{-i}}\frac{1}{T}\left( \sum^T_{t=1}u_{-i}(\pi^t_i,\sigma_{-i}) - u_{-i}(\pi^t_i,\sigma^t_{-i})\right) &\le \epsilon_{-i},
\end{align*}
where $\epsilon_i=\sqrt{\frac{\log{[(N+1)N/2]}}{2T}}$ and $\epsilon_{-i}=\sqrt{\frac{\log{[(k+1)k/2]}}{2T}}$. From the above inequalities and $u_{-i}(\cdot,\cdot)=-u_i(\cdot,\cdot)$ in zero-sum games, we can derive that
\begin{align*}
    u_i(\bar{\pi}_i, \bar{\sigma}_i) &\ge \min_{\sigma_{-i} \in \Delta(\Pi^r_{-i})} u_i(\bar{\pi}_i,\sigma_{-i}) \ge \frac{1}{T} \sum^T_{t=1}u_i(\pi_t,\sigma_t) - \epsilon_{-i}\\
    &\ge \max_{\pi_i}u_i(\pi_i,\bar{\sigma}_{-i}) - \epsilon_i - \epsilon_{-i}.
\end{align*}
By symmetry,  we have
\begin{align*}
	u_i(\bar{\pi}_i, \bar{\sigma}_i) &\le \max_{\pi_i}u_i(\pi_i,\bar{\sigma}_{-i}) \le \frac{1}{T}\sum^T_{t=1}u_i(\pi^t_i,\sigma^t_{-i}) + \epsilon_i\\
	&\le \min_{\sigma_{-i}}u_i(\bar{\pi}_i,\sigma_{-i}) + \epsilon_{-i} + \epsilon_i.
\end{align*}
Thus, with $\epsilon_T=\epsilon_i+\epsilon_{-i}$, we have
\begin{align*}
	\max_{\pi_i}u_i(\pi_i,\bar{\sigma}_{-i}) - \epsilon_T \le u_i(\bar{\pi}_i,\bar{\sigma}_{-i}) \le \min_{\sigma_{-i}}u_i(\bar{\pi}_i,\sigma_{-i}) + \epsilon_T.
\end{align*}
By definition, we have $(\bar{\pi}_i, \bar{\sigma}_{-i})$ is $\epsilon_T$-Nash equilibrium.
\end{proof}

\section{Implementation and Hyper-parameter Selection}\label{appendix:parameters}

\paragraph{Learning Meta-strategies and Best-response.} We introduce the algorithm for learning meta-strategies in Algorithm~\ref{alg:deterministic}.
We train best response policies with off-policy reinforcement learning algorithm, DQN. For all involved PSRO baselines in our paper, the selected implementation for best response learning is DQN.

\paragraph{Parameter Selection.}
We keep the consistency on the implementation of policy support in each PSRO-based method in this paper. Expressly, the network is set to 4 Dense layers, 256 units each. The learning rate for reinforcement learning policy is set to 0.01. For the hyper-parameters of meta-strategy, the window size $L$ is set to 100 episodes, the learning rate is 0.01 for Kuhn Poker, and 0.005 for other environments. The number of parallel workers is set to 4 for \framework{} and P-PSRO in all experiments.

\section{Baselines}

\paragraph{Self-Play.}
Self-play is an open-ended learning algorithm for multi-agent reinforcement learning~\cite{hernandez2019generalized}. In the training process, self-play generates a sequence of policies and keeps training policies against the newest opponents. This algorithm outperforms in some classic games, such as Go and Chess. However, self-play fails in nontransitive games.

\paragraph{ Policy Space
Response Oracles (PSRO).}
PSRO algorithm is well described above in previous sections. It provides
an iterative solution to solve the approximation of Nash equilibrium for large games~\cite{lanctot2017unified}. PSRO iteratively trains new policies against a meta-strategy of opponent population and expends policy populations with the current well-trained policy. 

\paragraph{Rectified PSRO (PSRO-rN).}
PSRO-rN is a variant of PSRO that aims to solve non-transitive zero-sum games~\cite{balduzzi2019open}, such as rock-paper-scissors. This algorithm involves rectified Nash response to construct adaptive sequences of objectives for non-transitive games. Policies in PSRO-rN only train against others that they
already beat.

\paragraph{Mixed Oracles.}
Mixed Oracles is another variant of PSRO that aims to improve computational efficiency by reducing training costs~\cite{smith2020iterative}. At each iteration, it utilizes knowledge of former iterations, thus only needing to train current policies against the newest opponent.

\paragraph{Pipeline-PSRO (P-PSRO).}
To further accelerate the training process of PSRO, P-PSRO is proposed to parallelize the training process~\cite{mcaleer2020pipeline}. Compared to other parallel algorithms, such as DHC, which fail to converge in some cases, P-PSRO maintains a parallel pipeline of learning workers with convergence guarantees.

\section{Environment Details}\label{appendix:env}
We introduce more details about the random symmetric games and mulit-agent gathering environments here.

\paragraph{Random Symmetric Games.}
McAleer et al.~\cite{mcaleer2020pipeline} introduce the games to investigate the performance of PSRO-based methods in high-dimensional symmetric games (SymGame). In this experiment, we generated random symmetric zero-sum matrices with different dimension $n$. For a given matrix, elements in the upper triangle are distributed uniformly: $\forall i < j \le n, a_{i,j} \sim \textsc{Uniform}(-1,1)$ and for the lower triangle, the elements are set to be the negative of its diagonal counterpart: $\forall j <i \le n, a_{i,j} = -a_{j,i}$. The diagonal elements are equal to zero: $a_i,i = 0$. The matrix defines the utility of two pure strategies to the row player. A strategy $\pi \in \Delta^n$ is a distribution over the $n$ pure strategies of the game given by the rows (or equivalently, columns) of the matrix.

\paragraph{Multi-agent Gathering.}
We introduce two multi-agent gathering environments in this paper, the \emph{Gathering Small} and the \emph{Gathering Open}. For each environment, the agent number is set to 2, and the difference between them is that the \emph{Gathering Open} has a much bigger map than \emph{Gathering Small}. Therefore, the agents need to explore a higher dimensional state space in the \emph{Gathering Open} than the smaller one.

\section{Additional Experimental Results}\label{appendix:results}

\subsection{Non-transitive Mixture Game}\label{appendix:results_non_trans}
We test six PSRO-based algorithms in the \emph{non-transitive mixture game} to investigate the exploration efficiency. We also introduce the results of \textsc{NashConv} to compare the performance. We find that \framework{} outperforms all other algorithms in this game, and performs the highest exploration efficiency.

\begin{figure}[h!]
	\centering
	\subfigure[PSRO]{\label{fig:mixture_psro}\includegraphics[width=.3\textwidth]{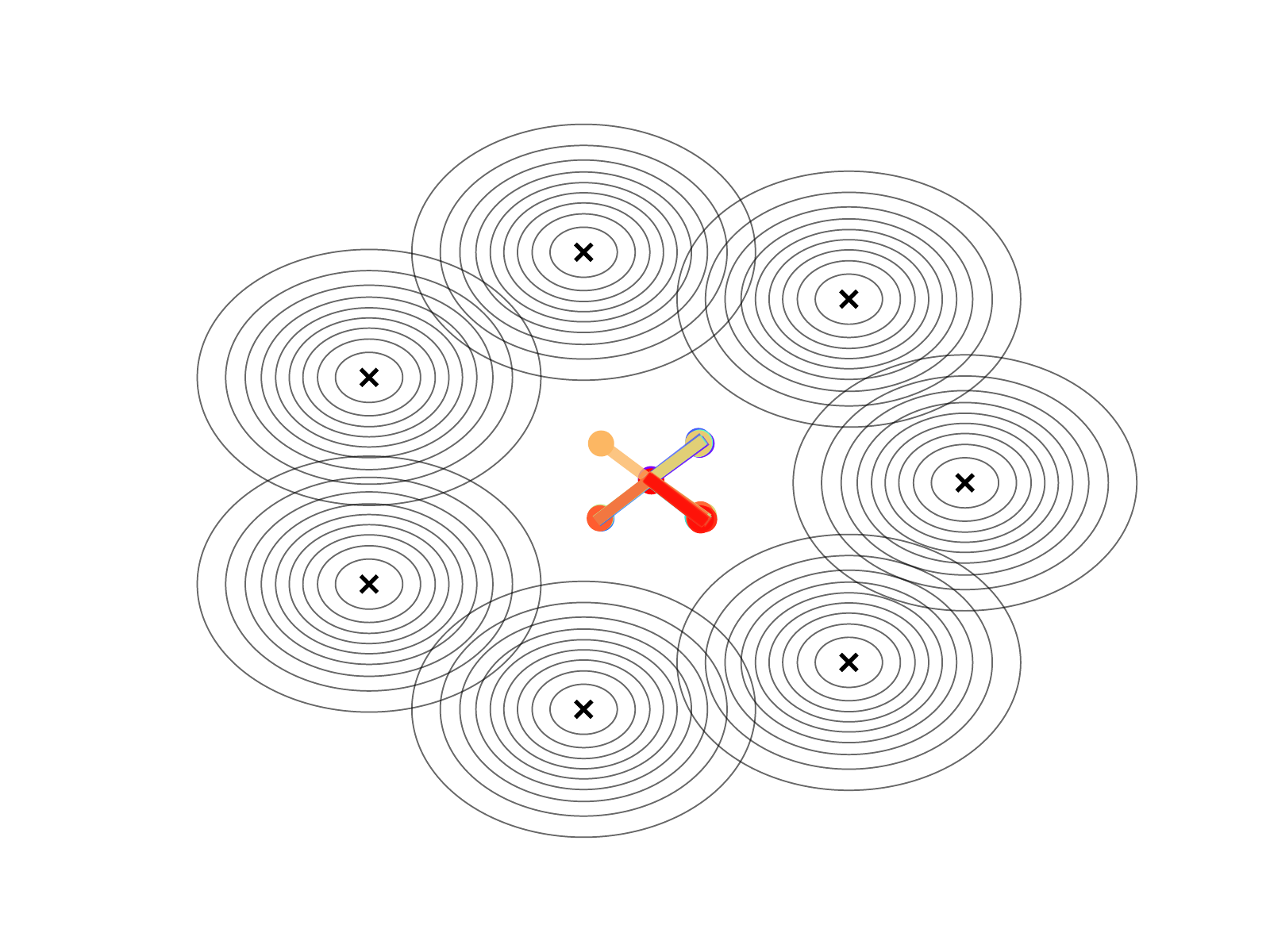}}
	\subfigure[Mixed-Oracles]{\label{fig:mixture_mixed_oracles}\includegraphics[width=.3\textwidth]{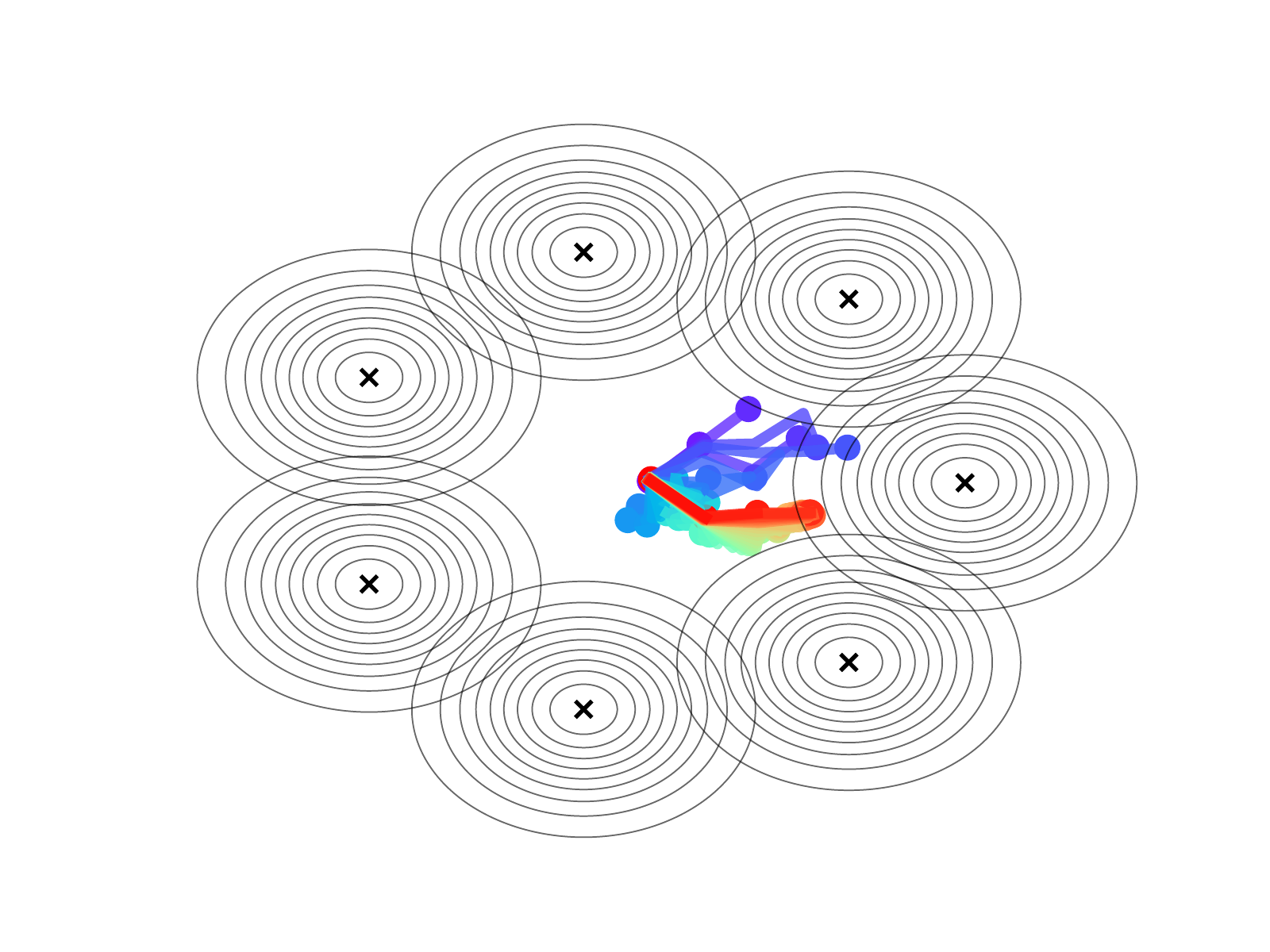}}
    \subfigure[PSRO-rN]{\label{fig:mixture_psro_rn}\includegraphics[width=.3\textwidth]{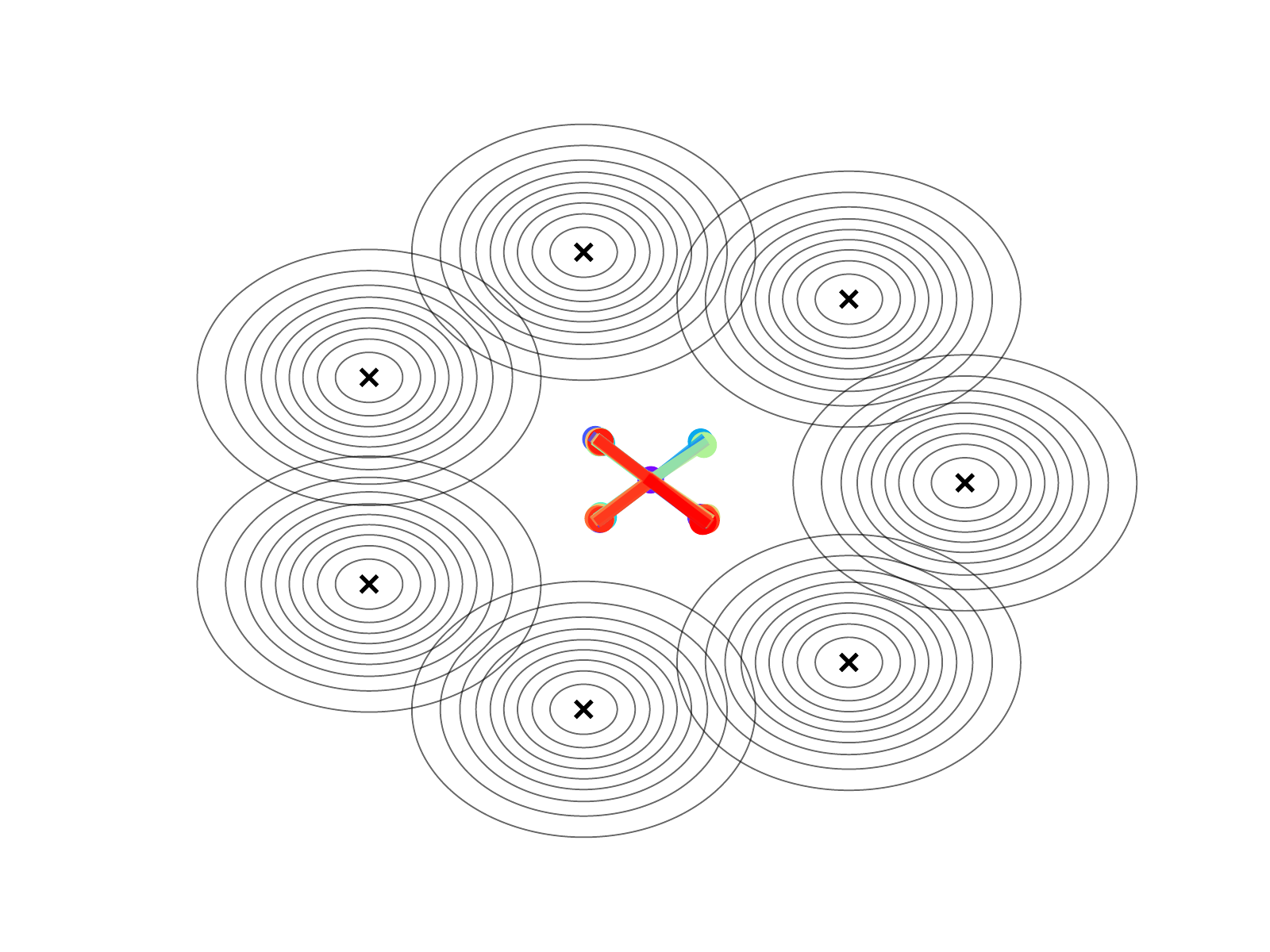}}
	\subfigure[P-PSRO]{\label{fig:mixture_p2sro}\includegraphics[width=.3\textwidth]{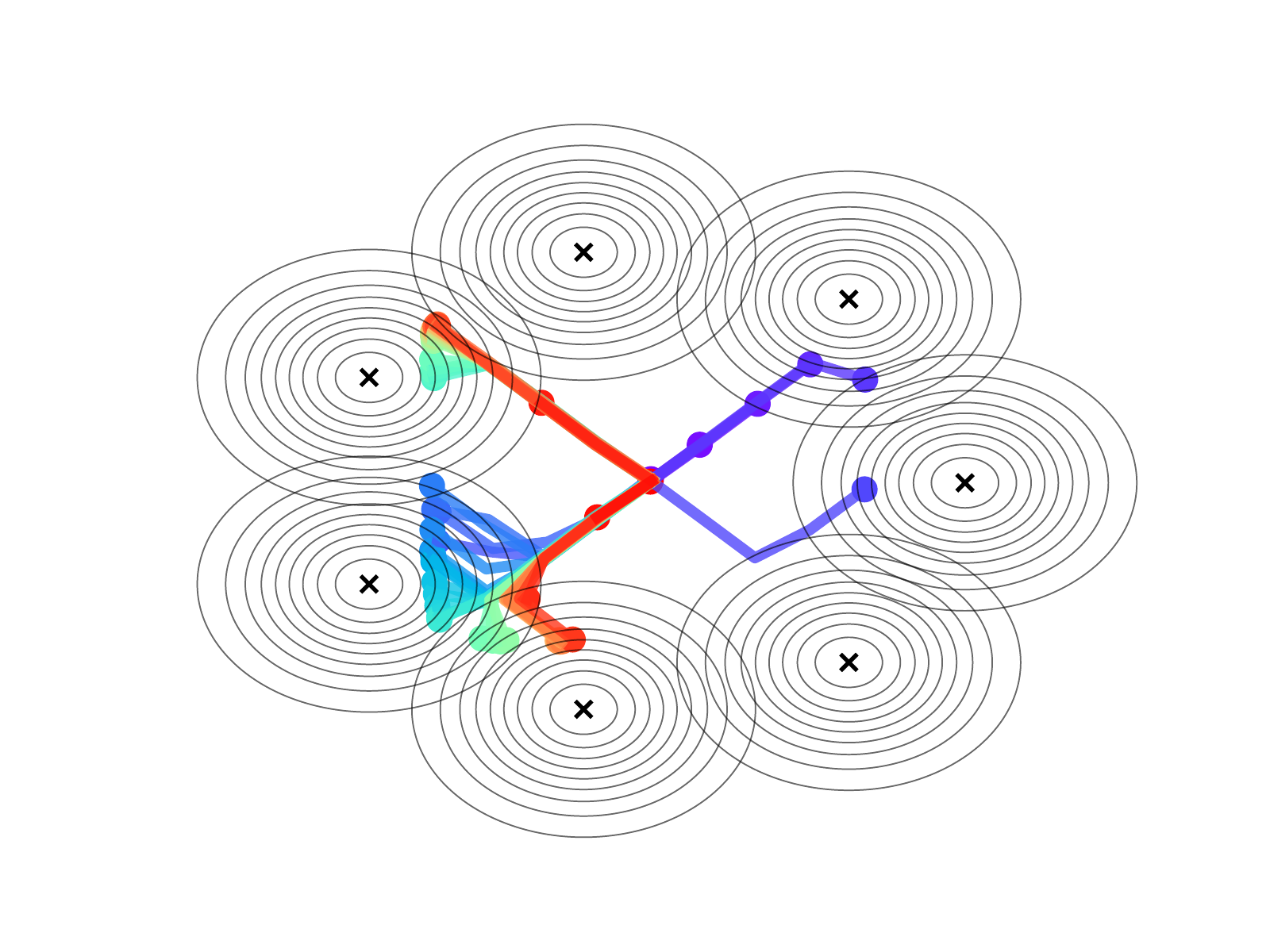}}
	\subfigure[NEPSRO]{\label{fig:mixture_nepsro}\includegraphics[width=.3\textwidth]{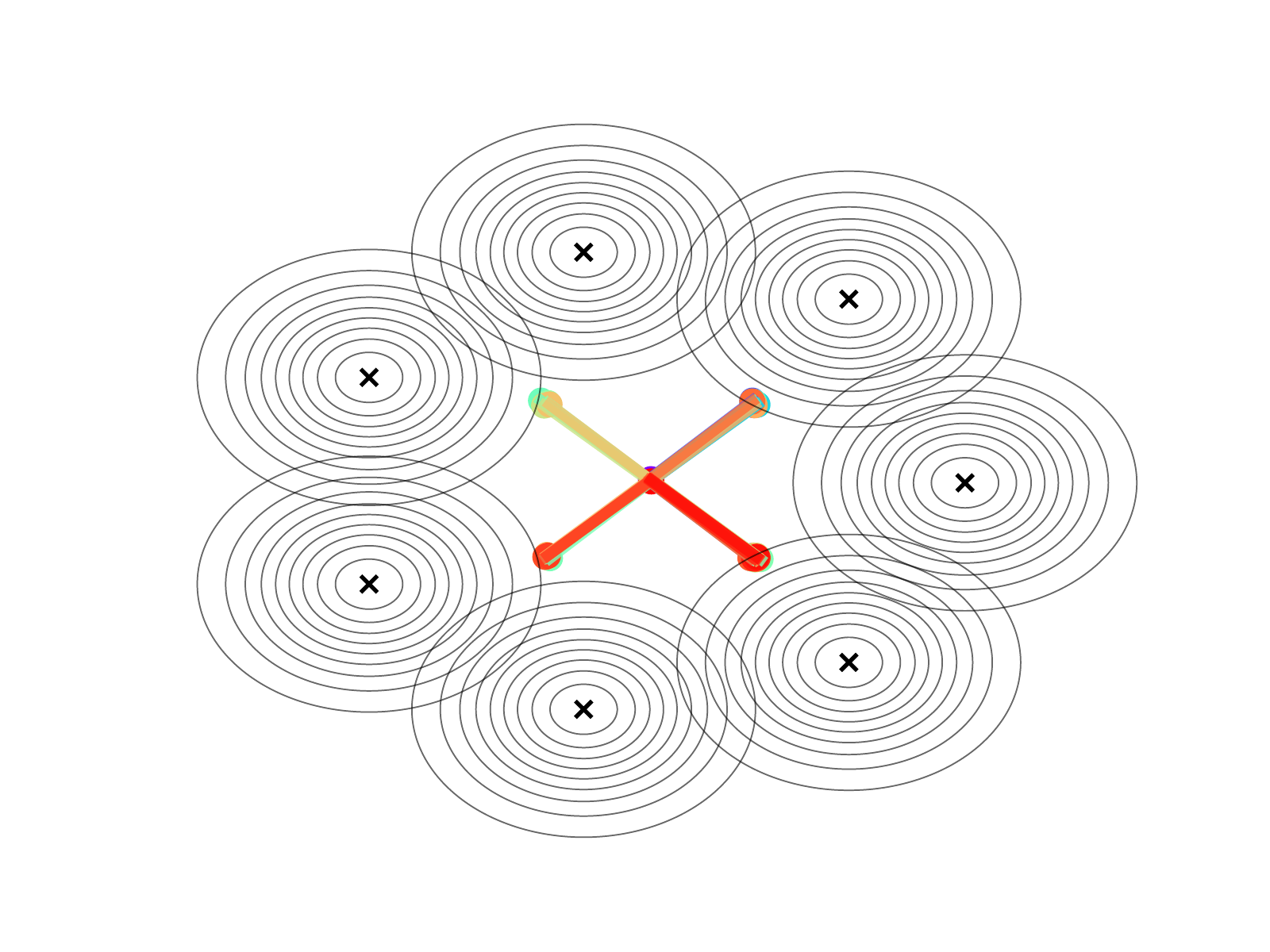}}
	\subfigure[EPSRO]{\label{fig:mixture_epsro}\includegraphics[width=.3\textwidth]{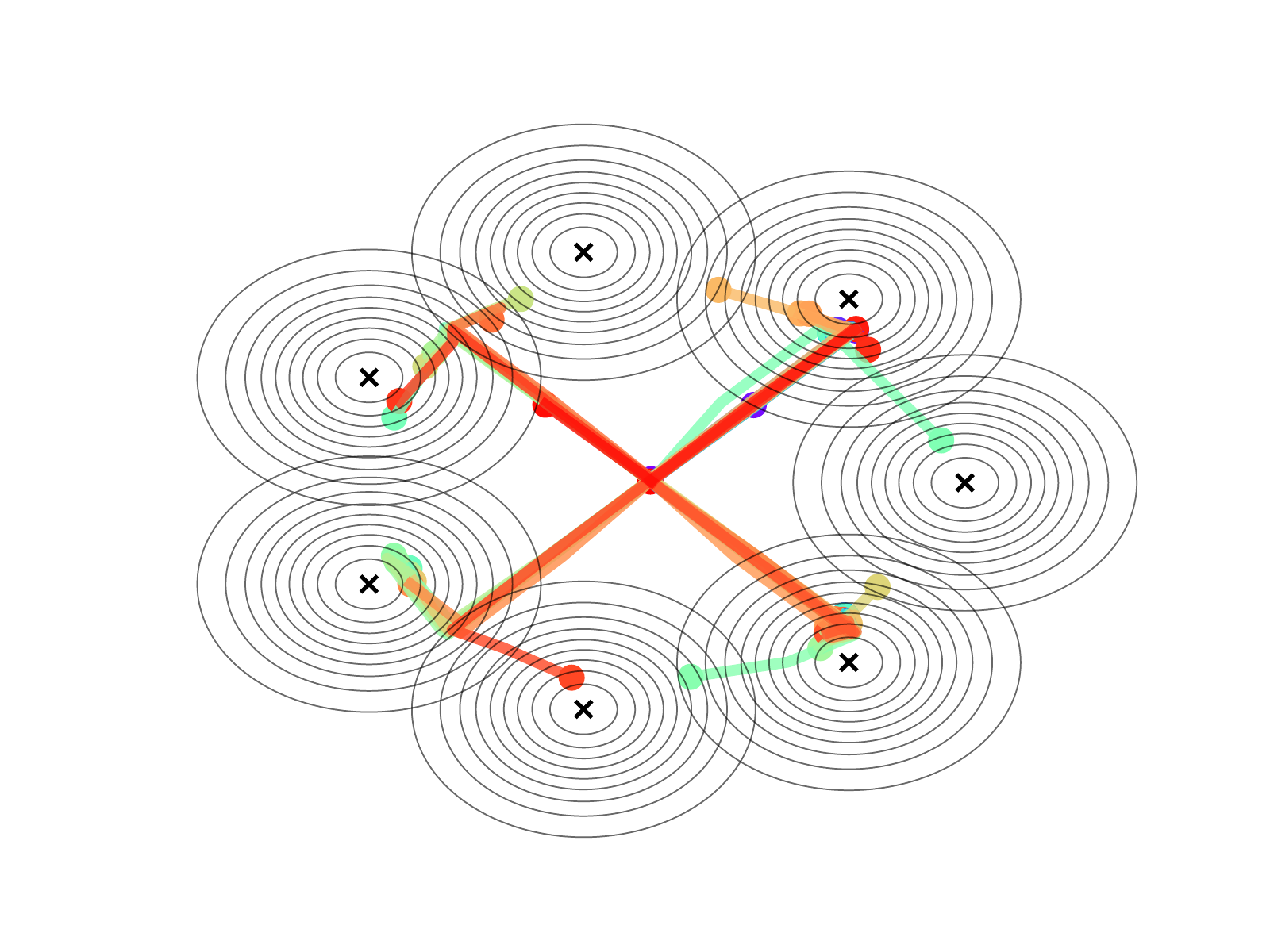}}
    \caption{Exploration trajectories on \textit{Non-transitive Mixture Games}. The more trajectories close to the centers of Gaussian, the higher the exploration efficiency of the algorithm. \framework{} outperform all selected baselines since it explored all centers.}
    \label{fig:non_mixture}
\end{figure}

\subsection{Random Symmetric Matrix Game}\label{appendix:results_random}
We compare the \textsc{NashConv} over iteration of \framework{} with PSRO, P-PSRO, Rectified PSRO, Self-Play, Mixed-Oracles and naive \framework{} without pipeline training (NEPSRO). We run 5 experiments for each set of dimension. The dimensions of size including 15, 30, 45, 60 and 120. The learning rates is set to 0.5, and 4 parallel threads for parallel algorithms. We find that \framework{} performs better than all other algorithms in every dimension setting.

\begin{figure}[h!]
	\centering
	\subfigure[Cardinality when dim=15]{\label{fig:random_15_card}\includegraphics[width=.24\textwidth]{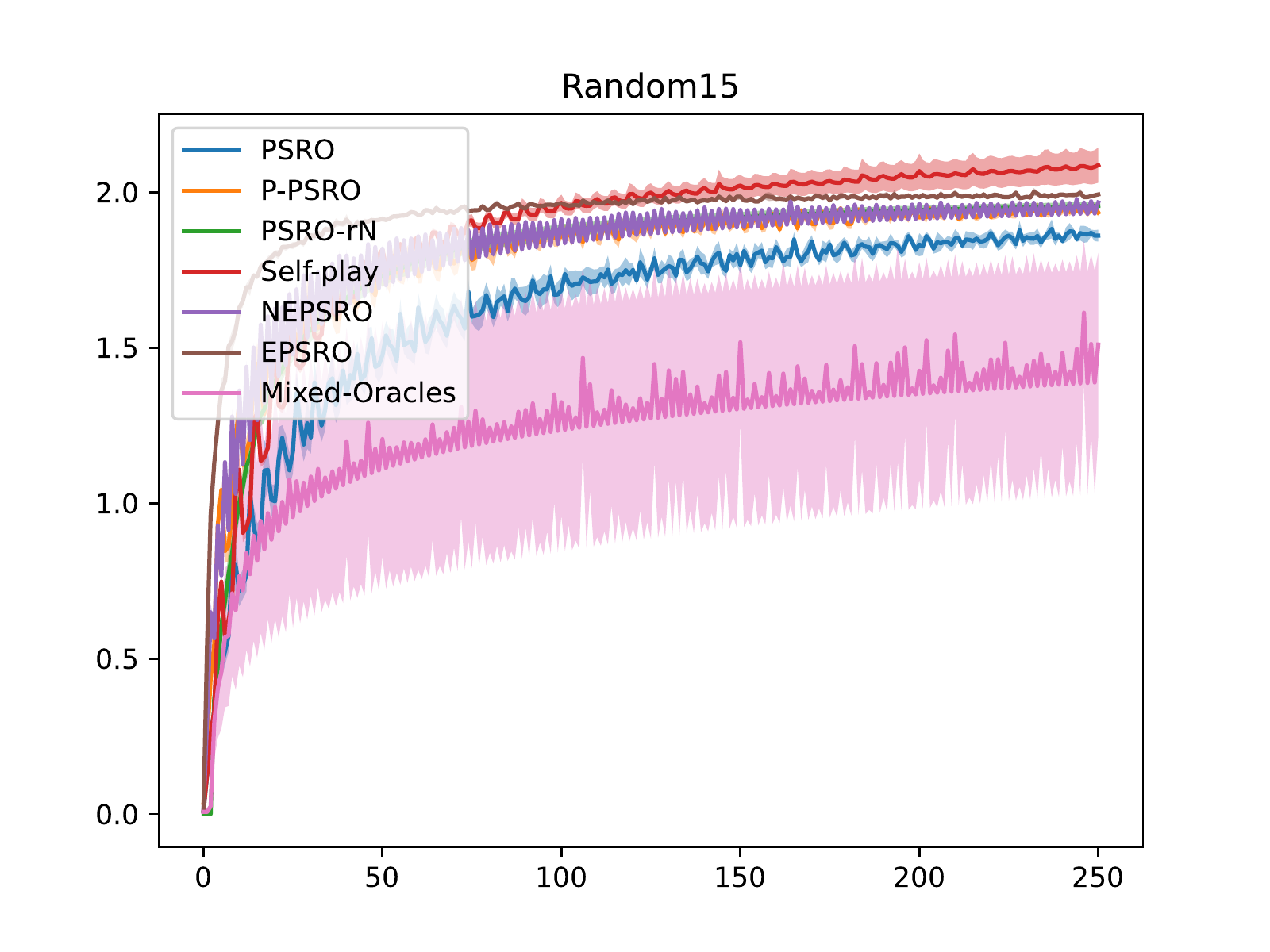}}
	\subfigure[\textsc{NashConv} when dim=15]{\label{fig:random_15_exp}\includegraphics[width=.24\textwidth]{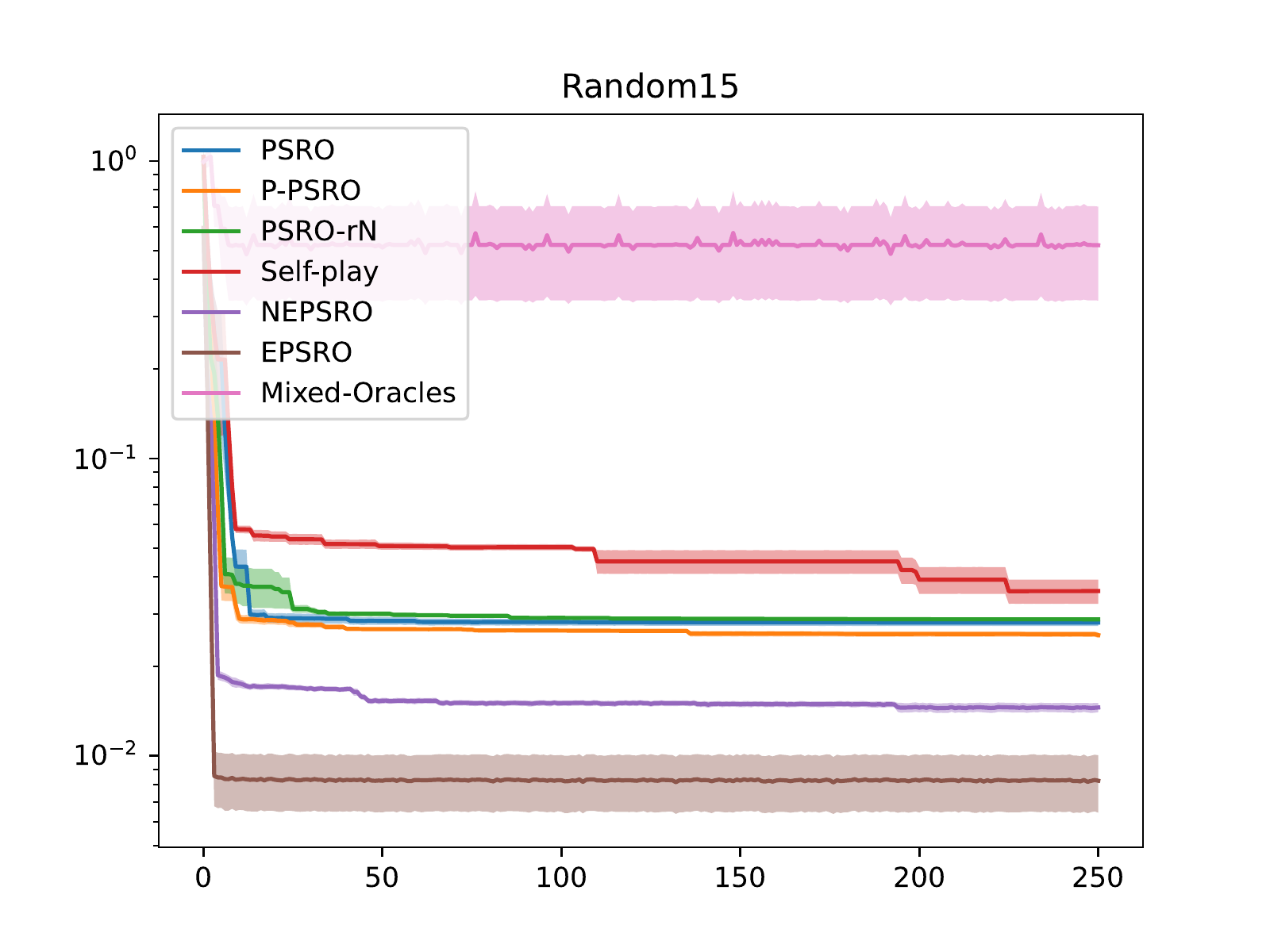}}
	\subfigure[Cardinality when dim=30]{\label{fig:random_30_card}\includegraphics[width=.24\textwidth]{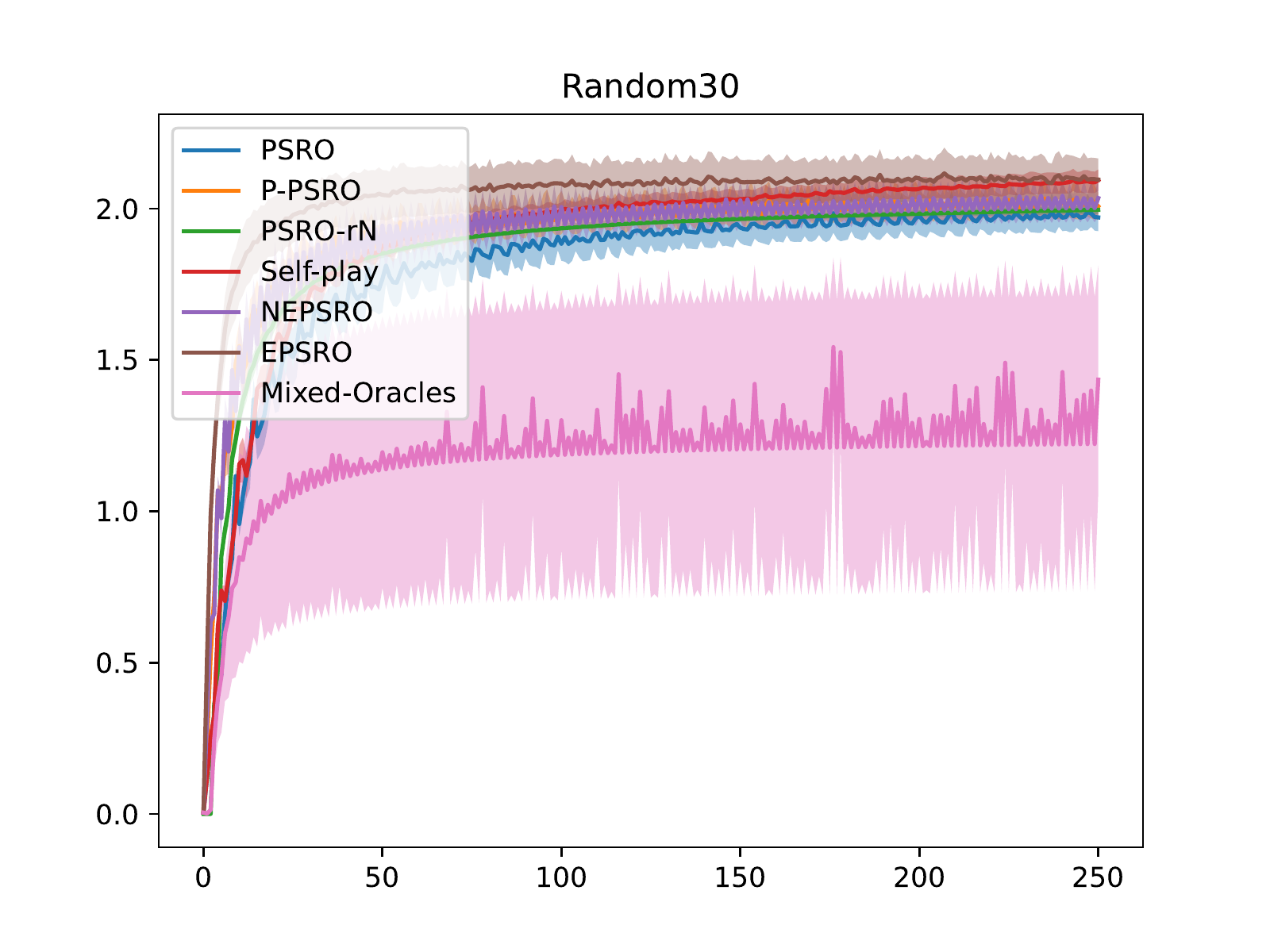}}
	\subfigure[\textsc{NashConv} when dim=30]{\label{fig:random_30_exp}\includegraphics[width=.24\textwidth]{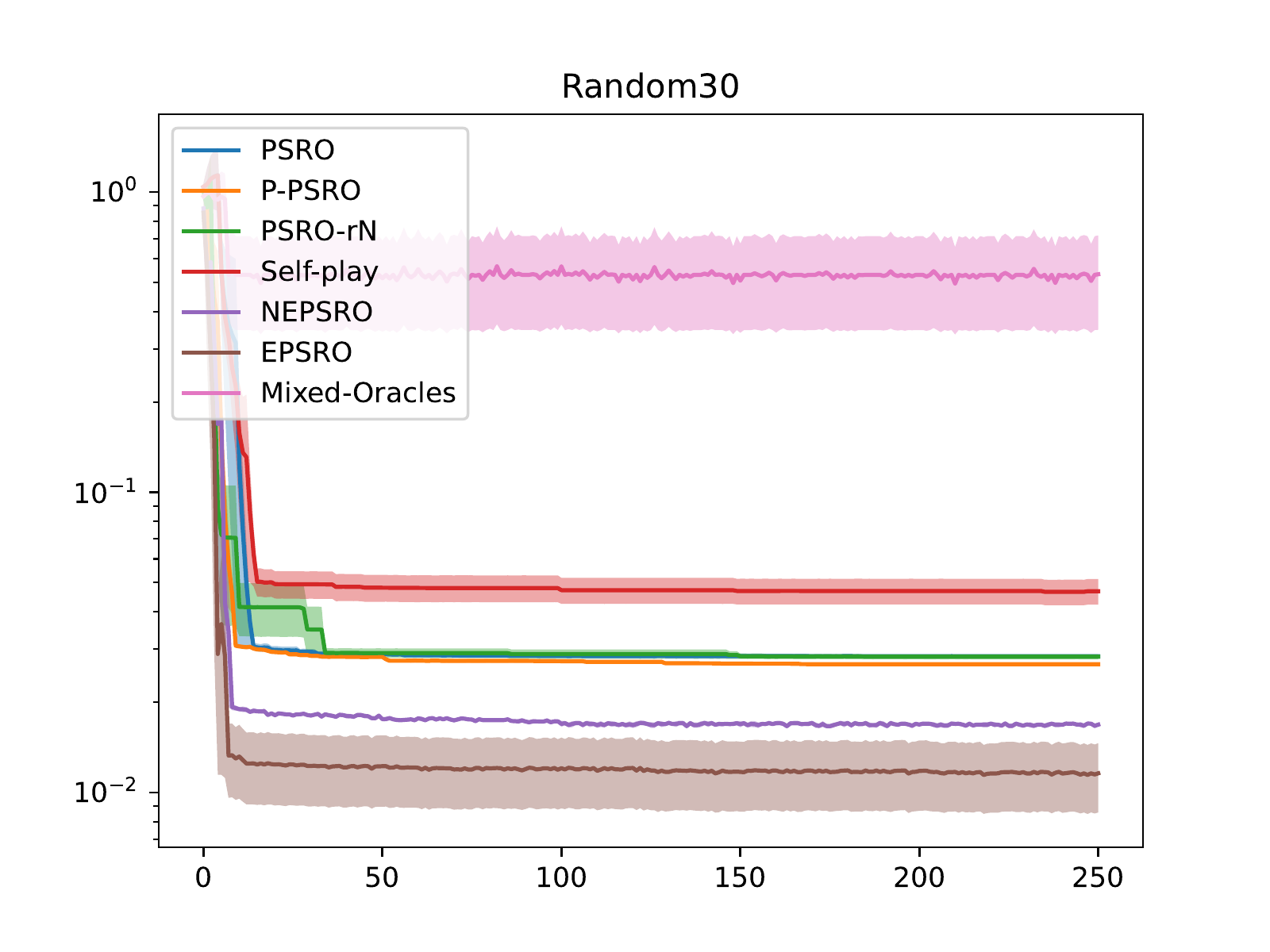}}
	\subfigure[Cardinality when dim=45]{\label{fig:random_45_card}\includegraphics[width=.24\textwidth]{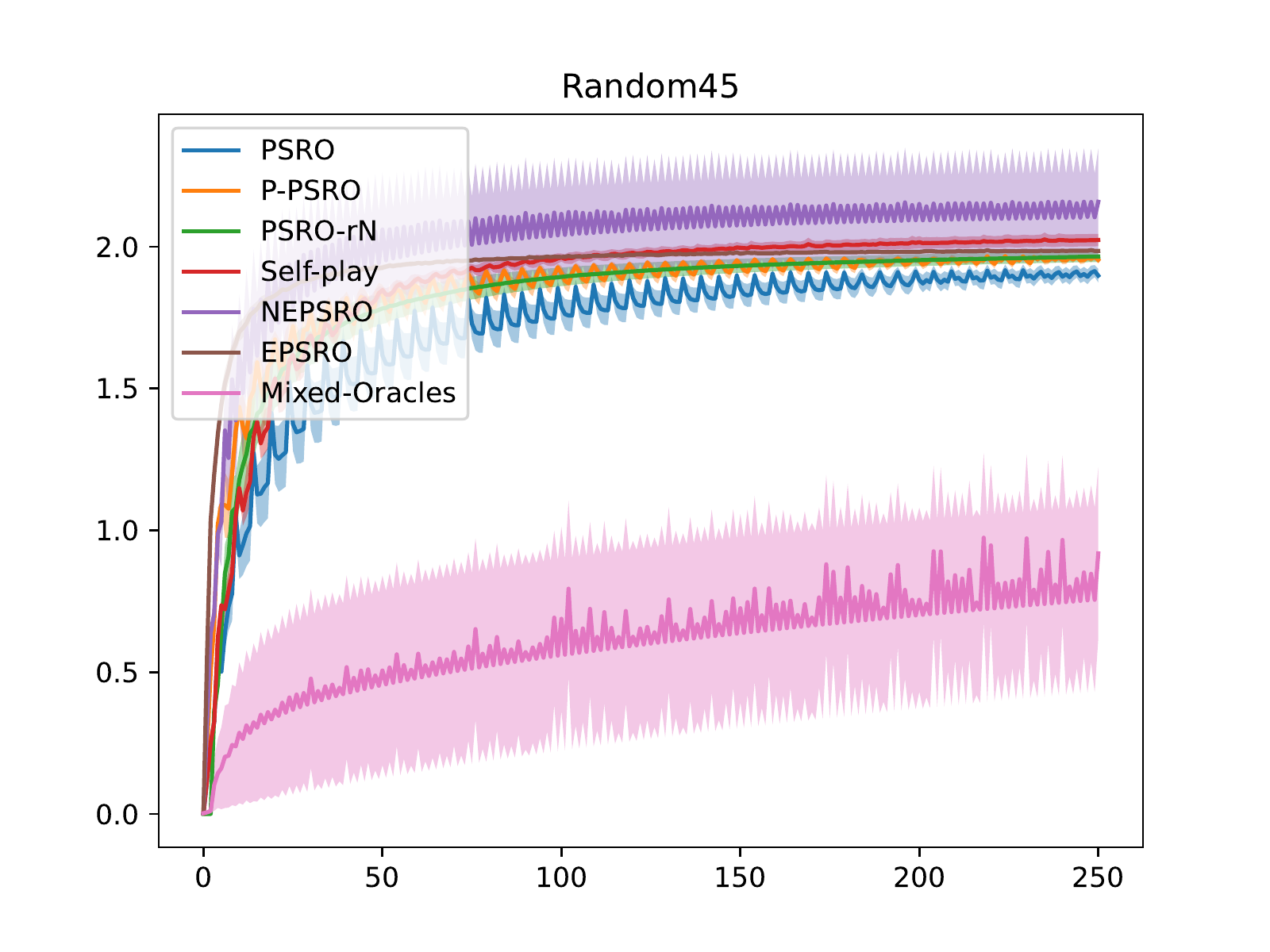}}
	\subfigure[\textsc{NashConv} when dim=45]{\label{fig:random_45_exp}\includegraphics[width=.24\textwidth]{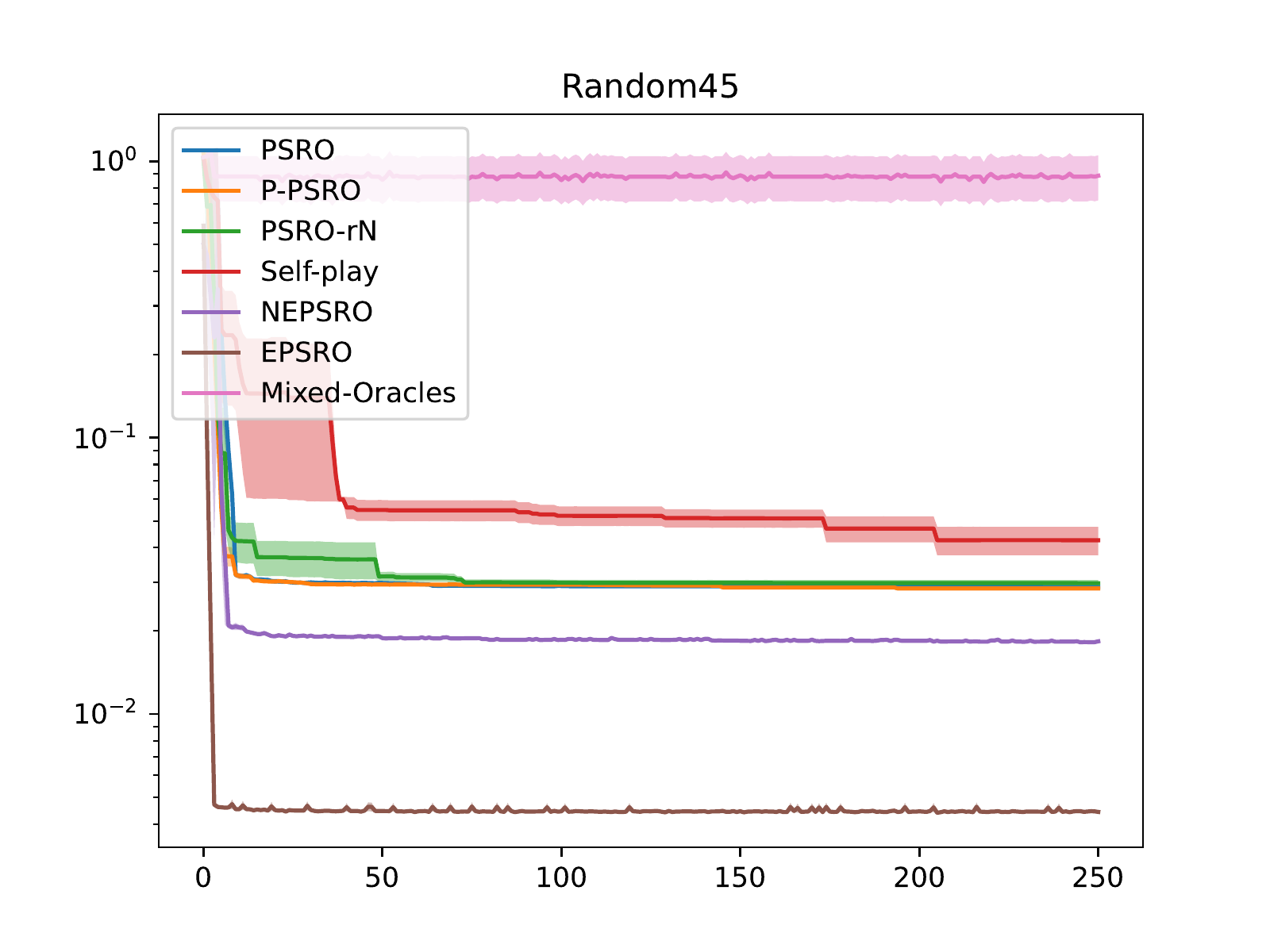}}
	\subfigure[Cardinality when dim=60]{\label{fig:random_60_card}\includegraphics[width=.24\textwidth]{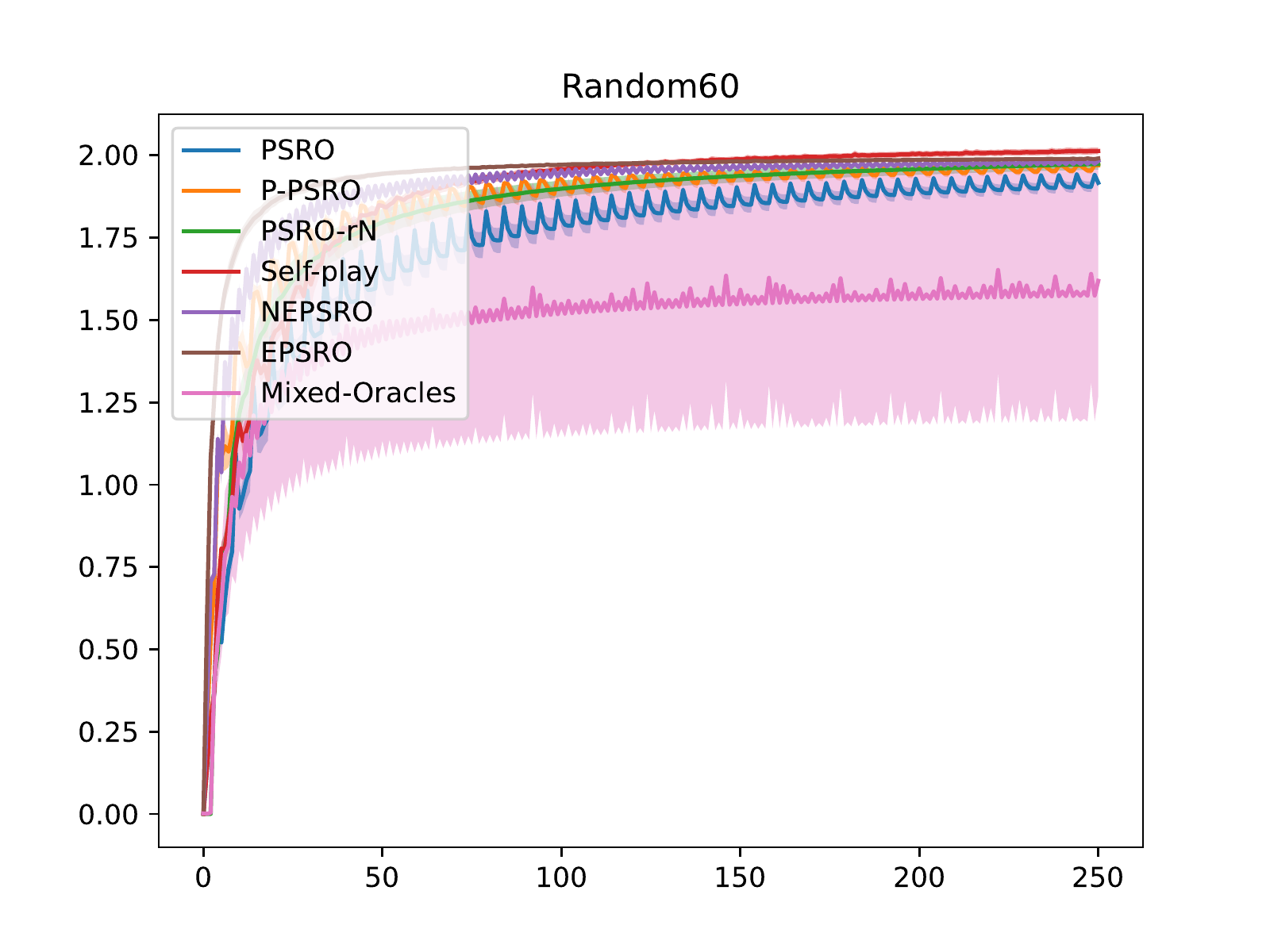}}
	\subfigure[\textsc{NashConv} when dim=60]{\label{fig:random_60_exp}\includegraphics[width=.24\textwidth]{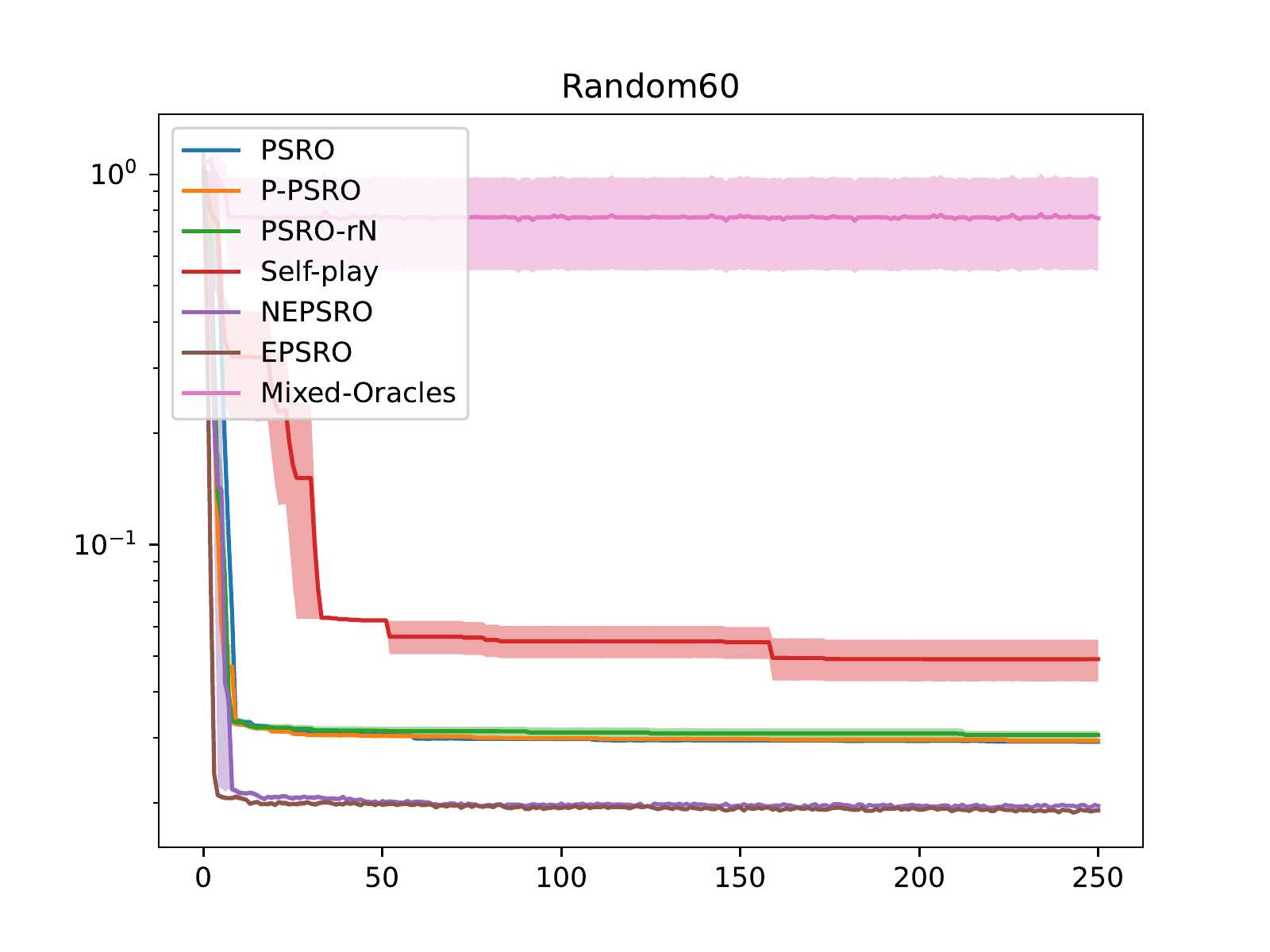}}
    \caption{Comparison on \textsc{NashConv} and cardinality for random symmetric matrix games with different dimensions.}
    \label{fig:random_game}
\end{figure}

\subsection{Multi-agent Gathering}\label{appendix:results_gathering}
Multi-agent Gathering environments (MAG) have complex state space than the other games, so that it is highly computation expensive to traverse the game tree to compute the \textsc{NashConv}. Instead, we evaluate all algorithms with a fixed policy set. In our experiments, the fixed policy set is generated with PSRO, i.e. $\Pi^{\text{PSRO}}$. We list the pseudo-code for evaluation in Algorithm~\ref{alg:evaluation}.

\begin{algorithm}[h]
 \caption{\textsc{Empirical Evaluation on MAG}}\label{alg:evaluation}
\KwInput{a policy set $\Pi^{\text{TEST}}$ of evaluated algorithm; $\Pi^{\text{PSRO}}$; an empty matrix $M \in \mathbb{R}^{|\Pi^{\text{TEST}}|\times|\Pi^{\text{PSRO}}|}$}

 	\For{each policy $\pi^{\text{TEST}}_i$ in $\Pi^{\text{TEST}}$}{
 		\For{each policy $\pi^{\text{PSRO}}_j$ in $\Pi^{\text{PSRO}}$}{
 		    Run 50 episodes to evaluate $M_{i,j}=u_i(\pi^{\text{TEST}}_i,\pi^{\text{PSRO}}_j)$
 		}
 		Compute score of $\sigma^{\text{TEST}}_{1:i}$ as $\textsc{Score}(\sigma^{\text{TEST}}_{1:i})=\sigma^{\text{TEST}}_{1:i}M_{1:i}\left[\sigma^{\text{PSRO}}\right]^T$
 	}
\KwOutput{a list of score $\textsc{Score}(\Pi^{\text{TEST}}) = \{\textsc{Score}(\sigma^{\text{TEST}}_{1:i}) \mid i = 1,\dots,|\Pi^{\text{TEST}}|\}$ for $\Pi^{\text{TEST}}$}
 \end{algorithm}

$\sigma^{\text{TEST}}_{1:i}$ in Algorithm~\ref{alg:evaluation} indicates a meta-strategy composed of $\pi_1,\dots,\pi_i$, and $M_{1:i}$ is a sub matrix with row 1 to $i$. We present $\textsc{Score}(\Pi^{\text{TEST}})$ of each algorithm in Figure~\ref{fig:gathering_score}. The curve of Self-Play is not included because we keep only two policies (one for the opponent, another for training) in our implementation.

\begin{figure}[h!]
	\centering
	\subfigure[Gathering Small]{\label{fig:gathering_traj_small}\includegraphics[width=.45\textwidth]{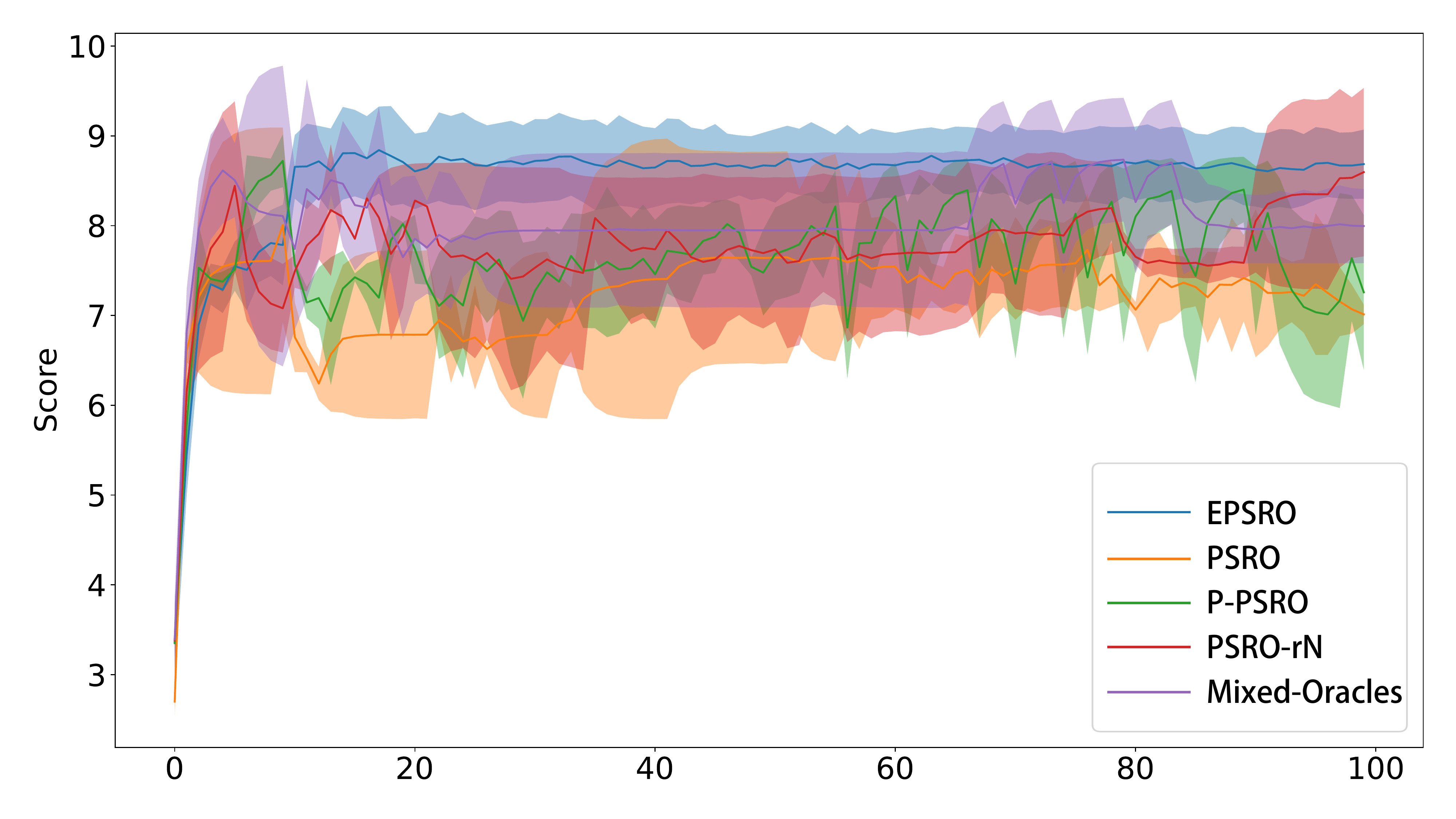}}
	\subfigure[Gathering Open]{\label{fig:gathering_traj_open}\includegraphics[width=.45\textwidth]{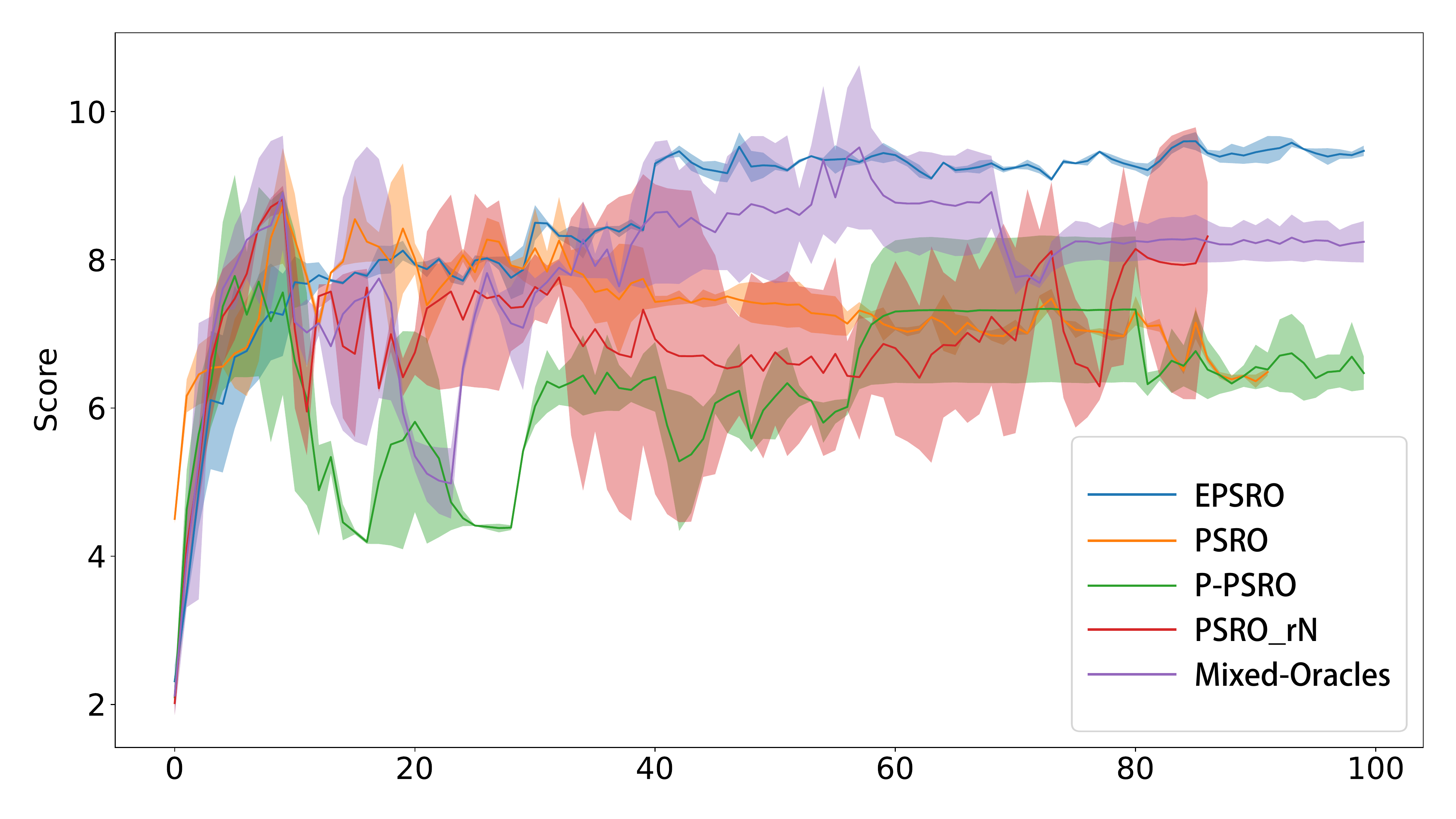}}
    \caption{The score of algorithms on two multi-agent gathering environments. The horizon axis indicates the number of training iterations. As reported in this Figure, \framework{} performs better than other algorithms.}
    \label{fig:gathering_traj}
\end{figure}

In the training stage, we set the number of simulations for each joint policy as 100, and 10000 episodes to optimize each policy. Except for \framework{}, the training time of each algorithm on the Gathering Small is about 24 hours, and 26 hours for the Gathering Open, while the training time for \framework{} is about 8 hours.

\section{Additional}

\subsection{Convergence Proof of Average Substitute Strategies}

\begin{theorem}{(\cite[Theorem~1]{brown2016strategy}).}\label{theorem:noam_theorem_2}
     In a two-player zero-sum game, if $\frac{R^T_i}{T} \le \epsilon_i$ for both player $i \in \{1,2\}$, then $\bar{\sigma}$ is a $(\epsilon_1 + \epsilon_2)$-equilibrium.
\end{theorem}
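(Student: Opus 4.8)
The plan is to exploit the bilinearity of the utility in the mixed strategies together with the zero-sum identity $u_{-i}=-u_i$ in order to convert the two per-player regret hypotheses into a single two-sided bound on the exploitability of the time-averaged profile $\bar{\sigma}=(\bar{\sigma}_1,\bar{\sigma}_2)$, where $\bar{\sigma}_i=\frac{1}{T}\sum_{t=1}^T\sigma^t_i$. The whole argument is the classical ``folk theorem'' linking no-regret dynamics to Nash equilibria in zero-sum games, specialized to the substitute-regret setting of Theorem~\ref{theorem:averaging}.

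First I would unfold the regret definition. Pulling the fixed comparator outside the time sum and using bilinearity, $\sum_{t=1}^T u_i(\sigma_i,\sigma^t_{-i})=T\cdot u_i(\sigma_i,\bar{\sigma}_{-i})$, so the hypothesis $R^T_i/T\le\epsilon_i$ reads $\max_{\sigma_i}u_i(\sigma_i,\bar{\sigma}_{-i})-\bar{v}_i\le\epsilon_i$, where $\bar{v}_i=\frac{1}{T}\sum_{t=1}^T u_i(\sigma^t_i,\sigma^t_{-i})$ is the realized average payoff. In a zero-sum game $\bar{v}_1=-\bar{v}_2=:\bar{v}$, so substituting $u_2=-u_1$ the two hypotheses become $\max_{\sigma_1}u_1(\sigma_1,\bar{\sigma}_2)\le\bar{v}+\epsilon_1$ and $\min_{\sigma_2}u_1(\bar{\sigma}_1,\sigma_2)\ge\bar{v}-\epsilon_2$.

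Next I would sandwich the payoff of the averaged profile. Because $\min_{\sigma_2}u_1(\bar{\sigma}_1,\sigma_2)\le u_1(\bar{\sigma}_1,\bar{\sigma}_2)\le\max_{\sigma_1}u_1(\sigma_1,\bar{\sigma}_2)$, the two displayed bounds give $\bar{v}-\epsilon_2\le u_1(\bar{\sigma}_1,\bar{\sigma}_2)\le\bar{v}+\epsilon_1$. Player $1$'s best deviation gain is then $\max_{\sigma_1}u_1(\sigma_1,\bar{\sigma}_2)-u_1(\bar{\sigma}_1,\bar{\sigma}_2)\le(\bar{v}+\epsilon_1)-(\bar{v}-\epsilon_2)=\epsilon_1+\epsilon_2$, and symmetrically, using $u_2=-u_1$, player $2$'s gain $u_1(\bar{\sigma}_1,\bar{\sigma}_2)-\min_{\sigma_2}u_1(\bar{\sigma}_1,\sigma_2)\le\epsilon_1+\epsilon_2$. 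Since neither player can improve by more than $\epsilon_1+\epsilon_2$, the profile $\bar{\sigma}$ is an $(\epsilon_1+\epsilon_2)$-equilibrium, as claimed.

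The main obstacle, modest as it is, lies in the bilinearity collapse: the regret is defined against the \emph{empirical sequence} of opponent plays rather than against $\bar{\sigma}_{-i}$ a priori, so rewriting $\sum_t u_i(\sigma_i,\sigma^t_{-i})$ as $T\,u_i(\sigma_i,\bar{\sigma}_{-i})$ is legitimate only because the comparator $\sigma_i$ is fixed and hence factors out of the sum. The remaining work is careful bookkeeping with the zero-sum sign convention; the key structural fact is that the unknown realized value $\bar{v}$ cancels, which is precisely what lets the two one-sided regret guarantees combine into the single two-sided equilibrium bound.
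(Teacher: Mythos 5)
Your proof is correct and takes essentially the same route as the paper's (which follows Waugh et al.): both arguments use the fixed comparator to collapse $\sum_t u_i(\sigma_i,\sigma^t_{-i})$ into $T\,u_i(\sigma_i,\bar{\sigma}_{-i})$ by bilinearity, exploit the zero-sum identity so the realized average payoff cancels, and finish with the sandwich $\min_{\sigma_2}u_1(\bar{\sigma}_1,\sigma_2)\le u_1(\bar{\sigma}_1,\bar{\sigma}_2)\le\max_{\sigma_1}u_1(\sigma_1,\bar{\sigma}_2)$. The only difference is presentational: the paper sums the two players' regret inequalities before applying the sandwich, while you keep the realized value $\bar{v}$ explicit and let it cancel at the end --- the same cancellation, carried out in a different order.
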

\begin{proof}
Follow the proof approach of Waugh et al. (2009). From (5), we have that
\begin{equation}
    \max_{\sigma\in\Delta}\frac{1}{T}\left( \sum^T_{t=1}u_i(\sigma'_i, \sigma^t_{-i}) - u_i(\sigma^t_i,\sigma^t_{-i}) \right) \le \epsilon_i.
\end{equation}
Since $\sigma'_i$ is the same on every iteration, this becomes
\begin{equation}\label{eq:noam_11}
    \max_{\sigma'\in\Delta}u_i(\sigma'_i,\bar{\sigma}^T_{-i}) - \frac{1}{T}\sum^T_{t=1}u_i(\sigma^t_i,\sigma^t_{-i}) \le \epsilon_i.
\end{equation}
Since $u_i(\sigma)=u_2(\sigma)$, if we sum Equation~\ref{eq:noam_11} for both players
\begin{align}
    \max_{\sigma'_1\in\Delta}u_1(\sigma'_1,\bar{\sigma}^T_2)+\max_{\sigma'_2\in\Delta}u_2(\bar{\sigma}^T_1,\sigma'_2) &\le \epsilon_1 + \epsilon_2,\label{eq:noam_2}\\
    \max_{\sigma'_1\in\Delta}u_1(\sigma'_1,\bar{\sigma}^T_2-\min_{\sigma'_2\in\Delta}u_1(\bar{\sigma}^T_1,\sigma'_2) &\le \epsilon_1 + \epsilon_2.
\end{align}
Since $u_1(\bar{\sigma}^T_1,\bar{\sigma}^T_2) \ge \min_{\sigma'_2 \in \Delta}u_1(\bar{\sigma}^T_1,\sigma'_2)$, so we have $\max_{\sigma'_1\in\Delta}u_1(\sigma'_1,\bar{\sigma}^T_2) - u_1(\bar{\sigma}^T_1,\bar{\sigma}^T_2) \le \epsilon_1 + \epsilon_2$. By symmetry, this is also true for player 2. Therefore, $\langle \bar{\sigma}^T_1,\bar{\sigma}^T_2 \rangle$ is a $(\epsilon_1 + \epsilon_2)$-equilibrium.
\end{proof}

When warm starting, we can calculate this value because we set $\bar{\sigma}^T=\sigma$. However we cannot calculate $\sum^T_{t=1}u_i(\sigma^t)$ because we did not define individual strategies played on each iteration. Fortunately, it turns out we can substitute another value we refer to as $Tu_i(\bar{\sigma})$, chosen from a range of acceptable options. To see this we first observe that the value of $\sum^T_{t=1}u_i(\sigma^t)$ is not relevant to the proof of Theorem~\ref{theorem:noam_theorem_2}. Specifically, in Eq.~\ref{eq:noam_2}, we see it cancels out. Thus, if we choose $(\bar{\pi}_o,\bar{\sigma}_{-i})$ such that satisfies it. Since $\max_{\pi}u_i(\pi, \bar{\sigma}^T_{-i}) \ge u_i(\bar{\pi}_i,\bar{\sigma}_{-i})$ and $\max_{\sigma_{-i}}u_{-i}(\bar{\pi}_i, \sigma_{-i}) \ge u_{-i}(\bar{\pi}_i,\bar{\sigma}_{-i})$. Thus $(\bar{\pi}_i,\bar{\sigma}_{-i})$ is a feasible warm-starting strategy.

\subsection{Comparison of Regret Bound}\label{sec:comparison_regret}

We compare \framework{} with existing solvers in the Table. Properties considered for comparison including (1) time complexity or regret bound; (2) whether we need to do retraining; (3) whether we need to know the full game.

\begin{center}
\begin{tabular}{cccc} \toprule
    \multirow{2}{*}{\textbf{Method}} & \multirow{2}{4cm}{\textbf{Time Complexity ($\tilde{\mathcal{O}}$) / Regret Bound ($\mathcal{O}$)}} & \multirow{2}{2cm}{\textbf{No Need to Re-train}} & \multirow{2}{3cm}{\textbf{No Need to Know the Full Game}} \\
    & & &\\ \toprule
    Linear Programming~\cite{van2020deterministic}  & $\tilde{\mathcal{O}}(n\exp{(-T/n^2.38)})$ & $\times$ & $\times$ \\ \midrule
    Fictitious Play~\cite{leslie2006generalised}  & $\tilde{\mathcal{O}}(T^{-1/(n+m-2)})$  & $\times$ & \checkmark  \\ \midrule
    Double Oracle~\cite{mcmahan2003planning}  & $\tilde{\mathcal{O}}(n\exp{(-T/n^3.88)})$ & $\times$ & \checkmark \\ \midrule
    Multipli. Weight Update\cite{freund1999adaptive}  & $\mathcal{O}(\sqrt{\log{n}/T})$  & $\times$ & \checkmark \\ \midrule
    Policy Response Oracles~\cite{lanctot2017unified}  & $\times$  & $\times$ & \checkmark \\ \midrule
    Online Double Oracle~\cite{le2021online} & $\mathcal{O}(\sqrt{k\log{k}/T})$ & $\times$ & \checkmark  \\ \midrule
    \textbf{\framework{}} & \pmb{$\mathcal{O}(\sqrt{\log{[(k^2+k)/2]}/T})$}  & \textbf{\checkmark} & \textbf{\checkmark} \\ \bottomrule
\end{tabular}
\end{center}


\end{document}